\documentclass[a4paper,onecolumn,11pt,accepted=2025-05-21]{quantumarticle}
\pdfoutput=1
\usepackage[utf8]{inputenc}
\usepackage[english]{babel}
\usepackage[T1]{fontenc}
\usepackage{mathtools}
\usepackage{amssymb}
\usepackage{amsmath}
\usepackage{amsthm}
\usepackage{physics}
\usepackage{tikz}
\usepackage{lipsum}
\usepackage[numbers]{natbib}
\bibliographystyle{unsrtnat}
\usepackage{enumitem}

\usepackage{thmtools}
\usepackage{thm-restate}
\declaretheorem[name=Theorem]{theorem}
\declaretheorem[name=Lemma]{lemma}
\declaretheorem[name=Corollary]{corollary}
\declaretheorem[name=Definition]{definition}
\declaretheorem[name=Proposition]{proposition}

\declaretheorem[name=Fact]{fact}
\declaretheorem[name=Remark]{remark}

\newcommand{\mO}{\mathcal{O}}

\newcommand{\Pclass}{\mathsf{P}}
\newcommand{\NP}{\mathsf{NP}}
\newcommand{\PSPACE}{\mathsf{PSPACE}}
\newcommand{\NEXP}{\mathsf{NEXP}}
\newcommand{\BQP}{\mathsf{BQP}}
\newcommand{\QMA}{\mathsf{QMA}}
\newcommand{\QCMA}{\mathsf{QCMA}}
\newcommand{\QCPCP}{\mathsf{QCPCP}}
\newcommand{\QPCP}{\mathsf{QPCP}}

\newcommand{\poly}{\mathrm{poly}}
\newcommand{\polylog}{\mathrm{polylog}}
\newcommand{\CLDM}{\textup{CLDM}}

\newenvironment{customthm}[1]
  {\innercustomthm}
  {\endinnercustomthm}

\newenvironment{customlem}[1]
  {\innercustomlem}
  {\endinnercustomlem}

\newtheorem{subdefinition}{Definition}[definition]

\usepackage[pagebackref]{hyperref}
\usepackage[nameinlink,capitalize]{cleveref}

\crefname{claim}{claim}{claims}
\crefname{claim}{Claim}{Claims}

\newcounter{protocolcounter}
\renewcommand{\theprotocolcounter}{\arabic{protocolcounter}}


\crefname{protocolcounter}{Protocol}{Protocols}
\Crefname{protocolcounter}{Protocol}{Protocols}

\crefname{custalgocounter}{Algorithm}{Algorithms}
\Crefname{custalgocounter}{Algorithm}{Algorithms}

\usepackage{mdframed}

\newenvironment{protocol}[1][]
{%
  \noindent \begin{minipage}{\textwidth}
    \begin{mdframed}[
      linewidth=0.8pt,
      roundcorner=5pt,
      backgroundcolor=white!,
      innertopmargin=10pt,
      innerbottommargin=10pt,
      innerleftmargin=10pt,
      innerrightmargin=10pt,
      skipabove=\topsep,
      skipbelow=\topsep,
    ]
      \refstepcounter{protocolcounter}
      \begin{center}
        \textbf{Protocol \theprotocolcounter:} #1
        \linebreak
      \end{center} 
}%
{%
    \end{mdframed}
  \end{minipage}
}

\newcounter{custalgocounter}
\renewcommand{\thecustalgocounter}{\arabic{custalgocounter}}

\usepackage{mdframed}

\newenvironment{custalgo}[1][]
{%
  \noindent \begin{minipage}{\textwidth}
    \begin{mdframed}[
      linewidth=0.8pt,
      roundcorner=5pt,
      backgroundcolor=white!,
      innertopmargin=10pt,
      innerbottommargin=10pt,
      innerleftmargin=10pt,
      innerrightmargin=10pt,
      skipabove=\topsep,
      skipbelow=\topsep,
    ]
      \refstepcounter{custalgocounter}
      \begin{center}
        \textbf{Algorithm \thecustalgocounter:} #1
        \linebreak
      \end{center} 
}%
{%
    \end{mdframed}
  \end{minipage}
}

\begin{document}

\title{Quantum PCPs: on Adaptivity, Multiple Provers and Reductions to Local Hamiltonians}

\author{Harry Buhrman}
\affiliation{QuSoft, University of Amsterdam \& Quantinuum, Partnership House, Carlisle Place, London}

\author{Jonas Helsen}
\affiliation{QuSoft \& CWI, Amsterdam, the Netherlands}
\orcid{0000-0001-7218-2585}

\author{Jordi Weggemans}
\affiliation{QuSoft \& CWI, Amsterdam, the Netherlands}
\orcid{0000-0002-8469-6900}
\email{jrw@cwi.nl}
\homepage{https://jordiweggemans.github.io}

\maketitle

\begin{abstract}
    \noindent We define a general formulation of quantum PCPs, which captures adaptivity and multiple unentangled provers, and give a detailed construction of the quantum reduction to a local Hamiltonian with a constant promise gap. This reduction turns out to be a versatile subroutine to prove properties of quantum PCPs, allowing us to show:
    \begin{enumerate}[label=(\roman*)]
        \item Non-adaptive quantum PCPs can simulate adaptive quantum PCPs when the number of proof queries is constant. In fact, this can even be shown to hold when the non-adaptive quantum PCP picks the proof indices \emph{uniformly} at random from a subset of all possible index combinations, answering an open question by Aharonov, Arad, Landau and Vazirani (STOC '09).
        \item If the $q$-local Hamiltonian problem with constant promise gap can be solved in $\QCMA$, then $\QPCP[q] \subseteq \QCMA$ for any $q \in \mO(1)$.
        \item If $\QMA[k]$ has a quantum PCP for any $2 \leq k \leq \poly(n)$, then $\QMA[2] = \QMA$, connecting two of the longest-standing open problems in quantum complexity theory.
    \end{enumerate}
    Moreover, we also show that there exist (quantum) oracles relative to which certain quantum PCP statements are false. Hence, any attempt to prove the quantum PCP conjecture requires, just as was the case for the classical PCP theorem, (quantumly) non-relativizing techniques. 
\end{abstract}

\newpage 
\setcounter{tocdepth}{2}
\tableofcontents
\newpage

\section{Introduction}
Arguably the most fundamental result in classical complexity theory is the Cook-Levin Theorem~\cite{cook1971the,levin1973universal}, which states that constraint satisfaction problems (CSPs) are $\NP$-complete. Kitaev showed that there exists a natural analogue in the quantum world: local Hamiltonian problems, which can be viewed as a quantum generalisation of classical constraint satisfaction problems, are complete for the class $\QMA$, which is a quantum generalisation of $\NP$~\cite{Kitaev2002ClassicalAQ}.\footnote{There are many possible quantum generalisations of $\NP$, see~\cite{gharibian2024guest} for a recent review.}

There are many variants of  $\NP$, which a priori seem either more powerful or more restrictive, but turn out to be equal to $\NP$. Examples of these are interactive proof systems of various flavours with a deterministic verifier, and perhaps most surprisingly and importantly, probabilistically checkable proof systems (PCPs)~\cite{Arora1998proof}. A PCP has a polynomial-time probabilistic verifier with query access to a proof (provided by a prover). Usually, a PCP is specified by two parameters: the number of random coins the verifier is allowed to use (which also upper bounds the length of the proof as some exponential in the number of coins) and the number of queries it can make to the proof. The PCP theorem~\cite{Arora1998proof,Arora1998probabilistic}, which originated from a long line of research on the complexity of interactive proof systems, states that all problems in $\NP$ can be decided, with a constant probability of error, by only using a logarithmic number of coin flips and by querying a constant number of bits of the proof. Equivalently, it implies that it is $\NP$-hard to decide whether an instance of a CSP is either completely satisfiable or no more than a constant fraction of its constraints can be satisfied. This is usually referred to as the \emph{hardness of approximation} formulation of the PCP theorem. It is also possible to prove the PCP theorem by reducing a CSP \textit{directly} to another CSP which has the above property. This transformation, due to Dinur~\cite{Dinur2007thepcp}, is usually referred to as \emph{gap amplification}, referring to the increase in the difference (the gap) in the fraction of constraints which can be satisfied in both the {\sc Yes}- and {\sc No}-instances.

Naturally, quantum complexity theorists have proposed proof-checking and hardness of approximation versions of a quantum PCP conjecture. The hardness of approximation formulation states that the local Hamiltonian problem with constant promise gap, relative to the operator norm of the Hamiltonian, is $\QMA$-hard. This formulation of the quantum PCP conjecture has been the predominant focus of the quantum PCP literature, and progress has been made in giving evidence both in favour and against the conjecture. Amongst the positive are the NLTS theorem and its cousins, excluding a large set of potential $\NP$-witnesses~\cite{anshu2023nlts,coble2023local,anschuetz2023combinatorial,coble2023hamiltonians,herasymenko2023fermionic}. Evidence against (assuming $\NP \neq \QMA$) are results showing that under many extra constraints one poses on the local Hamiltonian problem (constraining e.g.~the interaction graph or ground space structure) the problem is classically solvable when the relative promise gap is constant, whilst the problem remains (quantumly) hard when the gap is inverse polynomial~\cite{bansal2007classical,brandao2013product,Aharonov2019stoquastic,gharibian2021dequantizing,Cade2022Complexity,cade2023improved,weggemans2023guidable}.

The proof-checking formulation of the quantum PCP conjecture, which states that one can solve any promise problem in $\QMA$ by using a quantum verifier which only accesses a constant number of qubits from a quantum proof, has received considerably less attention. The reason for this is that it is not hard to show that both conjectures are equivalent under \textit{quantum} reductions. This was already pointed out in the first work which proposed a quantum PCP formulation~\cite{Aharonov2008The}.\footnote{Whilst this is widely known in the community, this reduction has (as far as the authors are aware) never been written down in full detail except in the works of~\cite{Grilo2018thesis} and~\cite{herasymenko2023fermionic}, which both consider quantum PCPs that are not (or at least shown to be) fully general. In both of these works, a quantum PCP is defined using a single POVM, where each POVM element acts on an ancilla register and a small part of the provided quantum proof. The verifier selects the POVM element by sampling from a classical probability distribution (in~\cite{Grilo2018thesis} this is simply the uniform distribution).} Perhaps that is why, even after more than two decades since the question of the existence of a quantum PCP was first posited~\cite{aharonov2002quantum}, many basic questions regarding the proof-checking formulation have not been addressed. For example, as already raised in~\cite{Aharonov2008The}, how robust are definitions of quantum PCPs under subtle changes, e.g.~the choice of picking the distribution over which the proof qubits are selected? Are adaptive queries to the proof more powerful than non-adaptive in the constant query setting or are they the same as is the case for the classical setting? Are there non-equivalent variations of quantum PCPs, similar to the many natural variations of $\QMA$? 

Motivated by these questions, this work is concerned with studying the properties of quantum PCPs through proof verification. Studying the connection between quantum PCPs and local Hamiltonians turns out to be a powerful tool for showing basic properties of quantum PCPs.

\subsection{Quantum PCPs are constant promise gap local Hamiltonians}
\label{ssec:QPCPs_are_intro}
We give a detailed definition of a quantum PCP verifier, which allows for adaptive queries as well as quantum proofs consisting of multiple unentangled quantum states given by different unentangled provers. Informally (see~\cref{def:adap_QPCP} for the full definition) the verifier makes $q$ queries to a quantum proof in the following way:
\begin{itemize}
    \item The verifier takes as an input a string $x$ of size $n$ (which will be hardcoded into the circuits), has a polynomial-sized ancilla register (which we will call the workspace) and a proof register containing $k$ unentangled quantum proofs of some specified size depending on $n$. Generally, $k=1$ unless specified otherwise.
    \item The verifier is going to do the following $q$ times. At stage $t \in [q]$, suppose the quantum PCP has decided to act on proof qubits from the set $I = \{ i_1,\dots,i_{t-1} \}$. The quantum PCP then applies a polynomial-time quantum circuit to the workspace and the proof qubits with indices $i_1,\dots,i_{t-1}$ followed by a measurement of some designated qubits to determine which next index $i_{t}$ is going to be added to the set $I$.
    \item Finally, another polynomial-time circuit is applied to the workspace and the proof qubits with indices $i_1,\dots,i_q$ and a single designated output qubit is measured in the computational basis. The quantum PCP accepts if and only if the outcome is $\ket{1}$. 
\end{itemize}
Note that this quantum PCP is adaptive; similarly, a non-adaptive quantum PCP can be defined using only a single measurement to decide the indices $i_1, \dots, i_q$. One can then construct complexity classes based on this verifier in a general setting, where we also include the possibility of multiple provers. 

Our first goal is to show that there exists a local Hamiltonian with PSD terms whose expectation value has a one-to-one correspondence with the acceptance probabilities of the quantum PCP given a quantum proof. This involves some subtleties compared to the classical case, as each intermediate measurement to determine the next index can alter the \emph{entire} quantum state of the quantum verifier due to entanglement across different registers. Nevertheless, we prove the following result.

\begin{customlem}{1.1 (Informal)}[from~\cref{lem:H_from_QPCP}] Let $x$ be an input and $V_x$ a $q$-query quantum PCP verifier with $x$ hardcoded into it. Then there exists a $q$-local Hamiltonian $H_x$ such that
\begin{align*}
    \Pr[V_x \text{ accepts } \xi] = 1 - \tr[H_x \xi],
\end{align*}
where $H_x = \sum_{i \in [m]} H_{x,i}$ with each $H_{x,i}$ PSD and $\norm{H_x} \leq 1$.
\label{lem:1.1_inf}
\end{customlem}
The core idea is that instead of focusing on conditional probabilities arising from specific query paths, we consider the overall probability of following a particular adaptive query path \emph{and} accepting. For a fixed input, we show that all input, ancilla, and circuit information can be directly encoded into a single POVM element whose expectation value captures this probability. Then, by linearity of expectation, we simply add all POVM elements together to form our final Hamiltonian term.

However, the quantum PCP conjecture is typically framed differently from the form of the Hamiltonian in~\hyperref[lem:1.1_inf]{Lemma~{1.1}}. First, it is often assumed that the interaction graph corresponding to the qubits on which the local terms act has constant degree.\footnote{The interaction degree here is defined as the maximum number of terms that act simultaneously on a single qubit.} However, this so-called \emph{interaction degree} depends on the support of the distribution over which the indices are selected, which could, in principle, include all possible choices of indices. As a result, the interaction graph can have degree $\Omega(n)$. Second, one assumes that $H$ is of the form $H = \sum_{i \in [m]} H_i$ with $0 \preceq H_i \preceq 1$ and that $\lambda_\textup{min}(H) \leq \alpha m$ or $\lambda_\textup{min}(H) \geq \beta m$. These Hamiltonians can be easily rewritten in the same form as $H_x$ by letting $H' = \sum_{i \in [m]} H'_i$ with $H'_i := H_i/m$, such that $\lambda_\textup{min}(H') \leq \alpha $ or $\lambda_\textup{min}(H') \geq \beta$. However, this formulation implicitly assumes that in the latter case, at least a constant fraction of the terms contribute on the order of $\sim \frac{1}{m}$ to the ground state energy, meaning that the energy is relatively spread out over all terms. If the Hamiltonian's interaction degree were in fact constant, then such energy spreading would indeed hold in the large energy case. However, for our definition of quantum PCPs, this does not necessarily have to be the case. This difference is subtle but important: one can no longer pick a constant number of terms uniformly at random and measure their energy to solve the corresponding local Hamiltonian problem, as only a small $o(1)$ fraction of the terms might have large energy.  Nevertheless, we show that there exists a transformation from any such Hamiltonian to another one which does satisfy this property, at the cost of increasing the locality by a factor of two and decreasing the promise gap exponentially in the locality, which is still constant if the locality is constant (\cref{lem:H_smoothing}).\\

Now we have shown that the probability that a $\QPCP$-verifier accepts a certain proof is equivalent to the expectation value of some Hamiltonian, we still need to show that the descriptions of the local terms can be obtained efficiently. For this, it is well-known that one can adopt a \textit{quantum reduction}~\cite{Aharonov2008The}. We show that a quantum reduction also exists for our general formulation of quantum PCPs. Moreover, by a simple trick, we find that we can round the Hamiltonian in such a way that every time the reduction succeeds (which can be done with a success probability exponentially close to one) it produces \textit{exactly} the same Hamiltonian.
\begin{customthm}{1.1 (Informal)}[from~\cref{thm:red} and~\cref{cor:fixed_H}] Let $x$ be an input. There exists a quantum reduction from a $q$-query quantum PCP with circuit $V_x$ to a $q$-local Hamiltonian $\hat{H}_x$ such that
\begin{align*}
    \abs{\Pr[V_x \text{ accepts } \xi] - \left(1 - \tr[\hat{H}_x \xi] \right)}\leq \epsilon,
\end{align*}
which succeeds with probability $1-\delta$ and runs in time $\poly(n,1/\epsilon,\log(1/\delta))$. Moreover, this reduction can be made such that it produces the same Hamiltonian every time it succeeds.
\label{thm:1.1_inf}
\end{customthm}
\noindent The specificity of the reduction (i.e.~it always yields the same Hamiltonian and the acceptance probability is directly encoded in the energy of the Hamiltonian) makes it a useful tool in deriving a variety of statements about quantum PCPs, since quantum PCPs themselves can perform the reduction. 

\subsection{Results from applying the reduction}
All applications are derived by adopting variations of the following general strategy:
\begin{enumerate}
    \item The reduction of a quantum PCP to a local Hamiltonian is used as a pre-processing step of some larger protocol, rounding to a fixed Hamiltonian with high success probability. 
    \item One uses tricks of Hamiltonian complexity to manipulate and solve the corresponding local Hamiltonian problems.
\end{enumerate}
It is important to emphasise that by this strategy, the reductions we propose are simply \textit{classical} polynomial-time reductions, even though they involve quantum reductions. The reason for this is that the quantum reductions are absorbed into the quantum protocols as pre-processing steps. A disadvantage of this strategy is that it in general fails to preserve perfect completeness.

\subsubsection{Reduction to average particle energy formulation}
As a first application, we consider a specific formulation of the quantum PCP conjecture in terms of an error constant relative to the number of sites (i.e., qubits or qudits) rather than the total number of terms. This formulation has appeared in several works (see, for example, the recent preprints~\cite{natarajan2024status} and~\cite{anshu2024uniqueqma}), but to our knowledge, it has never been shown to be complete for the class $\QPCP[\mO(1)]$. Using our reductions above, combined with standard techniques from tail bounds for sums of random matrices, we can show that even under this restriction, the local Hamiltonian with a constant promise gap remains complete for $\QPCP[\mO(1)]$.

\begin{customthm}{1.2 (Informal)}[from~\cref{cor:density_form}] There are constants $c < d$ such that deciding if the ground energy of an $n$-qubit local Hamiltonian $H$ with $m = \Theta(n)$ terms is (yes case) $\leq c $ or (no case) $\geq d $ is $\QPCP[\mO(1)]$-complete under quantum polynomial-time reductions.
\end{customthm}

\subsubsection{Upper bounds on quantum PCPs}
By a result of~\cite{weggemans2023guidable}, it seems unlikely that there exists a \emph{classical} reduction that satisfies the same properties as~\hyperref[thm:1.1_inf]{Theorem~{1.1}}, as that would show that $\BQP \subseteq \QCPCP \subseteq \NP$.\footnote{Here $\QCPCP$ is just as $\QPCP$ but with the promises for classical proofs instead of quantum proofs, similar to what $\QCMA$ is to $\QMA$.} The other direction (going from a local Hamiltonian with constant to a $\QPCP$ verification circuit) is easy to show through Kitaev's original $\QMA$ verification circuit for the local Hamiltonian problem (see~\cref{prot:Kitaev_protocol}), once the Hamiltonian has been transformed in the required form (equally bounded operator norm on all local terms). Hence, if one shows that $\QPCP[q] \subseteq \mathcal{C}$ for some complexity class $\mathcal{C}$ that can perform polynomial-time classical reductions (so any class $\mathcal{C}\supseteq \Pclass$ would suffice), then the local Hamiltonian problem with constant promise gap is in $\mathcal{C}$ as well. However, since the reduction from $\QPCP$ to a local Hamiltonian is a quantum reduction, such a statement does not hold in the other direction. Weaker conclusions can be drawn however. In particular we can show that an upper bound on the local Hamiltonian problem with a constant promise gap implies an upper bound on quantum PCPs with the same locality.

\begin{customthm}{1.3 (Informal)}[from~\cref{thm:QCMA}] If the $q$-local Hamiltonian problem with constant promise gap is in $\QCMA$, then $\QPCP[q] \subseteq \QCMA$.
\end{customthm}
This entails that, assuming $\QMA \neq \QCMA$, disproving $\QMA$-hardness for either the proof verification or Hamiltonian formulation of the quantum PCP conjecture (by placing it in $\QCMA$) does indeed disprove the other (similar to how proving $\QMA$ hardness of one implies $\QMA$-hardness of the other.)

\subsubsection{Adaptive and non-adaptive quantum PCPs}
\label{ssec:adaptive_vs_non_adaptive_intro}
\noindent It is a well-known fact that non-adaptive PCPs can simulate adaptive PCPs in the classical setting when the number of proof queries is only allowed to be constant. This can easily be shown by simulating the adaptive PCP verifier over all possible local proof settings for a fixed setting of the randomness, after which the actual proof can be read at the used locations to check which setting was correct. However, such tricks seem to have no quantum analogue, as it is generally impossible to fix the inherent randomness (or ``quantumness'') in a quantum algorithm~\cite{aaronson2021acrobatics} (see also the discussion in~\cref{ssec:QPCPs_are_intro}). However, we can show that a quantum analogue of this result still holds.

\begin{customthm}{1.4 (Informal)}[from~\cref{thm:A_vs_NA}] Non-adaptive quantum PCPs can simulate adaptive quantum PCPs when the number of proof queries is constant.
\end{customthm}

In fact, we are able to prove something stronger: we show that any constant query PCP can be simulated by a quantum PCP that picks the indices of the proof bits uniformly at random from a subset of all possible index choices, resolves an open question of~\cite{Aharonov2008The} (see~\cref{rem:uniform_QPCPs} in the main text).\footnote{This also shows that the quantum PCP formulation as in~\cite{Grilo2018thesis} is in fact fully general when one considers quantum PCPs which make a constant number of proof queries.} We prove these results by showing that there exists a non-adaptive quantum PCP verifier for the local Hamiltonian induced by the adaptive quantum PCP, which only has its promise gap exponentially smaller in the locality. Weak error reduction can then be used to boost the promise gap back to its original value.

\subsubsection{Quantum PCPs for \texorpdfstring{$\QMA[k]$}{QMA(k)}}
\noindent Finally, we consider a quantum PCP in the $\QMA[k]$ setting: the complexity class $\QMA[k]$ is a generalisation of $\QMA$ where there are multiple uninteracting provers, which are guaranteed to be unentangled with each other. It is known that $\QMA[k] = \QMA[2]$ by the result of Harrow and Montanaro~\cite{harrow2013testing}, but is generally believed that $\QMA \neq \QMA[2]$ since there are problems known to be in $\QMA[2]$ but not known to be in $\QMA$~\cite{liu2007quantum,aaronson2008power,beigi2008np,blier2009all}. Moreover, the best upper bound we have on $\QMA[2]$ is that it is contained in $\NEXP$, which follows by just guessing exponential-size classical descriptions of the two quantum proofs. The reason for this is that the maximum acceptance probability of a $\QMA[2]$ verifier is with respect to separable states, which means that the corresponding maximisation problem is no longer convex. Another surprising fact is that the separable local Hamiltonian problem is $\QMA$-complete, and we only have that a separable Hamiltonian problem becomes $\QMA[2]$-complete when one considers \textit{sparse} Hamiltonians, where the individual terms can only have a polynomial number of non-zero entries but do not have to be local~\cite{chailloux2012complexity}. Since our quantum reduction from a quantum PCP always produces a local Hamiltonian irrespective of the number of provers, a generalisation of the protocol of~\cite{chailloux2012complexity} to the $k$-separable local Hamiltonian setting allows us to show the following.

\begin{customthm}{1.5 (Informal)}[from~\cref{thm:qma2}] If $\QMA{[k]}$ has a $k$-prover quantum PCP for any $2 \leq k \leq \poly(n)$, then we have that $\QMA{[2]} = \QMA$.
\end{customthm}
\noindent This shows a connection between two of the biggest open problems in quantum complexity. In particular, if one believes that a quantum PCP conjecture for $\QMA$ should hold and that $\QMA[2] \neq \QMA$, then any proof of the quantum PCP conjecture should not translate to the multiple unentangled prover setting.

As a bonus, we improve the parameter range of $\QMA$-containment of the consistency of local density matrices ($\CLDM$) problem (\cite{liu2007consistency,broadbent2022qma}), by giving a new protocol.

\subsection{Proofs of the quantum PCP theorem do not relativize}
In the final section, we will show that given Aaronson and Kuperberg's quantum oracle~\cite{aaronson2007quantum} (under which $\QCMA \neq \QMA$) we have that the quantum PCP conjecture is false. Their oracle separation is fundamentally based on a lower bound combined with a counting argument, which exploits the fact that there are doubly exponential quantum states (with small pairwise fidelities) but only an exponential number of classical proofs. The reason this lower bound cannot be applied directly to $\QPCP$ comes from the fact that the total number of quantum proofs that can be given is the same as for $\QMA$. However, from the verifiers perspective, the total number of proofs he observes is limited by a set of local density matrices and indices indicating from what part of the global state this density matrix comes from. Using this observation, we can define an artificial variation of $\QPCP$, denoted $\QPCP_\epsilon$, which can be viewed as a ``proof-discretized''-version of $\QPCP$. Informally, this class can be viewed as a variant to $\QPCP$ where after the verifier decides which parts of the proof are going to be accessed, some ``magical operation'' takes this part of the proof and projects it onto the density matrix closest to it from some fixed finite set. We can show that with respect to all oracles, this new class contains $\QPCP$, so any oracle separation between $\QMA$ and $\QPCP_\epsilon$ also separates $\QPCP$ and $\QMA$. 

\begin{customthm}{1.6 (Informal)}[from~\cref{thm:q_oracle_sep} and~\cref{cor:classical_oracle}] There exists a quantum (classical) oracle relative to which the polylog (constant) quantum PCP conjecture is false.
\end{customthm}
Since for any oracle~$X$ encoding a $\PSPACE$-complete language we have $\QPCP[q]^X = \QMA^X$, and since the inclusions $\QPCP[q] \subseteq \QMA \subseteq \PSPACE$ all relativize with respect to classical oracles and $\PSPACE^\PSPACE = \PSPACE$, it follows that this result implies proving the quantum PCP conjecture by showing $\QPCP[q] = \QMA$ for some constant $q \in \mO(1)$ requires non-relativizing techniques---just as was the case for the classical PCP theorem~\cite{fortnow1994role}.

Perhaps the most important takeaway from this result is that it should apply to any quantum oracle separation involving $\QMA$ which exploits the fact that there are doubly exponentially many proofs in $\QMA$. Therefore, all oracle separations between $\QMA$ and $\QCMA$ based on such an argument immediately yield an oracle relative to which the quantum PCP conjecture is false.

\subsection{Discussion and open problems}
We have studied the quantum PCP conjecture in the proof-checking formulation (as opposed to the more popular local Hamiltonian formulation). Ironically, many of our results follow from leveraging the reduction to the local Hamiltonian problem. Our results take the form of both novel statements, and formal proofs of what we think of as ``folklore knowledge'', i.e., results we suspect are known to be true but for which we could not find a proof in the literature. Given the renewed interest in the quantum PCP conjecture we believe this was an effort worth undertaking. We conclude by listing several questions on quantum PCPs which we have not yet resolved but believe are interesting avenues for future work.

\paragraph{Locality reductions}
We do not know whether~\cref{thm:QCMA} already holds when only the $2$-local Hamiltonian problem is in $\QCMA$. In~\cite{brandao2013product}, it is claimed that the $2$-local Hamiltonian problem is complete for $\QPCP$ (which would imply the above statement), but it is not clear to us that this is actually the case. The reason for this is that the gadget constructions of~\cite{bravyi2008quantum} mentioned in~\cite{brandao2013product}, which transform a $q$-local Hamiltonian into a $2$-local Hamiltonian whilst preserving the relative promise gap, only work when every qubit is involved in only a \emph{constant} number of terms. In fact, the existence of a transformation which maps a $q$-local Hamiltonian with an interaction degree $\Omega(n)$ to a $2$-local Hamiltonian with an interaction degree $\Omega(n)$ would directly imply that the local Hamiltonian problem with constant promise gap is in $\NP$, disproving the quantum PCP conjecture for any constant $q$ directly. The reason for this is as follows\footnote{This argument is based on a similar---unpublished---argument due to Anshu and Nirkhe, which can be found at (at time of writing) \url{https://anuraganshu.seas.harvard.edu/files/anshu/files/bh_4local.pdf}}: take any $q$-local Hamiltonian $H$, $q \geq 2$ constant,  with some degree $d$ interaction graph where $d \geq 2$,  $\norm{H} = 1$ (which can be assumed w.l.o.g.), and completeness and soundness parameters $b$ and $a$, respectively, satisfying $(b-a)/\norm{H} = \Omega(1)$. Then we have the Hamiltonian $H' = H^2$ is $2q$-local, has operator norm still equal to $1$, has an interaction graph of degree $n$, and completeness and soundness parameters $b',a'$ with $b'-a'/\norm{H'} \geq (b-a)^2 = \Omega(1)$. If there would exist a transformation from $H'$ to $H''$ with $H''$ $2$-local, interaction degree $\mO(n)$ and with completeness $a''$ and soundness $b''$ satisfying $(b''-a'')/\norm{H''} = \Omega(1)$, then by Brandao-Harrow~\cite[Cor. $5$]{brandao2013product} one could decide the correct energy of $H$.

\paragraph{Quantum PCPs with perfect completeness}
A major downside of our technique---leveraging the $\QPCP$ to local Hamiltonian quantum reduction---is that it fails to preserve perfect completeness, which is due to the Hamiltonian being learned only up to some small error. It is an open question whether other techniques can be used to obtain similar results in the perfect completeness setting, for example whether non-adaptive quantum PCPs can simulate adaptive ones when the number of quantum proof queries is only allowed to be constant. 

\paragraph{Strong error reduction} It is easy to show that non-adaptive quantum PCPs with near-perfect completeness allow, just like $\QMA$~\cite{Mariott2005quantum}, for strong error reduction (\cref{app:strong_err_red}). However, we do not know whether this also holds in the general case. The idea of taking polynomials of the Hamiltonian to manipulate its spectrum seems not to be compatible with Kitaev's energy estimation protocol, as this in general introduces coefficients that can be negative and have large absolute values, which is incompatible with Kitaev's energy estimation protocol (\cref{subsec:KEEP}) which requires each term to have to be PSD and have operator norm at most one. 

\paragraph{Acknowledgements}
JW was supported by the Dutch Ministry of Economic Affairs and Climate Policy (EZK), as part of the Quantum Delta NL programme. JH acknowledges support from the Quantum Software Consortium (NWO Zwaartekracht 024.003.037) and a Veni grant (NWO Veni 222.331).

\section{Preliminaries}
\subsection{Notation}
For a Hilbert space $\mathcal{H}$ of dimension $d$, we denote $\mathcal{D}(\mathcal{H}) =\{\rho \in \text{PSD}(\mathcal{H}), \tr[\rho]=1\}$, where $\text{PSD}(\mathcal{H})$ is the set of all positive semidefinite matrices in $\mathcal{H}$, for the set of all density matrices in $\mathcal{H}$. For the set of all $d$-dimensional pure states, we write $\mathcal{P}(\mathcal{H}) =\{\rho \in \text{PSD}(\mathcal{H}), \tr[\rho^2]=1 \}$ (normally defined as $\mathbb{CP}^{d-1}$, i.e.~complex projective space with dimension $d-1$). For a tuple consisting of $q$ distinct elements, i.e., $(i_1,i_2,\dots,i_q)$, we write $\{(i_1,i_2,\dots,i_q)\}!$ for the set of all permutations of the tuple $(i_1,i_2,\dots,i_q)$. We denote $\binom{[n]}{k}$ for the set of all $k$-element subsets of $[n]$ (so unordered and without repetitions).  When the base is not explicitly stated, $\log$ always refers to the base-$2$ logarithm.

\subsection{Complexity theory}
All verifiers used to define quantum complexity classes in this work will be defined in terms of the quantum circuit model. We say that a quantum verifier $V_n$, taking an input $x \in {0,1}^n$ for some integer $n$, is polynomial-time if it uses a workspace of $\poly(n)$ ancilla qubits initialized in $\ket{0}$ (tensored with the input in $\ket{x}$) and at most $\poly(n)$ elementary quantum gates and elementary (intermediate) POVMs (or PVMs). To efficiently generate descriptions of its components depending only on the input size $n$, we need the notion of \emph{$\Pclass$-uniformity}.

\begin{definition} A set of verifiers $\{V_n : n \in \mathbb{N}\}$ is polynomial-time uniform (abbreviated as $\Pclass$-uniform) if there exists a polynomial-time deterministic Turing machine that, on input $1^n$, outputs a classical description of $V_n$. 
\label{def:Puniformity} 
\end{definition}

We start by recalling the basic definitions from complexity theory classes used in this work.
\begin{definition}[$\QMA{[k]}$] Let $n \in \mathbb{N}$ be the input size. A promise problem $A = (A_\text{yes},A_\text{no})$ is in $\QMA[k,c,s]$ if and only if there exists a $\Pclass$-uniform family of polynomial-time quantum verifiers $\{V_n\}$ and a polynomial $p$, where $V_n$ takes as input a string $x\in \{0,1\}^n$ and a $k p(n)$-qubit witness quantum state $\ket{\psi}$ and decides on acceptance or rejection of $x$ such that
\begin{itemize}
    \item if $x \in A_\textup{\sc yes} $ then there exist $p(n)$-qubit witness states $\ket{\psi_j} \in \left(\mathbb{C}^2\right)^{\otimes p(n)}$, $j\in [k]$, such that $V_n$ accepts $(x,\otimes_{j \in [k]} \ket{\psi_j})$ with probability $\geq c$,
    \item if $x \in A_\textup{\sc no}$ then for all $\ket{\psi_j} \in \left(\mathbb{C}^2\right)^{\otimes p(n)}$, $j\in [k]$, we have that $V_n$ accepts $(x,\otimes_{j \in [k]} \ket{\psi_j})$ with probability $\leq s$,
\end{itemize}
where $c-s = \Omega(1/\poly(n))$. If $c=2/3$ and $s=1/3$, we abbreviate to $\QMA[k]$, and if $k=1$, we have $\QMA = \QMA[1]$.\footnote{In the literature, $\QMA(k)$ is commonly used, but we chose to use ``$[.]$''-brackets, as this is customary in complexity theory for classes that depend on some parameters.}
\end{definition}
\begin{subdefinition}[$\QCMA$] This has the same definition as $\QMA$ but where the promises hold with respect to computational basis states $\ket{y}$, where $|y|=p(n)$. In this case we trivially have $\QCMA[k] =\QCMA$ for all $k \leq \poly(n)$.
\end{subdefinition}

By a result of Harrow and Montanaro, it is known that $\QMA$ with $2$ unentangled provers can simulate any polynomial number of unentangled provers.
\begin{lemma}[\cite{harrow2013testing}] For any $2 \leq k \leq \poly(n)$, we have 
\begin{align*}
    \QMA[k] = \QMA[2].
\end{align*}
\label{lem:QMAk}
\end{lemma}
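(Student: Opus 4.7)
The plan is to reduce a $\QMA(k)$ protocol to a $\QMA(2)$ protocol by having each of the two provers send a state that is supposedly the $k$-fold tensor product witness of the original protocol. The verifier then (i) checks that the provided states are indeed product across the $k$ subregisters, and (ii) runs the original $\QMA(k)$ verifier on one of them. The basic difficulty is that a dishonest prover is free to send a state entangled across the $k$ supposed factors, so one needs a robust test that a single pure state is close to a product across $k$ tensor factors.

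The key technical tool I would develop is the \emph{product test}. Given two states $\ket{\phi_1}, \ket{\phi_2} \in \mathcal{H}^{\otimes k}$ with $\mathcal{H} = (\mathbb{C}^2)^{\otimes p(n)}$, perform a SWAP test between the $i$-th factor of $\ket{\phi_1}$ and the $i$-th factor of $\ket{\phi_2}$ for every $i \in [k]$ in parallel, accepting iff every SWAP test accepts. If both provers sent the same product state, the test accepts with certainty. The central claim to establish is robust soundness: if $\ket{\phi}$ passes the product test against a fresh copy of itself with probability $1-\epsilon$, then $\ket{\phi}$ is $O(\sqrt{\epsilon})$-close in trace distance to a convex combination of product states in $\mathcal{H}^{\otimes k}$. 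I would prove this by expanding the acceptance probability in terms of the reduced density matrices on each factor, using $\Pr[\text{accept factor } i] = (1+\tr\rho_i^2)/2$ with $\rho_i$ the marginal on the $i$-th factor, and then relating large product-test acceptance to each $\rho_i$ being close to pure; the technical heart is then a bipartition-by-bipartition Schmidt argument (or the more combinatorial analysis via permutation operators on multiple copies that Harrow and Montanaro use) to chain these per-factor closeness statements into global closeness to a product distribution.

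With the product test in hand, the $\QMA(2)$ protocol writes itself: ask both provers for a state of the form $\ket{\psi_1}\otimes\cdots\otimes\ket{\psi_k}$; with probability $1/2$ run the product test on the two received states, and with probability $1/2$ discard the second state and run the original $\QMA(k)$ verifier on the first. Completeness is immediate since honest provers can supply the intended product witness, which passes the product test with probability one and is accepted by the original verifier with probability $c$. For soundness, if the combined verifier accepts with probability $1-\delta$ then the product test alone accepts with probability at least $1-2\delta$; applying the robust soundness above to the first prover's reduced state, the original $\QMA(k)$ verifier accepts it with probability at most $s + O(\sqrt{\delta})$. Choosing a sufficiently small starting soundness $s$ (via standard $\QMA(k)$ amplification) and averaging then restores a constant gap, which can be boosted back to $(2/3,1/3)$ by parallel repetition inside $\QMA(2)$.

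The main obstacle I anticipate is obtaining the $O(\sqrt{\epsilon})$ dependence in the robust soundness of the product test, as opposed to a weaker bound deteriorating with $k$; this quantitative improvement is precisely what allows the reduction to handle $k$ up to $\poly(n)$ rather than only constant $k$, and it is the crux of the Harrow--Montanaro argument. Everything else (the protocol design, completeness, and the outer amplification) is then routine given this estimate.
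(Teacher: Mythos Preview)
The paper does not prove this lemma at all; it is stated purely as a citation to Harrow and Montanaro~\cite{harrow2013testing} and used as a black box elsewhere. Your sketch is essentially the Harrow--Montanaro argument (product test on two unentangled copies, then run the original verifier), so there is nothing to compare against in the paper itself. One small correction: the robust soundness of the product test gives closeness of $\ket{\phi}$ to a single pure product state (overlap $1-O(\epsilon)$), not merely to a convex combination of product states; this stronger conclusion is what you actually need, since the $\QMA(k)$ soundness condition must be applied to a genuine $k$-fold product.
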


\begin{fact} For any $1 \leq k \leq \poly(n)$, the completeness and soundness parameters in $\QMA[k]$ and $\QCMA$ can be made exponentially close to $1$ and $0$, respectively, i.e.~$c=1-2^{-\mO(n)}$ and $s=2^{-\mO(n)}$.
\label{fact:err_red_QMA}
\end{fact}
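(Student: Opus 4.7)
The plan is to apply parallel repetition with a majority vote. Given a verifier $V$ with promise gap $c - s = \Omega(1/\poly(n))$, I would independently run $V$ some $t = \Theta(n)$ times and accept iff strictly more than half of the runs accept; Chernoff-type concentration is then expected to drive completeness to $1 - 2^{-\Omega(n)}$ and soundness to $2^{-\Omega(n)}$, provided the $t$ trials can be made (approximately) independent.

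For $\QCMA$ this goes through immediately. I would ask the prover for a classical witness of length $tp(n)$, split it into $t$ blocks of $p(n)$ bits each, and run $V$ once on each block. The runs are then genuine independent Bernoulli trials regardless of the prover's choice of witness (the classicality of the proof precludes any correlation between runs), and a standard Chernoff bound closes both the completeness and the soundness side.

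For $\QMA{(k)}$ the same structure is used, but the soundness analysis is the main obstacle. I would ask each of the $k$ provers for a register of $t$ copies of its $p(n)$-qubit witness, and run $V$ on the $i$-th sub-register from every prover. A cheating prover $j$ may send a state $\rho_j$ entangled across its $t$ sub-registers, correlating the $t$ outcomes and invalidating the naive Chernoff argument. The inter-prover unentanglement is nonetheless preserved: the joint witness is of the form $\rho_1 \otimes \cdots \otimes \rho_k$, which is separable across the prover cut. My plan is to invoke Lemma~\ref{lem:QMAk} to collapse to $k = 2$, and then to precede the repetition with a Harrow--Montanaro product test~\cite{harrow2013testing} on each prover's register to certify, at constant overhead in the promise gap, that each $\rho_j$ is close to a tensor power of a single-copy state. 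Once this is in place, the $t$ sub-register pairs behave as approximately independent product witnesses and Chernoff applies; the case $k = 1$ is already handled by the classical amplification result for $\QMA$.

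The hard part will be the error accounting: balancing the product-test tolerance, the resulting loss in the promise gap, and the Chernoff concentration rate so that the final soundness and completeness are still exponentially close to $0$ and $1$. As a fallback, the fact may simply be attributed to the existing amplification results for $\QMA{(2)}$ in the literature, since weak error reduction for $\QMA{(k)}$ is well-established folklore.
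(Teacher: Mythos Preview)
The paper states this as a fact without proof, deferring to the literature; your outline is the standard one and is correct in broad strokes.

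Two clarifications are worth making. First, for $k=1$ (and $\QCMA$) the entanglement across sub-registers is not actually an obstacle: since the single-round soundness bound holds for \emph{all} states, the conditional acceptance probability in each round given any previous history is still at most $s$, which yields stochastic domination by i.i.d.\ Bernoulli trials and hence Chernoff. The paper uses exactly this argument in the proof of Lemma~\ref{lemma:weak_err_red}. Second, for $k\geq 2$ the genuine obstruction is not merely that the outcomes are correlated, but that the verifier's first-round measurement on the first sub-registers can entangle the \emph{remaining} sub-registers across the prover cut (think entanglement swapping), so the single-round soundness promise---which is only guaranteed for states that are product across provers---no longer covers subsequent rounds. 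Your product-test route is precisely the fix from~\cite{harrow2013testing}; note only that the test certifies closeness to a product of $t$ possibly distinct states rather than to a tensor power, but the former is all the soundness argument requires.
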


The $q$-local $k$-separable local Hamiltonian problem generalises the $q$-local Hamiltonian problem (which is the case for $k=1$) by giving a promise on the energies of all low-energy of $k$-fold tensor products, where each quantum state in the tensor product lives on a pre-determined number of qubits. It is shown to be $\QMA$-complete in the case where $k=2$ in~\cite{chailloux2012complexity}. 

We will define local Hamiltonian problems such that all local terms are PSD, as this will turn out to be the relevant setting in the context of quantum PCPs. Generally, one can apply the transformation $H_i \mapsto (H_i + \mathbb{I})/2$ for each local term, which makes every local term PSD.

\begin{definition}[$k$-separable $q$-local Hamiltonian problem] $(k,q)$-$\mathsf{LH}[\delta]$\\
    \textbf{Input:} 
    A classical description of a collection of $q$-local PSD operators $\{H_i\}_{i \in [m]}$, which define an $n$-qubit $k$-local Hamiltonian $H = \sum_{i=1}^{m} H_i$ with $\norm{H} \leq 1$, and two parameters $a,b \in \mathbb{R}$ such that $b - a = \delta$. We call $\delta$ the promise gap. \\
\textbf{Promise:} We have that either one of the following two holds:
\begin{enumerate}[label=(\roman*)]
    \item There exists a quantum state  $\xi = \otimes_{j \in [k]} \xi_j $, such that $\tr[H \xi]\leq a$
    \item For all quantum states $\xi = \otimes_{j \in [k]} \xi_j $ we have that $\tr[H \xi] \geq b$.
\end{enumerate}  
\textbf{Output:} \begin{itemize}
    \item If case (i) holds, output {\sc yes}.
    \item If case (ii) holds,, output {\sc no}.
\end{itemize}
\label{def:LH_sep}
If $k=1$, the above problem reduces to the standard local Hamiltonian problem which we abbreviate as $q$-$\mathsf{LH}[\delta]$.
\end{definition}

\section{Quantum probabilistically checkable proofs}
\label{sec:QPCP_def}
The bulk of this section is taken up a definition of a quantum PCP. Our definition is more detailed than that given in \cite{aharonov2013guest}, and more general than the definitions in \cite{herasymenko2023fermionic,Grilo2018thesis}, in that we also include the ability to query proof qubits adaptively.  Moreover, we allow for the possibility of having multiple provers (i.e.~unentangled proofs). Once we have this definition in place we can define an associated complexity class $\QPCP$. For convenience, we also explicitly define a non-adaptive version. We finish up by arguing that weak error reduction (and limited strong error reduction) is possible in $\QPCP$.

\begin{definition}[quantum PCP verifier] Let $p_1,p_2,p_3 : \mathbb{N} \rightarrow \mathbb{N}$, and $n \in \mathbb{N}$ be the input size. A $(k,q,p_1,p_2,p_3)\text{-}\QPCP$-verifier $V$ consists of the following:
\begin{itemize}
    \item a $n$-qubit input register $A$, initialised in input $\ket{x}$, $x \in \{0,1\}^n$;
    \item a $p_1$-qubit ancilla register $B$, initialised in $\ket{0}^{\otimes p_1(n)}$;
    \item a $k p_2$-qubit proof register $C$, initialised in $\xi= \otimes_{j=1}^k \xi_j$ for some quantum witnesses $\xi_j \in \mathcal{D}\left(\left(\mathbb{C}^{2}\right)^{\otimes p_2(n)}\right)$ for all $j \in [k]$;
    \item a collection of PVMs $ \Pi^t = \{\Pi_{i}^t \}$,  $i \in [kp_2(n)]$, with $\Pi_{i}^t =\ketbra{i} \otimes \mathbb{I}$ for all  $t \in [q]$, $\Pi^{\textup{output}} = \{\Pi_{0}^{\textup{output}},\Pi_1^{\textup{output}}\}$, with $\Pi_{0}^{\textup{output}} = \ketbra{0} \otimes \mathbb{I}$ and $\Pi_{1}^{\textup{output}} = \ketbra{1} \otimes \mathbb{I}$;
    \item collection of circuits $V^{t'}$, $t' \in [q+1]$, where circuit $V^{t'}$ only acts on at most $t'$ qubits of the proof register and consist of at most $p_3$ gates from some universal gate set.
\end{itemize}
Let $I = \emptyset$ be the set of all proof indices to be accessed. The quantum PCP verifier $V$ acts as follows:
\begin{enumerate}
    \item For $t \in [q]$:  it applies the circuit $V^{t}$ to registers $A$, $B$ and qubits $I$ from $C$, performs the measurement $\Pi^{t}$ and adds outcome $i_t$ to the set $I$;
    \item It applies $V^{q+1}$ followed by a measurement of $\Pi^\textup{output}$ of the first qubit and returns ``accept'' if the outcome was $\ket{1}$, and ``reject'' if the outcome was $\ket{0}$. 
\end{enumerate} 
If $ p_1, p_2 $, and $ p_3 $ are all polynomially bounded functions, then we abbreviate to a $(k,q)$\nobreakdash-$\QPCP$ verifier, and to a $(q)$\nobreakdash-$\QPCP$ verifier if additionally $ k = 1 $. For the remainder of this work, we will assume that in Step 1 for any $t$ the probability of measuring any outcome $i_t$ for which there exists a $i_{t'} \in I$, $t' \in [t-1]$ such that $i_t = i_{t'}$, is zero.
\label{def:adap_QPCP}
\end{definition}

We make the final assumption in~\cref{def:adap_QPCP} because it simplifies our notation, and there would be no benefit in querying the same proof index multiple times. This assumption can easily be enforced in a $\QPCP$-verifier by adding a random sampling step whenever a duplicate index is observed.

\begin{remark}
We will often write $V_x$ to denote the verifier with the input $x$ hardcoded. Moreover, in the case of multiple provers, an index $i_t$ will denote a tuple $(j,k)$, where $j$ indicates the corresponding proof and $k$ indicates the index of the qubit in this proof.
\end{remark}

\begin{subdefinition}[Non-adaptive quantum PCP verifier] 
A non-adaptive $(k,q)$-$\QPCP_{\textup{NA}}$-verifier is just like the $(k,q)$-$\QPCP$ verifier but instead has a single PVM $\Pi = \{ \Pi_{i_1,\dots,i_q} \}$ with $\Pi_{i_1,\dots,i_q} = \ketbra{i_1,\dots,i_q} \otimes \mathbb{I}$, which determines all $q$ qubits to be accessed. If $k=1$, we simply refer to a $(q)$\nobreakdash-$\QPCP_{\textup{NA}}$ verifier.
\label{def:QPCP_NA}
\end{subdefinition}

Some additional explanations are in order. First, one might wonder why~{\cref{def:adap_QPCP}} is not defined entirely in terms of PVMs that absorb the circuits ${ V^{t'} }$. As we will see later, this split is necessary to ensure an efficient unitary decomposition whenever the PVMs are at most $\mO(\log)$-local. Throughout this work, we assume an ordering on the tuples $(j,l)$, where $(j,l)$ represents the $l$th qubit of the $j$th proof, and we use basis state notation, treating strings and their integer representations interchangeably. Since there are only $k p_2(n)$ possible settings, we need to allocate at most  
$
\lceil \log \left( k p_2(n) \right) \rceil = \mO(\log n)
$  
qubits to determine the next proof index to be queried when $k, p_2(n) \in \poly(n)$. This ensures that the PVMs remain $\mO(\log n)$-local. In the constant-query setting (which is our focus), this property still holds in the non-adaptive case, even when a single PVM represents the parallel application of all single-index PVMs. As a result, each PVM has at most polynomial circuit complexity, making them efficiently implementable.

We can now define the associated complexity classes. 
\begin{definition}[$\QPCP$] Let $n \in \mathbb{N}$ be the input size, and $x \in \{0,1\}^n$ be some input. A promise problem $A = (A_\textup{yes}, A_\textup{\sc no})$ belongs to $\QPCP[k,q,c,s]$ if and only if there exist polynomially bounded functions $p_1, p_2, p_3 : \mathbb{N} \to \mathbb{N}$ and a $\Pclass$-uniform family of $(k,q,p_1,p_2,p_3)$-$\QPCP$ verifiers $\{V_n\}$ such that
\begin{itemize}
\item If $x \in A_\textup{\sc yes}$, then there exist quantum states $ \xi_j \in \mathcal{D}\left(\left(\mathbb{C}^{2}\right)^{\otimes p_2(n)}\right)$, $j \in [k]$, such that $V_n$ accepts $(x, \otimes_{j=1}^k \xi_j)$ with probability at least $c$,
\item If $y \in A_\textup{\sc no}$, then for all quantum states $\xi_j \in \mathcal{D}\left(\left(\mathbb{C}^{2}\right)^{\otimes p_2(n)}\right)$, $j \in [k]$, we have that $V_n$ accepts $(x, \otimes_{j=1}^k \xi_j)$ with probability at most $s$.
\end{itemize}
If $c = 2/3$ and $s = 1/3$, we simply write $\QPCP[k,q]$, and use $\QPCP[q]$ if also $k = 1$.
\label{def:QPCP}
\end{definition}

\begin{subdefinition}[$\QPCP_{\textup{NA}}$] This follows the same definition as for $\QPCP$ but now with a non-adaptive $\QPCP$-verifier as per~\hyperref[def:QPCP_NA]{Definition~{3.1a}}.
\end{subdefinition}

\begin{figure}
    \centering
    \tikzset{every picture/.style={line width=0.75pt}} 

\begin{tikzpicture}[x=0.75pt,y=0.75pt,yscale=-0.85,xscale=0.85]

\draw  [fill={rgb, 255:red, 255; green, 255; blue, 255 }  ,fill opacity=1 ] (258.49,86.36) -- (419,86.36) -- (419,50.19) -- (258.49,50.19) -- cycle ;
\draw   (264.51,58.01) -- (284.55,58.01) -- (284.55,79.52) -- (264.51,79.52) -- cycle ;
\draw   (297.9,58.01) -- (317.94,58.01) -- (317.94,79.52) -- (297.9,79.52) -- cycle ;
\draw   (284.66,68.47) .. controls (285.31,66.93) and (285.94,65.45) .. (286.67,65.46) .. controls (287.39,65.46) and (288.01,66.94) .. (288.66,68.48) .. controls (289.3,70.03) and (289.92,71.51) .. (290.64,71.51) .. controls (291.37,71.51) and (292,70.04) .. (292.66,68.5) .. controls (293.31,66.96) and (293.94,65.49) .. (294.67,65.49) .. controls (295.39,65.49) and (296.01,66.97) .. (296.66,68.52) .. controls (296.99,69.31) and (297.31,70.08) .. (297.65,70.65) ;
\draw   (318.66,68.8) .. controls (319.31,67.26) and (319.94,65.79) .. (320.67,65.79) .. controls (321.39,65.79) and (322.01,67.27) .. (322.66,68.82) .. controls (323.3,70.37) and (323.92,71.84) .. (324.64,71.84) .. controls (325.37,71.85) and (326,70.38) .. (326.66,68.83) .. controls (327.31,67.29) and (327.94,65.82) .. (328.67,65.82) .. controls (329.39,65.82) and (330.01,67.3) .. (330.66,68.85) .. controls (330.99,69.64) and (331.31,70.42) .. (331.65,70.99) ;
\draw    (471.4,244.2) -- (573.5,244.09) ;
\draw    (389.8,209.5) -- (564.02,209.5) ;
\draw    (389.8,226.1) -- (442.6,226.1) ;
\draw    (389.8,234.7) -- (442.92,234.7) ;
\draw    (389.8,244.2) -- (443.4,244.2) ;
\draw    (488.5,144.99) -- (573.5,144.99) ;
\draw    (389.8,167.56) -- (573.31,167.56) ;
\draw    (389.8,174.41) -- (573.31,174.41) ;
\draw    (389.8,180.27) -- (572.88,180.27) ;
\draw    (472.57,226.1) -- (573.5,226.1) ;
\draw    (472.57,234.7) -- (573.5,234.7) ;
\draw    (389.8,253.5) -- (563.72,253.5) ;
\draw    (389.8,268.3) -- (573.31,268.3) ;
\draw    (389.8,286) -- (574.17,286) ;
\draw    (389.8,296.5) -- (563.72,296.5) ;
\draw    (389.8,305.9) -- (573.74,305.9) ;
\draw    (389.8,349) -- (574.6,349) ;
\draw    (33.5,268.29) -- (321,268.29) ;
\draw    (32.5,286) -- (319.6,285.76) -- (321,285.75) ;
\draw    (33.5,296.51) -- (321,296.51) ;
\draw    (33.5,306.28) -- (320.95,306.28) ;
\draw    (33.5,349) -- (321,349.29) ;
\draw    (33.98,244.19) -- (162.87,244.19) ;
\draw    (33.98,234.51) -- (161.7,234.51) ;
\draw    (33.5,253.5) -- (321,253.5) ;
\draw    (33.5,225.8) -- (163,225.8) ;
\draw  [fill={rgb, 255:red, 255; green, 255; blue, 255 }  ,fill opacity=1 ] (119.83,219.99) -- (156.97,219.99) -- (156.97,249.81) -- (119.83,249.81) -- cycle ;
\draw  [fill={rgb, 255:red, 255; green, 255; blue, 255 }  ,fill opacity=1 ] (42.92,218.3) -- (112.81,218.3) -- (112.81,362) -- (42.92,362) -- cycle ;
\draw  [fill={rgb, 255:red, 255; green, 255; blue, 255 }  ,fill opacity=1 ] (495.51,132.28) -- (565.4,132.28) -- (565.4,361.02) -- (495.51,361.02) -- cycle ;
\draw    (195.94,244) -- (321,244) ;
\draw    (195.47,234.75) -- (320.61,234.75) ;
\draw    (195,226.2) -- (321,226.2) ;
\draw    (236.46,209.51) -- (321,209.51) ;
\draw  [fill={rgb, 255:red, 255; green, 255; blue, 255 }  ,fill opacity=1 ] (243.31,192.89) -- (313.2,192.89) -- (313.2,362) -- (243.31,362) -- cycle ;
\draw  [fill={rgb, 255:red, 255; green, 255; blue, 255 }  ,fill opacity=1 ] (17.49,85.36) -- (178,85.36) -- (178,49.19) -- (17.49,49.19) -- cycle ;
\draw   (23.51,57.01) -- (43.55,57.01) -- (43.55,78.52) -- (23.51,78.52) -- cycle ;
\draw   (56.9,57.01) -- (76.94,57.01) -- (76.94,78.52) -- (56.9,78.52) -- cycle ;
\draw    (150.72,89.43) .. controls (160.74,109.49) and (197.12,170.83) .. (209.13,192.42) ;
\draw [shift={(210,194)}, rotate = 241.19] [color={rgb, 255:red, 0; green, 0; blue, 0 }  ][line width=0.75]    (10.93,-3.29) .. controls (6.95,-1.4) and (3.31,-0.3) .. (0,0) .. controls (3.31,0.3) and (6.95,1.4) .. (10.93,3.29)   ;
\draw    (408.67,89.17) .. controls (425.16,103.72) and (440.71,117.33) .. (457.14,130.91) ;
\draw [shift={(458.67,132.17)}, rotate = 219.47] [color={rgb, 255:red, 0; green, 0; blue, 0 }  ][line width=0.75]    (10.93,-3.29) .. controls (6.95,-1.4) and (3.31,-0.3) .. (0,0) .. controls (3.31,0.3) and (6.95,1.4) .. (10.93,3.29)   ;
\draw  [fill={rgb, 255:red, 255; green, 255; blue, 255 }  ,fill opacity=1 ] (397.06,218.8) -- (434.21,218.8) -- (434.21,248.61) -- (397.06,248.61) -- cycle ;
\draw  [fill={rgb, 255:red, 255; green, 255; blue, 255 }  ,fill opacity=1 ] (574.51,132.6) -- (639,132.6) -- (639,162.41) -- (574.51,162.41) -- cycle ;
\draw   (213.95,197.99) -- (233.99,197.99) -- (233.99,219.5) -- (213.95,219.5) -- cycle ;
\draw  [fill={rgb, 255:red, 155; green, 155; blue, 155 }  ,fill opacity=0.23 ][dash pattern={on 0.84pt off 2.51pt}] (133.9,58.01) -- (153.94,58.01) -- (153.94,79.52) -- (133.9,79.52) -- cycle ;
\draw   (121.48,70.53) .. controls (120.54,71.85) and (119.65,73.1) .. (120.02,73.72) .. controls (120.39,74.34) and (121.91,74.17) .. (123.51,73.98) .. controls (125.12,73.79) and (126.64,73.61) .. (127.01,74.23) .. controls (127.38,74.85) and (126.49,76.11) .. (125.55,77.42) .. controls (124.62,78.73) and (123.72,79.98) .. (124.09,80.61) .. controls (124.46,81.23) and (125.99,81.05) .. (127.59,80.86) .. controls (129.19,80.67) and (130.72,80.49) .. (131.08,81.12) .. controls (131.45,81.74) and (130.56,82.99) .. (129.63,84.3) .. controls (128.69,85.62) and (127.8,86.87) .. (128.17,87.49) .. controls (128.54,88.11) and (130.06,87.93) .. (131.66,87.75) .. controls (133.27,87.56) and (134.79,87.38) .. (135.16,88) .. controls (135.53,88.62) and (134.64,89.88) .. (133.7,91.19) .. controls (132.76,92.5) and (131.87,93.75) .. (132.24,94.37) .. controls (132.61,95) and (134.14,94.82) .. (135.74,94.63) .. controls (137.34,94.44) and (138.87,94.26) .. (139.23,94.88) .. controls (139.6,95.51) and (138.71,96.76) .. (137.78,98.07) .. controls (136.84,99.38) and (135.95,100.64) .. (136.32,101.26) .. controls (136.69,101.88) and (138.21,101.7) .. (139.81,101.51) .. controls (141.41,101.32) and (142.94,101.15) .. (143.31,101.77) .. controls (143.68,102.39) and (142.79,103.64) .. (141.85,104.96) .. controls (140.91,106.27) and (140.02,107.52) .. (140.39,108.14) .. controls (140.76,108.77) and (142.29,108.59) .. (143.89,108.4) .. controls (145.49,108.21) and (147.02,108.03) .. (147.38,108.65) .. controls (147.75,109.28) and (146.86,110.53) .. (145.93,111.84) .. controls (144.99,113.15) and (144.1,114.41) .. (144.47,115.03) .. controls (144.84,115.65) and (146.36,115.47) .. (147.96,115.28) .. controls (149.56,115.09) and (151.09,114.92) .. (151.46,115.54) .. controls (151.83,116.16) and (150.94,117.41) .. (150,118.73) .. controls (149.06,120.04) and (148.17,121.29) .. (148.54,121.91) .. controls (148.91,122.54) and (150.44,122.36) .. (152.04,122.17) .. controls (153.64,121.98) and (155.16,121.8) .. (155.53,122.42) .. controls (155.9,123.05) and (155.01,124.3) .. (154.07,125.61) .. controls (153.14,126.92) and (152.25,128.17) .. (152.62,128.8) .. controls (152.99,129.42) and (154.51,129.24) .. (156.11,129.05) .. controls (157.71,128.86) and (159.24,128.68) .. (159.61,129.31) .. controls (159.98,129.93) and (159.09,131.18) .. (158.15,132.49) .. controls (157.21,133.81) and (156.32,135.06) .. (156.69,135.68) .. controls (157.06,136.3) and (158.59,136.13) .. (160.19,135.94) .. controls (161.79,135.75) and (163.31,135.57) .. (163.68,136.19) .. controls (164.05,136.81) and (163.16,138.07) .. (162.22,139.38) .. controls (161.29,140.69) and (160.4,141.94) .. (160.77,142.57) .. controls (161.13,143.19) and (162.66,143.01) .. (164.26,142.82) .. controls (165.86,142.63) and (167.39,142.45) .. (167.76,143.08) .. controls (168.13,143.7) and (167.24,144.95) .. (166.3,146.26) .. controls (165.36,147.58) and (164.47,148.83) .. (164.84,149.45) .. controls (165.21,150.07) and (166.74,149.9) .. (168.34,149.71) .. controls (169.94,149.52) and (171.46,149.34) .. (171.83,149.96) .. controls (172.2,150.58) and (171.31,151.84) .. (170.37,153.15) .. controls (169.44,154.46) and (168.55,155.71) .. (168.92,156.34) .. controls (169.28,156.96) and (170.81,156.78) .. (172.41,156.59) .. controls (174.01,156.4) and (175.54,156.22) .. (175.91,156.85) .. controls (176.28,157.47) and (175.39,158.72) .. (174.45,160.03) .. controls (173.51,161.34) and (172.62,162.6) .. (172.99,163.22) .. controls (173.36,163.84) and (174.88,163.66) .. (176.49,163.47) .. controls (178.09,163.29) and (179.61,163.11) .. (179.98,163.73) .. controls (180.35,164.35) and (179.46,165.6) .. (178.52,166.92) .. controls (177.59,168.23) and (176.7,169.48) .. (177.06,170.1) .. controls (177.43,170.73) and (178.96,170.55) .. (180.56,170.36) .. controls (182.16,170.17) and (183.69,169.99) .. (184.06,170.61) .. controls (184.43,171.24) and (183.53,172.49) .. (182.6,173.8) .. controls (181.66,175.11) and (180.77,176.37) .. (181.14,176.99) .. controls (181.51,177.61) and (183.03,177.43) .. (184.64,177.24) .. controls (186.24,177.05) and (187.76,176.88) .. (188.13,177.5) .. controls (188.5,178.12) and (187.61,179.37) .. (186.67,180.69) .. controls (185.74,182) and (184.85,183.25) .. (185.21,183.87) .. controls (185.58,184.5) and (187.11,184.32) .. (188.71,184.13) .. controls (190.31,183.94) and (191.84,183.76) .. (192.21,184.38) .. controls (192.57,185.01) and (191.68,186.26) .. (190.75,187.57) .. controls (189.81,188.88) and (188.92,190.13) .. (189.29,190.76) .. controls (189.66,191.38) and (191.18,191.2) .. (192.78,191.01) .. controls (194.39,190.82) and (195.91,190.65) .. (196.28,191.27) .. controls (196.65,191.89) and (195.76,193.14) .. (194.82,194.46) .. controls (193.89,195.77) and (193,197.02) .. (193.36,197.64) .. controls (193.73,198.27) and (195.26,198.09) .. (196.86,197.9) .. controls (198.46,197.71) and (199.99,197.53) .. (200.36,198.15) .. controls (200.72,198.78) and (199.83,200.03) .. (198.9,201.34) .. controls (197.96,202.65) and (197.07,203.9) .. (197.44,204.53) .. controls (197.81,205.15) and (199.33,204.97) .. (200.93,204.78) .. controls (202.54,204.59) and (204.06,204.41) .. (204.43,205.04) .. controls (204.63,205.38) and (204.46,205.91) .. (204.1,206.53) ;
\draw   (43.66,67.47) .. controls (44.31,65.93) and (44.94,64.45) .. (45.67,64.46) .. controls (46.39,64.46) and (47.01,65.94) .. (47.66,67.48) .. controls (48.3,69.03) and (48.92,70.51) .. (49.64,70.51) .. controls (50.37,70.51) and (51,69.04) .. (51.66,67.5) .. controls (52.31,65.96) and (52.94,64.49) .. (53.67,64.49) .. controls (54.39,64.49) and (55.01,65.97) .. (55.66,67.52) .. controls (55.99,68.31) and (56.31,69.08) .. (56.65,69.65) ;
\draw   (77.66,67.8) .. controls (78.31,66.26) and (78.94,64.79) .. (79.67,64.79) .. controls (80.39,64.79) and (81.01,66.27) .. (81.66,67.82) .. controls (82.3,69.37) and (82.92,70.84) .. (83.64,70.84) .. controls (84.37,70.85) and (85,69.38) .. (85.66,67.83) .. controls (86.31,66.29) and (86.94,64.82) .. (87.67,64.82) .. controls (88.39,64.82) and (89.01,66.3) .. (89.66,67.85) .. controls (89.99,68.64) and (90.31,69.42) .. (90.65,69.99) ;
\draw   (222.82,190.2) .. controls (220.98,190.32) and (219.23,190.43) .. (218.93,189.77) .. controls (218.63,189.11) and (219.86,187.87) .. (221.16,186.56) .. controls (222.45,185.25) and (223.68,184) .. (223.38,183.34) .. controls (223.08,182.69) and (221.33,182.8) .. (219.49,182.92) .. controls (217.66,183.04) and (215.91,183.15) .. (215.61,182.5) .. controls (215.31,181.84) and (216.54,180.59) .. (217.83,179.28) .. controls (219.13,177.97) and (220.36,176.73) .. (220.06,176.07) .. controls (219.76,175.41) and (218.01,175.52) .. (216.17,175.64) .. controls (214.34,175.76) and (212.59,175.88) .. (212.29,175.22) .. controls (211.98,174.56) and (213.22,173.31) .. (214.51,172) .. controls (215.8,170.7) and (217.03,169.45) .. (216.73,168.79) .. controls (216.43,168.13) and (214.68,168.24) .. (212.85,168.37) .. controls (211.01,168.49) and (209.26,168.6) .. (208.96,167.94) .. controls (208.66,167.28) and (209.89,166.04) .. (211.18,164.73) .. controls (212.48,163.42) and (213.71,162.17) .. (213.41,161.51) .. controls (213.11,160.85) and (211.36,160.97) .. (209.52,161.09) .. controls (207.69,161.21) and (205.94,161.32) .. (205.64,160.66) .. controls (205.34,160.01) and (206.57,158.76) .. (207.86,157.45) .. controls (209.15,156.14) and (210.39,154.89) .. (210.08,154.24) .. controls (209.78,153.58) and (208.03,153.69) .. (206.2,153.81) .. controls (204.36,153.93) and (202.61,154.05) .. (202.31,153.39) .. controls (202.01,152.73) and (203.24,151.48) .. (204.54,150.17) .. controls (205.83,148.87) and (207.06,147.62) .. (206.76,146.96) .. controls (206.46,146.3) and (204.71,146.41) .. (202.88,146.54) .. controls (201.04,146.66) and (199.29,146.77) .. (198.99,146.11) .. controls (198.69,145.45) and (199.92,144.21) .. (201.21,142.9) .. controls (202.51,141.59) and (203.74,140.34) .. (203.44,139.68) .. controls (203.14,139.02) and (201.39,139.14) .. (199.55,139.26) .. controls (197.72,139.38) and (195.97,139.49) .. (195.67,138.83) .. controls (195.37,138.18) and (196.6,136.93) .. (197.89,135.62) .. controls (199.18,134.31) and (200.41,133.06) .. (200.11,132.41) .. controls (199.81,131.75) and (198.06,131.86) .. (196.23,131.98) .. controls (194.39,132.1) and (192.64,132.22) .. (192.34,131.56) .. controls (192.04,130.9) and (193.27,129.65) .. (194.57,128.34) .. controls (195.86,127.04) and (197.09,125.79) .. (196.79,125.13) .. controls (196.49,124.47) and (194.74,124.58) .. (192.9,124.71) .. controls (191.07,124.83) and (189.32,124.94) .. (189.02,124.28) .. controls (188.72,123.62) and (189.95,122.37) .. (191.24,121.07) .. controls (192.54,119.76) and (193.77,118.51) .. (193.46,117.85) .. controls (193.16,117.19) and (191.42,117.31) .. (189.58,117.43) .. controls (187.74,117.55) and (186,117.66) .. (185.69,117) .. controls (185.39,116.35) and (186.62,115.1) .. (187.92,113.79) .. controls (189.21,112.48) and (190.44,111.23) .. (190.14,110.58) .. controls (189.84,109.92) and (188.09,110.03) .. (186.26,110.15) .. controls (184.42,110.27) and (182.67,110.39) .. (182.37,109.73) .. controls (182.07,109.07) and (183.3,107.82) .. (184.59,106.51) .. controls (185.89,105.21) and (187.12,103.96) .. (186.82,103.3) .. controls (186.52,102.64) and (184.77,102.75) .. (182.93,102.87) .. controls (181.1,103) and (179.35,103.11) .. (179.05,102.45) .. controls (178.75,101.79) and (179.98,100.54) .. (181.27,99.24) .. controls (182.56,97.93) and (183.79,96.68) .. (183.49,96.02) .. controls (183.19,95.36) and (181.44,95.48) .. (179.61,95.6) .. controls (177.77,95.72) and (176.02,95.83) .. (175.72,95.17) .. controls (175.42,94.52) and (176.65,93.27) .. (177.95,91.96) .. controls (179.24,90.65) and (180.47,89.4) .. (180.17,88.75) .. controls (179.87,88.09) and (178.12,88.2) .. (176.28,88.32) .. controls (174.45,88.44) and (172.7,88.56) .. (172.4,87.9) .. controls (172.1,87.24) and (173.33,85.99) .. (174.62,84.68) .. controls (175.92,83.37) and (177.15,82.13) .. (176.85,81.47) .. controls (176.54,80.81) and (174.8,80.92) .. (172.96,81.04) .. controls (171.12,81.17) and (169.38,81.28) .. (169.07,80.62) .. controls (168.77,79.96) and (170,78.71) .. (171.3,77.41) .. controls (172.59,76.1) and (173.82,74.85) .. (173.52,74.19) .. controls (173.22,73.53) and (171.47,73.65) .. (169.64,73.77) .. controls (167.8,73.89) and (166.05,74) .. (165.75,73.34) .. controls (165.45,72.69) and (166.68,71.44) .. (167.97,70.13) .. controls (168.42,69.68) and (168.86,69.23) .. (169.23,68.81) ;
\draw  [fill={rgb, 255:red, 155; green, 155; blue, 155 }  ,fill opacity=0.23 ][dash pattern={on 0.84pt off 2.51pt}] (384.24,58.02) -- (404.28,58.02) -- (404.28,79.53) -- (384.24,79.53) -- cycle ;
\draw   (464.61,134.83) -- (484.65,134.83) -- (484.65,156.33) -- (464.61,156.33) -- cycle ;
\draw   (367.93,72.02) .. controls (369.46,71.11) and (370.92,70.25) .. (371.49,70.71) .. controls (372.05,71.16) and (371.56,72.81) .. (371.04,74.53) .. controls (370.53,76.26) and (370.04,77.91) .. (370.6,78.36) .. controls (371.17,78.81) and (372.63,77.95) .. (374.16,77.04) .. controls (375.69,76.13) and (377.15,75.27) .. (377.72,75.72) .. controls (378.28,76.18) and (377.79,77.82) .. (377.27,79.55) .. controls (376.76,81.28) and (376.27,82.92) .. (376.83,83.38) .. controls (377.4,83.83) and (378.86,82.97) .. (380.39,82.06) .. controls (381.92,81.15) and (383.38,80.29) .. (383.95,80.74) .. controls (384.51,81.2) and (384.02,82.84) .. (383.5,84.57) .. controls (382.99,86.3) and (382.5,87.94) .. (383.06,88.4) .. controls (383.63,88.85) and (385.09,87.99) .. (386.62,87.08) .. controls (388.15,86.17) and (389.61,85.31) .. (390.18,85.76) .. controls (390.74,86.21) and (390.25,87.86) .. (389.74,89.59) .. controls (389.22,91.31) and (388.73,92.96) .. (389.29,93.41) .. controls (389.86,93.87) and (391.32,93.01) .. (392.85,92.1) .. controls (394.38,91.19) and (395.84,90.32) .. (396.41,90.78) .. controls (396.97,91.23) and (396.48,92.88) .. (395.97,94.61) .. controls (395.45,96.33) and (394.96,97.98) .. (395.52,98.43) .. controls (396.09,98.89) and (397.55,98.02) .. (399.08,97.11) .. controls (400.61,96.21) and (402.07,95.34) .. (402.64,95.8) .. controls (403.2,96.25) and (402.71,97.9) .. (402.2,99.62) .. controls (401.68,101.35) and (401.19,103) .. (401.75,103.45) .. controls (402.32,103.91) and (403.78,103.04) .. (405.31,102.13) .. controls (406.84,101.22) and (408.3,100.36) .. (408.87,100.82) .. controls (409.43,101.27) and (408.94,102.92) .. (408.43,104.64) .. controls (407.91,106.37) and (407.42,108.02) .. (407.98,108.47) .. controls (408.55,108.92) and (410.01,108.06) .. (411.54,107.15) .. controls (413.07,106.24) and (414.53,105.38) .. (415.1,105.83) .. controls (415.66,106.29) and (415.17,107.93) .. (414.66,109.66) .. controls (414.14,111.39) and (413.65,113.03) .. (414.22,113.49) .. controls (414.78,113.94) and (416.24,113.08) .. (417.77,112.17) .. controls (419.3,111.26) and (420.76,110.4) .. (421.33,110.85) .. controls (421.89,111.31) and (421.4,112.95) .. (420.89,114.68) .. controls (420.37,116.41) and (419.88,118.05) .. (420.45,118.51) .. controls (421.01,118.96) and (422.47,118.1) .. (424,117.19) .. controls (425.53,116.28) and (427,115.42) .. (427.56,115.87) .. controls (428.12,116.32) and (427.63,117.97) .. (427.12,119.7) .. controls (426.6,121.42) and (426.11,123.07) .. (426.68,123.52) .. controls (427.24,123.98) and (428.7,123.12) .. (430.23,122.21) .. controls (431.76,121.3) and (433.23,120.43) .. (433.79,120.89) .. controls (434.35,121.34) and (433.86,122.99) .. (433.35,124.72) .. controls (432.83,126.44) and (432.34,128.09) .. (432.91,128.54) .. controls (433.47,129) and (434.93,128.13) .. (436.46,127.22) .. controls (437.99,126.32) and (439.46,125.45) .. (440.02,125.91) .. controls (440.58,126.36) and (440.09,128.01) .. (439.58,129.73) .. controls (439.06,131.46) and (438.57,133.11) .. (439.14,133.56) .. controls (439.7,134.01) and (441.16,133.15) .. (442.69,132.24) .. controls (444.23,131.33) and (445.69,130.47) .. (446.25,130.92) .. controls (446.81,131.38) and (446.32,133.02) .. (445.81,134.75) .. controls (445.29,136.48) and (444.8,138.13) .. (445.37,138.58) .. controls (445.93,139.03) and (447.39,138.17) .. (448.92,137.26) .. controls (450.46,136.35) and (451.92,135.49) .. (452.48,135.94) .. controls (453.04,136.4) and (452.55,138.04) .. (452.04,139.77) .. controls (451.52,141.5) and (451.03,143.14) .. (451.6,143.6) .. controls (452.16,144.05) and (453.62,143.19) .. (455.15,142.28) .. controls (456.69,141.37) and (458.15,140.51) .. (458.71,140.96) .. controls (459.17,141.34) and (458.93,142.52) .. (458.53,143.89) ;
\draw   (412.74,68.85) .. controls (414.43,68.18) and (416.04,67.55) .. (416.52,68.09) .. controls (417,68.63) and (416.22,70.2) .. (415.4,71.84) .. controls (414.58,73.48) and (413.8,75.04) .. (414.29,75.59) .. controls (414.77,76.13) and (416.37,75.49) .. (418.06,74.83) .. controls (419.74,74.16) and (421.35,73.53) .. (421.83,74.07) .. controls (422.31,74.61) and (421.54,76.18) .. (420.72,77.82) .. controls (419.9,79.46) and (419.12,81.02) .. (419.6,81.56) .. controls (420.08,82.1) and (421.69,81.47) .. (423.37,80.81) .. controls (425.06,80.14) and (426.67,79.51) .. (427.15,80.05) .. controls (427.63,80.59) and (426.85,82.16) .. (426.03,83.8) .. controls (425.21,85.44) and (424.44,87) .. (424.92,87.54) .. controls (425.4,88.08) and (427,87.45) .. (428.69,86.79) .. controls (430.38,86.12) and (431.98,85.49) .. (432.46,86.03) .. controls (432.94,86.57) and (432.17,88.13) .. (431.35,89.78) .. controls (430.53,91.42) and (429.75,92.98) .. (430.23,93.52) .. controls (430.71,94.06) and (432.32,93.43) .. (434.01,92.76) .. controls (435.69,92.1) and (437.3,91.47) .. (437.78,92.01) .. controls (438.26,92.55) and (437.48,94.11) .. (436.66,95.75) .. controls (435.84,97.39) and (435.07,98.96) .. (435.55,99.5) .. controls (436.03,100.04) and (437.64,99.41) .. (439.32,98.74) .. controls (441.01,98.08) and (442.61,97.45) .. (443.09,97.99) .. controls (443.58,98.53) and (442.8,100.09) .. (441.98,101.73) .. controls (441.16,103.37) and (440.38,104.94) .. (440.86,105.48) .. controls (441.34,106.02) and (442.95,105.39) .. (444.64,104.72) .. controls (446.32,104.06) and (447.93,103.43) .. (448.41,103.97) .. controls (448.89,104.51) and (448.11,106.07) .. (447.29,107.71) .. controls (446.48,109.35) and (445.7,110.92) .. (446.18,111.46) .. controls (446.66,112) and (448.27,111.37) .. (449.95,110.7) .. controls (451.64,110.04) and (453.24,109.4) .. (453.73,109.94) .. controls (454.21,110.49) and (453.43,112.05) .. (452.61,113.69) .. controls (451.79,115.33) and (451.01,116.9) .. (451.49,117.44) .. controls (451.98,117.98) and (453.58,117.35) .. (455.27,116.68) .. controls (456.95,116.01) and (458.56,115.38) .. (459.04,115.92) .. controls (459.52,116.46) and (458.74,118.03) .. (457.93,119.67) .. controls (457.11,121.31) and (456.33,122.87) .. (456.81,123.41) .. controls (457.29,123.96) and (458.9,123.32) .. (460.58,122.66) .. controls (462.27,121.99) and (463.88,121.36) .. (464.36,121.9) .. controls (464.84,122.44) and (464.06,124.01) .. (463.24,125.65) .. controls (462.42,127.29) and (461.64,128.85) .. (462.13,129.39) .. controls (462.48,129.79) and (463.44,129.56) .. (464.59,129.14) ;
\draw  [fill={rgb, 255:red, 255; green, 255; blue, 255 }  ,fill opacity=1 ] (499.49,86.36) -- (660,86.36) -- (660,50.19) -- (499.49,50.19) -- cycle ;
\draw   (505.51,58.01) -- (525.55,58.01) -- (525.55,79.52) -- (505.51,79.52) -- cycle ;
\draw   (538.9,58.01) -- (558.94,58.01) -- (558.94,79.52) -- (538.9,79.52) -- cycle ;
\draw   (525.66,68.47) .. controls (526.31,66.93) and (526.94,65.45) .. (527.67,65.46) .. controls (528.39,65.46) and (529.01,66.94) .. (529.66,68.48) .. controls (530.3,70.03) and (530.92,71.51) .. (531.64,71.51) .. controls (532.37,71.51) and (533,70.04) .. (533.66,68.5) .. controls (534.31,66.96) and (534.94,65.49) .. (535.67,65.49) .. controls (536.39,65.49) and (537.01,66.97) .. (537.66,68.52) .. controls (537.99,69.31) and (538.31,70.08) .. (538.65,70.65) ;
\draw   (559.66,68.8) .. controls (560.31,67.26) and (560.94,65.79) .. (561.67,65.79) .. controls (562.39,65.79) and (563.01,67.27) .. (563.66,68.82) .. controls (564.3,70.37) and (564.92,71.84) .. (565.64,71.84) .. controls (566.37,71.85) and (567,70.38) .. (567.66,68.83) .. controls (568.31,67.29) and (568.94,65.82) .. (569.67,65.82) .. controls (570.39,65.82) and (571.01,67.3) .. (571.66,68.85) .. controls (571.99,69.64) and (572.31,70.42) .. (572.65,70.99) ;
\draw   (632.9,57.01) -- (652.94,57.01) -- (652.94,78.52) -- (632.9,78.52) -- cycle ;
\draw   (619.66,67.47) .. controls (620.31,65.93) and (620.94,64.45) .. (621.67,64.46) .. controls (622.39,64.46) and (623.01,65.94) .. (623.66,67.48) .. controls (624.3,69.03) and (624.92,70.51) .. (625.64,70.51) .. controls (626.37,70.51) and (627,69.04) .. (627.66,67.5) .. controls (628.31,65.96) and (628.94,64.49) .. (629.67,64.49) .. controls (630.39,64.49) and (631.01,65.97) .. (631.66,67.52) .. controls (631.99,68.31) and (632.31,69.08) .. (632.65,69.65) ;
\draw [line width=1.5]  [dash pattern={on 1.69pt off 2.76pt}]  (100,69) -- (115.4,69) ;
\draw [line width=1.5]  [dash pattern={on 1.69pt off 2.76pt}]  (343.8,69) -- (359.2,69) ;
\draw [line width=1.5]  [dash pattern={on 1.69pt off 2.76pt}]  (209.6,69) -- (225,69) ;
\draw [line width=1.5]  [dash pattern={on 1.69pt off 2.76pt}]  (452.8,69) -- (468.2,69) ;
\draw [line width=1.5]  [dash pattern={on 1.69pt off 2.76pt}]  (588,69) -- (603.4,69) ;
\draw [line width=1.5]  [dash pattern={on 1.69pt off 2.76pt}]  (346.2,271.6) -- (361.6,271.6) ;

\draw (68.62,273.91) node [anchor=north west][inner sep=0.75pt]    {$V^{1}$};
\draw (270.05,275.05) node [anchor=north west][inner sep=0.75pt]    {$V^{2}$};
\draw (514.68,274.78) node [anchor=north west][inner sep=0.75pt]    {$V^{q+1}$};
\draw (164.42,225.59) node [anchor=north west][inner sep=0.75pt]    {$|i_{1} \rangle \ $};
\draw (24,61) node [anchor=north west][inner sep=0.75pt]  [font=\tiny]  {${\displaystyle \xi _{11} \ }$};
\draw (268.58,20.67) node [anchor=north west][inner sep=0.75pt]    {$\textup{Proof } \xi =\otimes _{j=1}^{k} \xi _{j}$};
\draw (85.6,137.28) node [anchor=north west][inner sep=0.75pt]    {``query''};
\draw (443.13,224.99) node [anchor=north west][inner sep=0.75pt]    {$|i_{q} \rangle \ $};
\draw (591.71,168.11) node [anchor=north west][inner sep=0.75pt]    {$ \begin{array}{l}
\textup{Accept}/\\
\textup{Reject}
\end{array}$};
\draw (10,224) node [anchor=north west][inner sep=0.75pt]    {$| x\rangle $};
\draw (10,285) node [anchor=north west][inner sep=0.75pt]    {$| 0 \rangle $};
\draw (135,61) node [anchor=north west][inner sep=0.75pt]  [font=\tiny]  {${\displaystyle \xi _{i_{1}}}$};
\draw (216,199.99) node [anchor=north west][inner sep=0.75pt]  [font=\tiny]  {${\displaystyle \xi _{i_{1}}}$};
\draw (183,60) node [anchor=north west][inner sep=0.75pt]    {$\otimes $};
\draw (236,60) node [anchor=north west][inner sep=0.75pt]    {$\otimes $};
\draw (426,60) node [anchor=north west][inner sep=0.75pt]    {$\otimes $};
\draw (481,60) node [anchor=north west][inner sep=0.75pt]    {$\otimes $};
\draw (57.7,61) node [anchor=north west][inner sep=0.75pt]  [font=\tiny]  {${\displaystyle \xi _{12} \ }$};
\draw (265,61) node [anchor=north west][inner sep=0.75pt]  [font=\tiny]  {${\displaystyle \xi _{j1} \ }$};
\draw (298.51,61) node [anchor=north west][inner sep=0.75pt]  [font=\tiny]  {${\displaystyle \xi _{j2} \ }$};
\draw (386,61) node [anchor=north west][inner sep=0.75pt]  [font=\tiny]  {${\displaystyle \xi _{i_{q}}}$};
\draw (505.51,61) node [anchor=north west][inner sep=0.75pt]  [font=\tiny]  {${\displaystyle \xi _{k1}}$};
\draw (466,138) node [anchor=north west][inner sep=0.75pt]  [font=\tiny]  {${\displaystyle \xi _{i_{q}}}$};
\draw (539,61) node [anchor=north west][inner sep=0.75pt]  [font=\tiny]  {${\displaystyle \xi _{k2}}$};
\draw (633,61) node [anchor=north west][inner sep=0.75pt]  [font=\tiny]  {${\displaystyle \xi _{kp}}$};
\draw (127.8,226.4) node [anchor=north west][inner sep=0.75pt]    {$\Pi ^{1}$};
\draw (405.2,226.4) node [anchor=north west][inner sep=0.75pt]    {$\Pi ^{q}$};
\draw (580,138) node [anchor=north west][inner sep=0.75pt]    {$\Pi ^{\mathrm{output}}$};
\end{tikzpicture}

    \caption{Visualisation of a $k$-prover quantum PCP which allows for $q$ adaptive queries to the provided quantum proof through intermediate measurements as per~\cref{def:adap_QPCP}. Note that an index measurement result $i_t$ denotes a tuple ($j,l$) where $j$ indicates the corresponding proof and $l$ the index of the qubit in this proof.}
    \label{fig:gen_QPCP}
\end{figure}

\noindent We can easily show that in the single-prover setting we have weak error reduction by parallel repetition by using a similar argument as in~\cite{aharonov2002quantum}.

\begin{lemma}[Weak error reduction for the single prover case] 
Let $c - s \in \Omega(1)$. Then
\begin{align*}
    \QPCP[1,q,c,s] \subseteq \QPCP[1, \mO(qt), 1 - 2^t, 2^{-t}].
\end{align*}
\label{lemma:weak_err_red}
\end{lemma} 

\begin{proof} 
This follows from a standard parallel repetition argument, with special care given to the fact that the proof can be entangled. Let $V$ be the $(k,q,p_1,p_2,p_3)$-$\QPCP$ verifier with completeness $c$ and soundness $s$, where $p_1, p_2, p_3 \in \poly(n)$. Arthur expects to receive the proof $\xi^{\otimes R}$, runs $V$ $R$ times in parallel (acting only on $q$ qubits of each $\xi$), measures the output qubit, and accepts if at least a $(c+s)/2$-fraction of the outcomes are accepting. Completeness follows directly from a Chernoff bound. If Merlin provides the correct proof $\xi^{\otimes R}$, then each run of the verifier accepts with probability at least $c$. Let $X_i \in \{0,1\}$ be the random variable that indicates whether the $i$th run of the parallel repetition accepted ($X_i = 1$) or not ($X_i = 0$). Let $\{X_i\}_{i \in [R]}$ be the outcomes of the $R$ runs, with $\mu = \mathbb{E}[X_1] = c$. The total number of accepting runs is given by $S_R = \sum_{i=1}^R X_i$. By the Chernoff bound, the probability that fewer than a $(c+s)/2$-fraction of the runs accept is given by
\[
\Pr\left[S_R < \frac{c+s}{2} \cdot R\right] \leq \exp\left(-2R\left(\frac{c - s}{2}\right)^2\right).
\]
To ensure that $\Pr[S_R < (c+s)/2 \cdot R] \leq 2^{-t}$, it suffices to choose
\begin{align}
    R \coloneqq  \left\lceil \frac{2 t \ln 2}{\left(c - s\right)^2} \right\rceil.
    \label{eq:choice_of_R_repeat}
\end{align}

For soundness, let $\rho$ be the $p_2(n)t$-qubit proof that Merlin provides instead of $\xi^{\otimes t}$. From the soundness property of the verifier, we know that $\mathbb{E}[X_1] \leq s$. Now consider the expectation of $X_2$, which depends on the outcome of $X_1$. However, the soundness condition ensures that $\mathbb{E}[X_2 \mid X_1 = j] \leq s$ for all possible outcomes $j \in \{0,1\}$. By repeating this argument, we see that for any $i$, $\mathbb{E}[X_i \mid X_1, \ldots, X_{i-1}] \leq s$. Since this holds for any $i$, we can upper bound the acceptance probability by polynomially many independent Bernoulli trials with mean $\mu = s$, again with bias $(c-s)/2$. Applying a Chernoff bound for dependent variables (with bounded conditional expectations), we find that the acceptance probability is at most $2^{-t}$. Finally, the total number of queries to the proof is $qR = \mO(q t)$.
\end{proof}

We leave open the question of weak error reduction for the multiple-prover case, but we expect that it can be done using the ideas in~{\cite{harrow2013testing}}.

When we have near-perfect completeness, strong error reduction is also possible for non-adaptive quantum PCPs.
\begin{restatable}[Strong error reduction for non-adaptive QPCPs with near-perfect completeness]{claim}{strongerrredu} For $l \in \mO(1)$ it holds that 
\begin{align*}
    \QPCP_{\textup{NA}}[1,q,c,s] = \QPCP_{\textup{NA}}[1,l q,c',s']
\end{align*}
with $c= 1-2^{-\Omega(n)}$, $s=1/2$ and $c'= 1-2^{-\Omega(n)}$, $s' = 2^{-\mO(l)}$.
\label{claim:strong_err_red}
\end{restatable}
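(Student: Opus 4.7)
The plan is to adapt the parallel repetition argument of Lemma~\ref{lemma:weak_err_red} by replacing the majority-vote decision rule with a conjunction (AND) rule across the $t$ parallel runs. The key observation is that near-perfect completeness $c = 1 - 2^{-\Omega(n)}$ is robust under an AND rule: $t$ honest runs all accept with probability at least $(1 - 2^{-\Omega(n)})^t \geq 1 - t \cdot 2^{-\Omega(n)}$, which is still $1 - 2^{-\Omega(n)}$ for constant $t$. Simultaneously the soundness $s = 1/2$ is amplified to $(1/2)^t = 2^{-t}$ by a chain-rule argument, so picking $t = \Theta(l)$ delivers the target soundness $s' = 2^{-\mO(l)}$ while keeping the total query count $tq$ constant.

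Concretely, let $\mathcal{A}$ be the given non-adaptive $(q,c,s,1)$-$\QPCP$ verifier with proof size $p(n)$. I would construct $\mathcal{A}'$ that expects a single proof of size $t \cdot p(n)$, viewed as $t$ contiguous chunks of $p(n)$ qubits. Because $\mathcal{A}$ is non-adaptive (Definition~\ref{def:QPCP_NA}), its index-selecting PVM can be fired in one shot; accordingly $\mathcal{A}'$ applies $t$ independent fresh copies of this PVM in parallel, one targeting each chunk, to select $tq$ proof indices simultaneously. It then runs $t$ copies of the final verification circuit of $\mathcal{A}$ in parallel, one per chunk, and accepts iff all $t$ of them accept. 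The resulting verifier is still non-adaptive, makes exactly $tq$ proof queries, and still has $k = 1$.

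Completeness follows from Bernoulli's inequality: the honest prover sends $\xi^{\otimes t}$ with $\xi$ an optimal witness for $\mathcal{A}$, so the chunks are in a product state, the runs are independent, each accepts with probability at least $1 - 2^{-\Omega(n)}$, and thus the conjunction accepts with probability at least $1 - t\cdot 2^{-\Omega(n)} \geq 1 - 2^{-\Omega(n)}$. Soundness is the chain-rule bound already used in Lemma~\ref{lemma:weak_err_red}: for an arbitrary $tp(n)$-qubit proof $\rho$, let $X_i \in \{0,1\}$ indicate acceptance of the $i$-th run. Conditioned on $X_1 = \cdots = X_{i-1} = 1$, the reduced state on chunk $i$ is some valid $p(n)$-qubit density matrix, and since $\mathcal{A}$'s soundness is universally quantified over proof states, $\mathbb{E}[X_i \mid X_{<i} = 1] \leq 1/2$. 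Multiplying through gives $\mathbb{P}[\text{all accept}] \leq (1/2)^t = 2^{-t}$, and setting $t = \Theta(l)$ achieves $s' \leq 2^{-\mO(l)}$.

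The only delicate step is the chain rule for soundness, which rests on two features: first, $\mathcal{A}$'s soundness must hold against arbitrary (mixed) proof states obtained by partial measurement on other chunks, which is already built into Definition~\ref{def:QPCP}; second, the $t$ parallel runs must factorize cleanly from $\mathcal{A}'$'s perspective, which is guaranteed by non-adaptivity, since the index-selection PVMs on different chunks act on disjoint ancilla qubits and the per-chunk verification circuits act on disjoint proof qubits. This is precisely why the restriction to non-adaptive QPCPs is present, and why the Hamiltonian-based route (amplifying the spectrum of the local Hamiltonian from Theorem~\ref{thm:red}) would need a different argument: as discussed in the open-problems section, polynomial transformations of the induced Hamiltonian generically introduce negative coefficients that are incompatible with Kitaev's energy estimation protocol.
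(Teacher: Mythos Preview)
Your argument is correct as a proof of the literal class equality, but it takes a genuinely different route from the paper and in doing so misses the content that earns the adjective ``strong.'' You enlarge the proof to $t\cdot p(n)$ qubits and run $t$ disjoint copies with an AND rule; this is simply weak error reduction (Lemma~\ref{lemma:weak_err_red}) with the decision rule swapped from majority to conjunction so that the borderline soundness $s=1/2$ still amplifies. The paper instead keeps the proof at size $p(n)$ and \emph{reuses the same proof} across all $l$ rounds: after each accepting round it appeals to the gentle measurement lemma (using the near-perfect completeness $1-2^{-\Omega(n)}$) to conclude the post-measurement proof register is $2^{-\Omega(n)}$-close in trace distance to the original, then applies $V_1^\dagger$ to uncompute and recycle the witness. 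That recycling is precisely the ``strong'' part, in the Marriott--Watrous sense of the term, and it is also why non-adaptivity and near-perfect completeness are genuinely needed in the paper's argument: non-adaptivity so that a single unitary $V_1$ can be inverted cleanly after each round, and near-perfect completeness so that the gentle measurement bound is useful.

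By contrast, your argument does not actually require non-adaptivity at all --- the chain-rule soundness bound you invoke is exactly the one from Lemma~\ref{lemma:weak_err_red}, which is stated and proved for adaptive verifiers --- so your closing paragraph misidentifies where the hypothesis is used. What your approach buys is simplicity and a clean handling of the $s=1/2$ boundary via AND rather than majority; what the paper's approach buys is that the witness length does not grow, which is the substantive distinction between strong and weak error reduction.
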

The proof can be found in~\cref{app:strong_err_red}.

\section{Local Hamiltonians from quantum PCPs}
\label{sec:LH_from_QPCPs}

The core of this section is~\cref{thm:red}, which argues that we can, from a $\QPCP$ verification circuit (as in~\cref{def:adap_QPCP}) efficiently produce a local Hamiltonian, such that the expectation value of a proof state is given by its acceptance probability by the verifier. Our construction broadly follows the ideas in~\cite{Grilo2018thesis}, but also includes the more general adaptive verifier case. This leads to a slightly more general class of local Hamiltonians than is usually assumed in the quantum PCP literature, and we have to perform some extra tricks to obtain the traditional Hamiltonian. \\

We begin by proving a basic lemma, which expresses the probability that a proof $\xi$ gets rejected by the quantum PCP verifier $V$ conditioned on taking the query path $(i_1,\dots,i_q)$, in terms of the PVMs and circuits of $V$. Throughout this work, we make a distinction between indices indicated surrounded by brackets (e.g.~``$(i_1,\dots,i_q)$'') and those that are not (e.g.~``$i_1,\dots,i_q$'') to make a distinction where the order does matter (the former) and where it does not (the latter).

\begin{lemma} 
Let $V_x$ be a $(k,q,p_1,p_2,p_3)$-$\QPCP$ verifier as in~\cref{def:adap_QPCP}, with hardcoded input $x \in \{0,1\}^n$. Define $M_{i_q,x}^{t'} = \Pi_{i_q}^{t'} V_x^{t'}$ for all $i_q \in [k p_2(n)]$, $t' \in [q]$. The probability that the quantum PCP rejects a proof $\xi$, conditioned on taking the query path $(i_1, \dots, i_q)$, is given by
\begin{align*}
    \Pr[V_x \text{ rejects } \xi | (i_1, \dots, i_q)] = \frac{\tr[ P_{x,(i_1, \dots, i_q)} \rho^0 ]}{\Pr[(i_1, \dots, i_q)]},
\end{align*}
where $\Pr[(i_1, \dots, i_q)]$ is the probability that $i_1, \dots, i_q$ are queried (and in this order), $\rho^{0} = \ket{0}\bra{0}^{\otimes n + p_1(n)} \otimes \xi$, and 
$P_{x,(i_1, \dots, i_q)}$ is a $(k + n + p_1(n))$-local operator given by
\begin{align}
     P_{x,(i_1, \dots, i_q)} = M_{i_1,x}^{1\dagger} \dots M_{i_q,x}^{q\dagger} V^{q+1\dagger}_x \Pi_{0}^{\textup{output}} V^{q+1}_x M_{i_q,x}^{q} \dots M_{i_1,x}^{1}.
     \label{eq:P_eq}
\end{align}
\label{lem:P_rej_cond}
\end{lemma}

\begin{proof}
    This follows from a straightforward application of the rules for post-measurement states in projective measurements. Let $\rho^{0} = \ket{0}\bra{0}^{\otimes n + p_1(n)} \otimes \xi$ be the initial state (note the extra $n$ term for the original input register). 

    Suppose the first PVM of the quantum PCP returns outcome $i_1$. The post-measurement state after this step is:
\begin{align*}
    \rho^1 = \frac{\Pi_{i_1}^1 V_x^1 \rho^0 V^{1\dagger}_x \Pi_{i_1}^1}{\tr[\Pi_{i_1}^1 V_x^1 \rho^0 V_x^{1\dagger}]} = \frac{\Pi_{i_1}^1 V_x^1 \rho^0 V_x^{1\dagger}\Pi^1_{i_1}}{\Pr[i_1]} = \frac{M_{i_1,x}^1 \rho^0 M_{i_1,x}^{1\dagger}}{\Pr[i_1]}.
\end{align*}
Similarly, assuming outcome $i_2$ for the second query, the state becomes:
\begin{align*}
    \rho^2 = \frac{\Pi_{i_2}^2 V_x^2 \rho^1 V_x^{2\dagger}\Pi_{i_2}^2}{\tr[\Pi_{i_2}^2 V_x^2 \rho^1 V_x^{2\dagger}]} = \frac{\Pi_{i_2}^2 V_x^2 \rho^1 V_x^{2\dagger}\Pi_{i_2}^2}{\Pr[i_2|i_1]} = \frac{M_{i_2,x}^{2} \rho^1 M_{i_2,x}^{2\dagger}}{\Pr[i_2|i_1]} = \frac{M_{i_2,x}^{2} M_{i_1,x}^{1} \rho^0 M_{i_1,x}^{1\dagger} M_{i_2,x}^{2\dagger}}{\Pr[i_2|i_1] \Pr[i_1]}.
\end{align*}
Repeating this procedure $q-2$ more times, assuming outcomes $i_3, \dots, i_q$, we find that the state after the $q$'th query becomes
\begin{align*}
    \rho^q = \frac{M_{i_q,x}^{q} \dots M_{i_1,x}^{1} \rho^0 M_{i_1,x}^{1\dagger} \dots M_{i_q,x}^{q\dagger}}{\prod_{l=1}^q \Pr[i_l | (i_1, \dots, i_{l-1})]} = \frac{M_{i_q,x}^{q} \dots M_{i_1,x}^{1} \rho^0 M_{i_1,x}^{1\dagger} \dots M_{i_q,x}^{q\dagger}}{\Pr[(i_1, \dots, i_q)]}.
\end{align*}
Now in the final step of the quantum PCP, a final circuit $V_l$ is applied, followed by the PVM $\Pi^\textup{output}$. The expected value of rejection is then given by
\begin{align*}
\Pr[V_x \text{ rejects } \xi | (i_1, \dots, i_q)] = \tr[\Pi_{0}^{\textup{output}} V^{q}_x \rho^q V^{q\dagger}_x] = \frac{\tr[\Pi_0 V^{q}_x M_{i_q,x}^{q} \dots M_{i_1,x}^{1} \rho^0 M_{i_1,x}^{1\dagger} \dots M_{i_q,x}^{q\dagger} V^{q\dagger}_x]}{\Pr[(i_1, \dots, i_q)]}.
\end{align*}
Using the cyclic property of the trace, we can write:
\begin{align*}
    \Pr[V_x \text{ rejects } \xi | (i_1, \dots, i_q)] &= \frac{\tr[M_{i_1,x}^{1\dagger} \dots M_{i_q,x}^{q\dagger} V^{q\dagger}_x \Pi_{0}^{\textup{output}} V^{q}_x M_{i_q,x}^{q} \dots M_{i_1,x}^{1} \rho^0]}{\Pr[(i_1, \dots, i_q)]} \\
    &= \frac{\tr[ P_{x,(i_1, \dots, i_q)} \rho^0]}{\Pr[(i_1, \dots, i_q)]},
\end{align*}
with
\begin{align*}
     P_{x,(i_1, \dots, i_q)} = M_{i_1,x}^{1\dagger} \dots M_{i_q,x}^{q\dagger} V^{q\dagger}_x \Pi_{0}^{\textup{output}} V^{q}_x M_{i_q,x}^{q} \dots M_{i_1,x}^{1}.
\end{align*}
\end{proof}

The next idea is that the expectation value of an operator $A$ acting on an $n$-qubit state consisting of a product state of a $q$-qubit state and a fixed $(n-q)$-qubit state can be represented as an expectation value of a $q$-qubit operator $B$ acting only on the $q$-qubit state. The following lemma proves this and gives an explicit expression of the local operator, assuming that the fixed state is pure.

\begin{lemma} 
Let $A$ be an operator acting on an $n$-qubit Hilbert space consisting of a variable $q$-qubit input state $\rho$ in a tensor product with some fixed $(n-q)$-qubit state $\rho_{\textup{fixed}}$. Then we have that
\begin{align*}
    \tr[A (\rho \otimes \rho_{\textup{fixed}})] = \tr[B \rho],
\end{align*}
where $B = B(\rho_{\textup{fixed}})$ is some $q$-local operator which depends on $\rho_{\textup{fixed}}$. Moreover, if $\rho_{\textup{fixed}} = \ketbra{\psi}$ for some pure state $\ket{\psi}$, we have that the ($\alpha,\alpha'$)-entry of $B$ in some basis $\{\alpha\}$ is given by
\begin{align*}
   b_{\alpha,\alpha'} =  \bra{\alpha}\bra{\psi} A \ket{\alpha'}\ket{\psi}.
\end{align*}
\label{lem:AtoB}
\end{lemma}
\begin{proof}
 We can decompose $A$ in two arbitrary bases $\{\alpha\}$ and $\{\beta\}$ for each part of the cut in the product state as
\begin{align*}
    A = \sum_{\alpha,\alpha',\beta,\beta'} a_{\alpha,\alpha',\beta,\beta'} \ketbra{\alpha}{\alpha'} \otimes \ketbra{\beta}{\beta'}.
\end{align*}
Using the linearity of the trace and the tensor product property,
\begin{align*}
    \tr[A (\rho \otimes \rho_{\textup{fixed}})] &=  \tr\left[\sum_{\alpha,\alpha',\beta,\beta'} a_{\alpha,\alpha',\beta,\beta'} 
    \ketbra{\alpha}{\alpha'} \otimes \ketbra{\beta}{\beta'} ( \rho \otimes \rho_{\textup{fixed}} )\right]\\
    &= \sum_{\alpha,\alpha',\beta,\beta'} a_{\alpha,\alpha',\beta,\beta'}  \tr[
    \ketbra{\alpha}{\alpha'} \rho ] \tr[ \ketbra{\beta}{\beta'} \rho_{\textup{fixed}}]\\
    &= \sum_{\alpha,\alpha'}  \left(\sum_{\beta,\beta'} a_{\alpha,\alpha',\beta,\beta'} \tr[ \ketbra{\beta}{\beta'} \rho_{\textup{fixed}}]\right) \tr[
    \ketbra{\alpha}{\alpha'} \rho ] \\
    &= \sum_{\alpha,\alpha'} b_{\alpha,\alpha'} \tr[
    \ketbra{\alpha}{\alpha'} \rho ] \\
    &= \tr[B \rho],
\end{align*}
where the operator $B$, given by
\begin{align*}
    B = \sum_{\alpha,\alpha'} b_{\alpha,\alpha'} \ketbra{\alpha}{\alpha'}, \quad b_{\alpha,\alpha'} = \sum_{\beta,\beta'} a_{\alpha,\alpha',\beta,\beta'} \tr[ \ketbra{\beta}{\beta'} \rho_{\textup{fixed}}],
\end{align*}
is indeed $q$-local.

For the second part of the lemma, we note that under the assumption that $\rho_{\textup{fixed}} = \ketbra{\psi}$ for some pure state $\ket{\psi}$, we have
\begin{align*}
\bra{\alpha}\bra{\psi} A \ket{\alpha'}\ket{\psi} &= \bra{\alpha}\bra{\psi} \left(\sum_{\alpha,\alpha',\beta,\beta'} a_{\alpha,\alpha',\beta,\beta'} \ketbra{\alpha}{\alpha'} \otimes \ketbra{\beta}{\beta'} \right)\ket{\alpha'}\ket{\psi}\\
&= \sum_{\beta,\beta'} a_{\alpha,\alpha',\beta,\beta'} \bra{\psi} \ketbra{\beta}{\beta'} \ket{\psi}\\
&= \sum_{\beta,\beta'} a_{\alpha,\alpha',\beta,\beta'} \tr[\ketbra{\beta}{\beta'} \ketbra{\psi}]\\
&= \sum_{\beta,\beta'} a_{\alpha,\alpha',\beta,\beta'} \tr[ \ketbra{\beta}{\beta'} \rho_{\textup{fixed}}]\\
&= b_{\alpha,\alpha'},
\end{align*} 
completing the proof.
\end{proof}

With these lemmas in hand we can argue that, given a verifier, there always exists a local Hamiltonian that captures the probability of acceptance of a proof.

\begin{lemma}[Hamiltonians from general quantum PCPs]
Let $q \in \mathbb{N}$ be some constant and let $p_1$, $p_2$, and $p_3$ be polynomials. Let $V_x$ be a $(k,q,p_1,p_2,p_3)\text{-}\QPCP$-verifier as in~\cref{def:adap_QPCP}, with hardcoded input $x$, $|x| = n$ for some $n \in \mathbb{N}$, and a $k p_2(n)$-qubit quantum proof $\xi = \otimes_{j=1}^k \xi_j$, where $\xi_j \in \mathcal{D}\left(\left(\mathbb{C}^2\right)^{\otimes p_2(n)}\right)$. Then there exists a Hamiltonian $H_x$ consisting of $q$-local PSD terms acting on $k p_2(n)$-qubits such that for all $\xi$, we have
\begin{align}
    \Pr[V_x \text{ accepts } \xi] = 1 - \tr[H_x \xi].
    \label{eq:QPCP_is_LH}
\end{align}
\label{lem:H_from_QPCP}
\end{lemma}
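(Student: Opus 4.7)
The plan is to apply the law of total probability to the per-path rejection formula from Lemma~\ref{lem:P_rej_cond}, and then to eliminate the fixed workspace factor of the initial state by compression, so that what remains is a proof-register operator. Concretely, the conditional denominator $\mathbb{P}[(i_1,\dots,i_q)]$ in Lemma~\ref{lem:P_rej_cond} cancels against the marginal probability of the query path, giving
\begin{align*}
\mathbb{P}[V_x \text{ rejects } \xi] = \sum_{(i_1,\dots,i_q)} \tr\!\left[P_{x,(i_1,\dots,i_q)}\, \rho^0\right],
\end{align*}
with $\rho^0 = \proj{0}^{\otimes p_1(n)} \ot \xi$ after $x$ is hardcoded into the circuits.

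Next I would compress out the fixed ancilla factor: define
\begin{align*}
H_{x,(i_1,\dots,i_q)} := \bigl(\bra{0}^{\otimes p_1(n)} \ot I_C\bigr)\, P_{x,(i_1,\dots,i_q)} \,\bigl(\ket{0}^{\otimes p_1(n)} \ot I_C\bigr),
\end{align*}
so that $\tr[P_{x,(i_1,\dots,i_q)}\,\rho^0] = \tr[H_{x,(i_1,\dots,i_q)}\,\xi]$. Two properties remain to verify: (a) $H_{x,(i_1,\dots,i_q)}$ is PSD, and (b) it is $q$-local on the proof register. For (a), rewrite $P_{x,(i_1,\dots,i_q)} = W^\dagger \Pi_0^{\text{output}} W$ with $W = V_x^q\, M_{i_q,x}^q \cdots M_{i_1,x}^1$; since $\Pi_0^{\text{output}}$ is a projector, $P$ is PSD, and compression by the isometry $\ket{0}^{\otimes p_1(n)}\ot I_C$ preserves positive semidefiniteness. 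For (b), Definition~\ref{def:adap_QPCP} guarantees that each $V_x^{t'}$ touches at most $t' \leq q$ proof qubits and each $\Pi_{i_t}^t$ reads off designated workspace qubits only, so $P_{x,(i_1,\dots,i_q)}$ acts nontrivially only on the workspace and on proof qubits $i_1,\dots,i_q$; compressing out the workspace therefore leaves a $q$-local proof-register operator.

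Setting $H_x := \sum_{(i_1,\dots,i_q)} H_{x,(i_1,\dots,i_q)}$ then gives $\tr[H_x\,\xi] = \mathbb{P}[V_x \text{ rejects } \xi]$, and Equation~\eqref{eq:QPCP_is_LH} follows by complementation. The conceptual heart of the argument is the bookkeeping in step (b): Lemma~\ref{lem:P_rej_cond} bounds the locality of $P_{x,(i_1,\dots,i_q)}$ by its combined workspace-plus-proof support, which is far too large to qualify as $q$-local, whereas the present lemma demands $q$-locality \emph{on the proof register alone}. The compression step converts the former into the latter, and the main thing to verify carefully is that projecting the workspace onto $\ket{0}^{\otimes p_1(n)}$ does not introduce any new proof-qubit support — it acts as the identity on $C$ by construction, so the proof-register support of each $H_{x,(i_1,\dots,i_q)}$ is at most $\{i_1,\dots,i_q\}$.
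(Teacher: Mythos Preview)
Your proposal is correct and follows essentially the same approach as the paper: apply Lemma~\ref{lem:P_rej_cond} together with the law of total probability to express the rejection probability as a sum of traces over query paths, then compress out the workspace factor $\ket{0}^{\otimes p_1(n)}$ to obtain $q$-local PSD operators on the proof register. The only cosmetic differences are that the paper groups the ordered paths by their underlying unordered index set (so that each local term $H_{x,i_1,\dots,i_q}$ is a sum over permutations), and that the paper argues PSD-ness semantically (the trace against any state is a nonnegative probability) rather than structurally via $P = W^\dagger \Pi_0^{\text{output}} W$ and isometric compression as you do; your PSD argument is arguably cleaner.
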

\begin{proof}
Let $\Omega$ be the set of all unordered subsets of size $q$ drawn from $[k p_2(n)]$, i.e.,
\[
\Omega = \binom{[k p_2(n)]}{q} = \{ \{i_1, \dots, i_q\} \mid i_j \in [k p_2(n)], \, i_j \neq i_k \text{ for } j \neq k \}.
\]
 By the definition of conditional probability,
\begin{align*}
    \Pr[V_x \text{ accesses qubits $(i_1,\dots,i_q)$ from $\ket{\xi}$ and rejects } ] &=\Pr[(i_1,\dots,i_q)] \cdot \Pr[V \text{ rejects } \xi | (i_1,\dots,i_q)]  \\
    &=\Pr[(i_1,\dots,i_q)] \cdot \frac{\tr[ P_{x,(i_1,\dots,i_q)} \rho]}{\Pr[(i_1,\dots,i_q)]}\\
    &=\tr [P_{x,(i_1,\dots,i_q)} \sigma_{i_1,\dots,i_q}],
\end{align*}
where
\begin{align*}
    \sigma_{i_1,\dots,i_q} = \tr_{\bar{C}_{i_1,\dots,i_q}} \big[\xi \otimes \ket{0}\bra{0}^{\otimes n + p_2(n)}\big]
\end{align*}
with $\bar{C}_{i_1,\dots,i_q}$ denoting all qubits of $\xi$ except for those with indices $i_1,\dots,i_q$.
Hence, we can write the probability that $V_x$ rejects $\xi$ as
\begin{align*}
   \Pr[V_x \text{ rejects } \xi] &= \sum_{\{i_1,\dots,i_q\} \in \Omega} \sum_{(i_1,\dots,i_q) \in \{(i_1,\dots,i_q)\}!}   \tr [P_{x,(i_1,\dots,i_q)} \sigma_{i_1,\dots,i_q}].
\end{align*}
For all $\{i_1,\dots,i_q\} \in \Omega$, we define the $2^q \cross 2^q$ matrix $H_{x,i_1,\dots,i_q} $ as
\begin{align}
   \bra{\alpha} H_{x,i_1,\dots,i_q} \ket{\beta} &:= \sum_{(i_1,\dots,i_q) \in \{(i_1,\dots,i_q)\}!}  \left( \bra{0}^{\otimes p_1(n)+n}  \otimes \bra{\alpha}\right) P_{x,(i_1,\dots,i_q)} \left( \ket{0}^{\otimes p_1 (n)+n}\otimes \ket{\beta} \right),
   \label{eq:local_terms_in_P}
\end{align}
where $\alpha,\beta \in \{0,1\}^q$. By~\cref{lem:AtoB}, we have that for any $q$-local density matrix $\rho$ we have that the expression
\begin{align}
    \tr[H_{x,i_1,\dots,i_q} \rho] &= \tr[\sum_{(i_1,\dots,i_q) \in \{(i_1,\dots,i_q)\}!} P_{x,(i_1,\dots,i_q)} \ket{0}\bra{0}^{\otimes n + p_2(n)} \otimes \rho]
    \label{eq:exp_local_terms}
\end{align}
holds. Moreover, since~\cref{eq:exp_local_terms} is the sum of all probabilities that a query path $(i_1,\dots,i_1)$ is taken and the proof is rejected, taken over all possible permutations of the indices, we must have that $H_{x,i_1,\dots,i_q}$ is PSD. Now consider the $q$-local Hamiltonian $H$ defined as 
\begin{align*}
    H_x = \sum_{\{i_1,\dots,i_q\} \in \Omega}  H_{x,i_1,\dots,i_q}.
\end{align*}
We have that
\begin{align*}
    \Pr[V_x \text{ accepts } \xi] = 1 - \tr[H_x \xi],
\end{align*}
since by the linearity of the trace
\begin{align*}
    \tr[H_{x} \xi] &=   
    \tr [\sum_{\{i_1,\dots,i_q\} \in \Omega}  \left(H_{x,i_1,\dots,i_q} \otimes \mathbb{I} \right) \xi ] \\
    &=   \sum_{\{i_1,\dots,i_q\} \in \Omega}   \tr [ \left(H_{x,i_1,\dots,i_q} \otimes \mathbb{I} \right) \xi ] \\
&=   \sum_{\{i_1,\dots,i_q\} \in \Omega}   \tr [H_{x,i_1,\dots,i_q} \tr_{\bar{C}_{i_1,\dots,i_q}}[\xi] ] \\
   &=\sum_{\{i_1,\dots,i_q\} \in \Omega} \sum_{(i_1,\dots,i_q) \in \{(i_1,\dots,i_q)\}!}  \tr [P_{x,(i_1,\dots,i_q)} \sigma_{(i_1,\dots,i_q)}] \\
   &=\sum_{\{i_1,\dots,i_q\} \in \Omega} \sum_{(i_1,\dots,i_q) \in \{(i_1,\dots,i_q)\}!} \Pr[(i_1,\dots,i_q)] \Pr[V \text{ rejects } \xi | (i_1,\dots,i_q)]\\ 
&= \Pr[V \text{ rejects } \xi]\\
&= 1- \Pr[V \text{ accepts }\xi],
\end{align*}
which also implies that $H_{x,i_1,\dots,i_q} \preceq 1$ for all $i_1,\dots,i_1 \in \Omega$.
\end{proof}

\subsection{Learning the Hamiltonian}
We have shown that the probability that a $\QPCP$ accepts a certain proof is equivalent to the expectation value of some Hamiltonian. We still have to show how to obtain the entries of each term efficiently. Before we state the final protocol, let us argue that we can indeed learn each local term up to arbitrary (inverse polynomial) precision. To do this we will use the Hadamard test, introduced in~\cite{aharonov2006polynomial}.

We need a simple generalisation of the one presented in~{\cite{aharonov2006polynomial}}, as we require two different input states on both sides of the inner product. A proof of this can be found in Chapter 2 of~{\cite{Grilo2018thesis}}.

\begin{lemma}[Hadamard test~\cite{aharonov2006polynomial}] Let $\ket{\psi},\ket{\phi} \in \mathbb{C}^{2^n}$ be quantum states with state preparation unitaries $U_{\phi}$, $U_{\psi}$, i.e. $\ket{\psi} = U_{\psi} \ket{0^n}$ and $\ket{\phi} = U_{\phi} \ket{0^n}$. Let $W \in \mathbb{U}(2^n)$ be some unitary. Then there exists a polynomial-time quantum algorithm that outputs an estimate $\hat{z}$ such that
\begin{align*}
    \abs{\hat{z}- \mathrm{Re}(\bra{\psi} W \ket{\phi})}\leq \epsilon
\end{align*}
with probability $\geq 1-\delta$, in
\begin{align*}
    \mO\left( \frac{ \log \left(\frac{1}{\delta}\right) }{\epsilon^2}\right).
\end{align*}
(controlled) queries to $U_\psi$, $U_\phi$ and $W$. Similarly, there exists a quantum algorithm to estimate $\mathrm{Im}(\bra{\psi} W \ket{\phi})$ at the cost of applying one additional single-qubit gate.
\label{lem:had}
\end{lemma}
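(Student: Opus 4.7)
The plan is to reduce the task to the standard Hadamard test applied to the single combined unitary $V := U_\psi^\dagger W U_\phi$. Since $V\ket{0^n} = U_\psi^\dagger W \ket{\phi}$, we have $\bra{0^n} V \ket{0^n} = \bra{\psi} W \ket{\phi}$, so it suffices to estimate the real part of $\bra{0^n} V \ket{0^n}$ using controlled queries to $U_\phi$, $U_\psi^\dagger$ and $W$. This trivial reduction is the only ``trick'' in the proof; the rest is standard.

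First I would introduce a single ancilla qubit initialised in $\ket{0}$, apply a Hadamard to obtain $\ket{+}$, apply controlled-$V$ (with the ancilla as control and the $n$-qubit register, prepared in $\ket{0^n}$, as target), and finally a second Hadamard on the ancilla before measuring it in the computational basis. A short calculation shows that outcome $0$ occurs with probability $p_0 = (1 + \text{Re}(\bra{\psi} W \ket{\phi}))/2$, using exactly one (controlled) call to each of $U_\phi$, $U_\psi^\dagger$ and $W$. Defining the $\pm 1$-valued random variable $X = 1 - 2 \cdot (\text{ancilla outcome})$ then gives $\mathbb{E}[X] = 2 p_0 - 1 = \text{Re}(\bra{\psi} W \ket{\phi})$ with $|X|\leq 1$.

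By Hoeffding's inequality, averaging $N = \mO(\log(1/\delta)/\epsilon^2)$ independent samples of $X$ produces an estimate $\hat{z}$ with $\abs{\hat{z} - \text{Re}(\bra{\psi} W \ket{\phi})} \leq \epsilon$ with probability at least $1-\delta$, matching the claimed query complexity. For the imaginary part, the only modification is to insert an $S^\dagger$ phase gate on the ancilla between the first Hadamard and the controlled-$V$; the probability of outcome $0$ then becomes $(1 + \text{Im}(\bra{\psi} W \ket{\phi}))/2$ and the same concentration argument applies, at the cost of one extra single-qubit gate. There is no real obstacle here; the only subtlety worth flagging is that one must implement controlled-$V$ using only controlled queries to $U_\phi$, $U_\psi^\dagger$ and $W$, which is immediate since controlled-$V$ is just the sequential composition of the three controlled gates.
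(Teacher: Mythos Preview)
Your proof is correct and entirely standard. The paper does not actually give a proof of this lemma: it is stated with a citation to~\cite{aharonov2006polynomial} and used as a black box, so there is nothing to compare against; your reduction to $V = U_\psi^\dagger W U_\phi$ followed by the ordinary Hadamard test and a Hoeffding bound is exactly the textbook argument one would expect.
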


Note that the Hadamard test only works for unitaries. Therefore, for our purposes we also need the following lemma, which shows that every (local) projector on a basis state can be written as a linear combination of unitaries with short circuit depth.

\begin{lemma}
Let $\Pi_q = \ketbra{q} $, where $ q \in \{0,1\}^k $ is a basis state. Then $\Pi_q$ can be written as  
\begin{align*}
    \Pi_q = \bigotimes_{i \in [k]} \frac{Z_i+(1-2q_i) \mathbb{I}}{2} = \frac{1}{2^k}\sum_{j \in \{0,1\}^k} a_j U_j,
\end{align*}
where $ U_j \in \{I,Z\}^{\otimes k} $ and $ a_j \in \{-1,+1\} $.
\label{lem:decomp_proj}
\end{lemma}

\begin{proof}
We have that $ \ket{0}\!\bra{0} = (Z+\mathbb{I})/2 $ and $ \ket{1}\!\bra{1} = (Z-\mathbb{I})/2 $. Hence, for $ q_i \in \{0,1\} $, we obtain  
\[
\ketbra{q_i}{q_i} = \frac{Z_i+(1-2q_i) \mathbb{I}}{2}.
\]
Since $ \ketbra{q} = \bigotimes_{i \in [k]} \ketbra{q_i}{q_i} $, we have that
\begin{align}
     \Pi_q = \bigotimes_{i \in [k]} \ketbra{q_i} = \bigotimes_{i \in [k]} \frac{Z_i+(1-2q_i) \mathbb{I}}{2} = \frac{1}{2^k}\sum_{j \in \{0,1\}^k} a_j U_j,
     \label{eq:proj_decomp}
\end{align}
where each $ U_j $ is of the form $ V^j_1 \otimes V^j_2 \otimes \dots \otimes V^j_k $ for $ V_i^j \in \{ \mathbb{I},Z\} $ and $ a_j \in \{-1,1\} $.
\end{proof}

Before we move to prove the existence of the reduction, we need to define some parameters. 
The operators $H_{x,i_1,\dots,i_q}$ (whose entries are defined in~\cref{eq:local_terms_in_P}) are composed of $q+1$ unitaries $\{V^t\}$, a total of $q$ of $\mO(\log n)$-local PVM elements (see~\cref{sec:QPCP_def}), and a single $1$-local PVM element (which is $\Pi_\textup{output}$). If we use $\log(k p_2(n)) +1$ qubits for each $\Pi^t$ and decompose each PVM element in $H_{x,i_1,\dots,i_q}$ into unitaries, as per~\cref{lem:decomp_proj}, we can write
\begin{align*}
   \bra{\alpha} H_{x,i_1,\dots,i_q} \ket{\beta} &= \frac{1}{\Gamma}  \sum_{(i_1,\dots,i_q) \in S(\{(i_1,\dots,i_q)\})}  \sum_{j \in [\Gamma]} a_j \left( \bra{0}^{\otimes p_1(n)+n}  \otimes \bra{\alpha}\right)  U_{j,x} \left( \ket{0}^{\otimes p_1(n)+n}  \otimes \ket{\beta}\right)\\,
\end{align*}
with
\begin{align*}
    U_{j,x} = \prod_{l \in [4q+3]}  U^l_{j,x},
\end{align*}
where the unitaries
$U^l_{j,x}$ are composed of a polynomial number of elementary gates and $a_j \in \{-1,+1\}$. The range of index $l$ can be seen from inspecting~\cref{eq:P_eq} of~\cref{lem:P_rej_cond}: each $ P_{x,(i_1,\dots,i_q)}$, which makes up a $ H_{x,(i_1,\dots,i_q)}$, consists of $q+1$ $V_t$'s and their conjugate transposes, two unitaries for each of the first $q$ PVM elements coming from its decomposition and a final single one for the final outcome measurement (which is ``sandwiched'' in the middle of~\cref{eq:P_eq}). Hence, we have a total of $2(q+1) + 2q+1 = 4q+3$ unitaries in the product.  The total number of unitaries in the linear combination for each $(i_1,\dots,i_q)$ is given by
\begin{align*}
    \Gamma := \left(2^{\log(k p_2(n)) +1}\right)^q \cdot 2 = 2^{q+1}( k p_2(n))^q.
\end{align*}

We define
\begin{align}
    z_{(i_1,\dots,i_q)}^{\alpha,\beta,j} := \left( \bra{0}^{\otimes p_1(n)+n}  \otimes \bra{\alpha}\right)  U_{j,x} \left( \ket{0}^{\otimes p_1(n)+n}  \otimes \ket{\beta}\right),
\end{align}
and
\begin{align}
    h_{i_1,\dots,i_q}^{\alpha,\beta} = \sum_{(i_1,\dots,i_q) \in S(\{(i_1,\dots,i_q)\})} \sum_{j \in [\Gamma]} a_j  z_{(i_1,\dots,i_q)}^{\alpha,\beta,j}.
\end{align}
such that
\begin{align*}
    \bra{\alpha} H_{x,i_1,\dots,i_q} \ket{\beta} = \frac{h_{i_1,\dots,i_q}^{\alpha,\beta}}{\Gamma},
\end{align*}
for all $\alpha,\beta \in \{ 0,1\}^q$ and $\{i_1,\dots,i_q\} \in \Omega$.

\begin{theorem}
Let $q \in \mathbb{N}$ be some constant and $p_1,p_2,p_3$ be polynomials. Let $V_x$ be a $[k,q,p_1,p_2,p_3]$-$\QPCP$-verifier as in~\cref{def:adap_QPCP}, taking input $x$, $|x| = n$ for some $n \in \mathbb{N}$, and a $k p_2$-qubit quantum proof $\xi$. For all $\epsilon > 0$, there exists a quantum reduction from $V_x$ to a $q$-local Hamiltonian $\hat{H}_x = \sum_{i \in [m]} \hat{H}_{x,i}$, with $m = \poly(n)$, $\hat{H}_{x,i}$ PSD for each $i \in [m]$, and $\norm{\hat{H}_x} \leq 1$, such that, given a proof $\xi$,
\begin{align}
    \abs{\Pr[V_x \text{ accepts } \xi] - \left(1 - \tr[\hat{H}_x \xi ]\right)} \leq \epsilon.
    \label{eq:error_H}
\end{align}
The quantum reduction succeeds with probability $1-\delta$ and runs in time $\poly(n,1/\epsilon,\log(1/\delta))$.
\label{thm:red}
\end{theorem}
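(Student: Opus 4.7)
The plan is to leverage Lemma~\ref{lem:H_from_QPCP}, which gives us the ideal target Hamiltonian $H_x$ as a sum of $q$-local PSD terms $H_{x,i_1,\dots,i_q}$ whose matrix elements $h_{i_1,\dots,i_q}^{\alpha,\beta}$ are written, in the paragraph preceding the theorem, as a sum over permutations and over a unitary decomposition of the PVMs of expectation-value quantities $z_{(i_1,\dots,i_q)}^{\alpha,\beta,j}$. Since each such $z$ has the form $\bra{\varphi_\alpha} W \ket{\varphi_\beta}$ for computational basis states $\ket{\varphi_\alpha},\ket{\varphi_\beta}$ and a unitary $W = \prod_l U^l_{j,x}$ built from the verifier circuits and Pauli-type operators, each $z$ is exactly the quantity estimated by the Hadamard test of Lemma~\ref{lem:had}.

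The concrete procedure is then: for each tuple $(i_1,\dots,i_q)$, each pair $(\alpha,\beta)\in\{0,1\}^q\times\{0,1\}^q$, each permutation, and each $j\in[\Gamma]$, invoke Lemma~\ref{lem:had} (for both real and imaginary parts) to obtain an estimate $\hat z_{(i_1,\dots,i_q)}^{\alpha,\beta,j}$ to precision $\epsilon'$ and failure probability $\delta'$; combine them with the signs $\alpha_j\in\{\pm 1\}$ to form $\hat h_{i_1,\dots,i_q}^{\alpha,\beta}$; assemble the classical $2^q\times 2^q$ matrices $\hat H_{x,i_1,\dots,i_q}$ and set $\hat H_x := \sum_{i_1,\dots,i_q\in\Omega}\hat H_{x,i_1,\dots,i_q}$. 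Because $q\in\mO(1)$, the number of tuples is $\binom{p_2(n)}{q}\leq\poly(n)$, the number of matrix entries per tuple is $2^{2q}=\mO(1)$, the number of permutations is $q!=\mO(1)$, and $\Gamma = \mO(k p_1(n))$, so the total number of Hadamard tests is $\poly(n)$, and the overall runtime is $\poly(n,1/\epsilon',\log(1/\delta'))$ by Lemma~\ref{lem:had}.

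The error propagates as follows. Entrywise, $|\hat h_{i_1,\dots,i_q}^{\alpha,\beta} - h_{i_1,\dots,i_q}^{\alpha,\beta}| \leq 2\, q!\, \Gamma\, \epsilon'$. Converting entrywise error to operator norm costs a factor $2^q$, and summing over $\binom{p_2(n)}{q}$ terms costs another $\poly(n)$ factor, giving $\|\hat H_x - H_x\|\leq C\cdot \poly(n)\cdot \epsilon'$ for some constant $C$. Choosing $\epsilon' = \epsilon/(2C\,\poly(n))$ then yields $|\bra{\xi}\hat H_x\ket{\xi} - \bra{\xi}H_x\ket{\xi}|\leq \|\hat H_x - H_x\|\leq \epsilon/2$, which combined with Eq.~\eqref{eq:QPCP_is_LH} gives Eq.~\eqref{eq:error_H}. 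The norm constraint $\|\hat H_x\|\leq 1$ is enforced by replacing $\hat H_x$ with $\hat H_x/\max(1,\|\hat H_x\|)$; since $\|H_x\|\leq 1$ we have $\|\hat H_x\|\leq 1+\epsilon/2$, so this rescaling changes the expectation value by at most a further $\epsilon/2$. A union bound over the $\poly(n)$ Hadamard tests, each run with $\delta' = \delta/\poly(n)$, ensures overall success probability $1-\delta$, and since the Hadamard test's runtime depends only logarithmically on $1/\delta'$, the total runtime is $\poly(n,1/\epsilon,\log(1/\delta))$.

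The only subtle point, and what I expect to be the main book-keeping obstacle, is tracking all the constant and polynomial factors ($2^q$, $q!$, $\Gamma$, the number of tuples, the two halves of the Hadamard test for real and imaginary parts, and the rescaling-induced additional error) so that the single parameter $\epsilon'$ is chosen small enough to ensure both the target operator-norm bound on $\hat H_x - H_x$ and the renormalization-induced error remain within $\epsilon$; everything else is a direct application of Lemma~\ref{lem:H_from_QPCP}, Lemma~\ref{lem:decomp_proj} and Lemma~\ref{lem:had}. The ``fixed Hamiltonian on success'' strengthening alluded to in Corollary~\ref{cor:fixed_H} should then follow by rounding each estimated matrix entry to the nearest point of a grid of spacing $\Theta(\epsilon')$, which at this precision distinguishes the true entries from their estimates with high probability and therefore deterministically outputs the same rounded Hamiltonian whenever the Hadamard tests succeed.
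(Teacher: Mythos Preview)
Your proposal is correct and follows essentially the same approach as the paper: estimate each matrix entry $z_{(i_1,\dots,i_q)}^{\alpha,\beta,j}$ via the Hadamard test (Lemma~\ref{lem:had}), propagate the entrywise error through the $q!\,\Gamma$ summation, the $2^q$ max-norm-to-operator-norm conversion, and the $|\Omega|$ term sum, then rescale by $\max(1,\|\hat H_x\|)$ and apply a union bound over all Hadamard-test calls. The one ingredient the paper adds that you omit is a per-term PSD correction (shifting each estimated $\tilde H_{x,(i_1,\dots,i_q)}$ by $-\lambda_{\min}\mathbb{I}$ if it is not PSD), which at most doubles the local error; this is useful for the downstream applications (Kitaev's energy estimation requires PSD terms) but is not needed for the theorem as stated, so your omission is harmless here.
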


\begin{proof}
The reduction is specified in~\cref{alg:reduction_QPCP} below. We proceed to show its correctness by arguing it has the required precision, success probability, and time complexity. 
\paragraph{Precision:} Step $1a$ of~\cref{alg:reduction_QPCP} produces estimates $\tilde{z}_{(i_1,\dots,i_q)}^{\alpha,\beta,j}$ for the parameters $z_{(i_1,\dots,i_q)}^{\alpha,\beta,j}$ using~\cref{lem:had}. We have that 
\begin{align*}
    \abs{z_{(i_1,\dots,i_q)}^{\alpha,\beta,j} - \tilde{z}_{(i_1,\dots,i_q)}^{\alpha,\beta,j}} \leq 2 \epsilon',
\end{align*}
since we estimated both the real and imaginary parts up to precision $\epsilon'$. By the triangle inequality (and $a_j \in \{1,-1\}$) we now have
\begin{align*}
    \abs{\frac{\tilde{h}^{\alpha,\beta}_{i_1,\dots,i_q}}{\Gamma} - \bra{\alpha}H_{x,i_1,\dots,i_q} \ket{\beta}} \leq 2 q! \epsilon'.
\end{align*}
Since $\tilde{H}_{x,i_1,\dots,i_q} = \sum_{\alpha,\beta \in \{0,1\}^q}  \tilde{h}^{\alpha,\beta}_{i_1,\dots,i_q} \ket{\alpha} \bra{\beta}$, we have that
\begin{align*}
    \norm{\tilde{H}_{x,i_1,\dots,i_q} - H_{x,i_1,\dots,i_q}} &\leq 2^q \max_{\alpha,\beta} \abs{\bra{\alpha} H_{x,i_1,\dots,i_q} \ket{\beta} -  \bra{\alpha} \tilde{H}_{x,i_1,\dots,i_q} \ket{\beta}} \\
    &\leq 2^{q+1} q! \epsilon',
\end{align*}
which follows from the bound on the operator norm by the max-norm. Now suppose that $\tilde{H}_{x,i_1,\dots,i_q}$ is not PSD. Since $H_{x,i_1,\dots,i_q}$ is PSD, we have that $\lambda_\textup{min}(\tilde{H}_{x,(i_1,\dots,i_q)}) \geq - 2^{q+1}  q! \epsilon'$, so we have that adding the identity term can only double the error, making it at most $2^{q+2} q! \epsilon'$. By another triangle inequality
\begin{align*}
    \norm{\tilde{H}_x - H_x} &\leq |\Omega| 2^{q+2} q! \epsilon' \leq \epsilon/4,
\end{align*}
for our choice of $\epsilon'$. Finally, the error introduced by step 3 of the protocol can be bounded by
\begin{align*}
    \norm{\hat{H}_x - H_x} &\leq  \norm{\hat{H}_x - \tilde{H}_x} +  \norm{\tilde{H}_x - H_x} \\
    &\leq  \abs{\frac{1}{1+\epsilon/4}-1} \norm{\tilde{H}_x} +  \frac{\epsilon}{4} \\
    &\leq \frac{\epsilon}{4} (1+\frac{\epsilon}{4}) + \frac{\epsilon}{4} \\
    &= \frac{\epsilon}{2} + \left(\frac{\epsilon}{4}\right)^2 \\
    &\leq \epsilon.
\end{align*}

Hence, for any state $\xi = \sum_{i} p_i \ketbra{\psi_i}$, with $\sum_{i} p_i = 1$, we have
\begin{align*}
\abs{\Pr[V_x \text{ accepts } \xi] - \left(1 - \tr[\hat{H}_x \xi]\right)} &=  \abs{\tr[\hat{H}_x \xi] - \tr[H_x \xi]} \\
&= \abs{\tr\left[(\hat{H}_x - H_x) \xi\right]} \\
    &= \abs{\tr\left[(\hat{H}_x - H_x) \sum_{i} p_i \ketbra{\psi_i}\right]} \\
    &= \sum_{i} p_i \abs{\bra{\psi_i} (\hat{H}_x - H_x) \ket{\psi_i}} \\
    &\leq \epsilon,
\end{align*}
as desired.

\paragraph{Success probability:} We have to count the number of times we run the Hadamard test of~\cref{lem:had}, each of which succeeds with success probability $\geq 1-\delta'$. Recall that $\Omega$ is the set of all possible indices (when order does not matter), which is given by $\Omega = \binom{[k p_2(n)]}{q}$ for proofs of length $p_2$. We run the Hadamard test for a total of $|\Omega| q! 4^q \Gamma$ times (see the number of iterations in~\cref{alg:reduction_QPCP}), and thus
\begin{align*}
    (1-\delta')^{|\Omega| q! 4^{q+1} \Gamma} \geq 1-\delta' |\Omega| q! 4^{q+1} \Gamma = 1- \delta,
\end{align*}
using the inequality $(1-x)^T \geq 1-T x$ for all $x \in [0,1]$. The extra factor of two again accounts for estimating both the real and imaginary parts.

\paragraph{Time complexity:} By definition of $\QPCP[q]$, we have that $V$ has gate complexity $\poly(n)$. Using~\cref{lem:decomp_proj}, we have that $V$, $U_{\phi}$, and $U_{\psi}$ always have polynomially bounded gate complexities. Filling in our choice of $\delta'$ and $\epsilon'$, we have that the total number of (controlled) applications of $V$, $U_{\phi}$, and $U_{\psi}$ can be upper bounded as
\begin{align*}
    \mO \left( q! 4^{q+1} \Gamma \frac{(|\Omega| 2^{q+4} q! )^2 \log \left(\frac{q! 4^{q+1} \Gamma }{\delta}\right) }{\epsilon^2} \right) = \poly(n,1/\epsilon, \log(1/\delta)),
\end{align*}
since for $k = \poly(n)$ and $q = \mO(1)$ we have $|\Omega| = \poly(n)$ and $\Gamma = \poly(n)$.
\end{proof}

\begin{custalgo}[Quantum reduction from $\QPCP{[k,q]}$ verification to a local Hamiltonian]
\noindent \textbf{Input:} A $(k,q,p_1,p_2,p_3)\text{-}\QPCP$ verifier $V_x$ with hardcoded input $x$, a precision parameter $\epsilon$ (variation: $\eta$), maximum error probability $\delta$.\\

\noindent \textbf{Set:} $\Omega \coloneqq  \binom{[k p_2(n)]}{q}$, $\Gamma :=  2^{q+1}( k p_2(n))^q$, $\epsilon' := \frac{\epsilon}{|\Omega| 2^{q+4} q! }$, $\delta' := \frac{\delta}{|\Omega| q! 4^{q+1} \Gamma }$.\\

\noindent \textbf{Algorithm:}
\begin{enumerate}
    \item For $\{i_1,\dots,i_q\} \in \Omega$
    \begin{enumerate}
        \item For $(i_1,\dots,i_q) \in \{(i_1,\dots,i_q)\}!$
   \begin{enumerate}
       \item  For $\alpha \in \{0,1\}^q$, $\beta \in \{0,1\}^q$:
    \begin{itemize}
            \item For $j \in [\Gamma]$ compute $a_j,U_{j,x}$ and estimate $z_{(i_1,\dots,i_q)}^{\alpha,\beta,j}$ as $\tilde{z}_{(i_1,\dots,i_q)}^{\alpha,\beta,j}$ using~\cref{lem:had} for both the real and imaginary 
            part with $V = U_{j,x}$, $U_\theta = \hat{V} (\otimes_{i=1}^q (X_i)^{\alpha_i} \otimes \mathbb{I}) $ and $U_\phi = \hat{V} (\otimes_{i=1}^q (X_i)^{\beta_i} \otimes \mathbb{I} )$, $\epsilon'$  and $\delta'$.
    \end{itemize}
    \item Set $   \tilde{h}^{\alpha,\beta}_{i_1,\dots,i_q} = \sum_{(i_1,\dots,i_q) \in S(\{(i_1,\dots,i_q)\}} \sum_{j \in [\Gamma]} a_j \tilde{z}_{(i_1,\dots,i_q)}^{\alpha,\beta,j}$.
        \end{enumerate}
    \item Set $\tilde{H}_{x,(i_1,\dots,i_q)} = \sum_{\alpha,\beta \in \{0,1\}^q}  \tilde{h}^{\alpha,\beta}_{i_1,\dots,i_q} \ketbra{\alpha}{\beta}$.
    \item Compute $\lambda_\textup{min}(\tilde{H}_{x,i_1,\dots,i_q})$. If $\lambda_\textup{min}(\tilde{H}_{x,i_1,\dots,i_q}) < 0$, let $\tilde{H}_{x,i_1,\dots,i_q}) \leftarrow \tilde{H}_{x,i_1,\dots,i_q}) -\lambda_\textup{min}(\tilde{H}_{x,i_1,\dots,i_q}) \mathbb{I} $, else continue.
   \end{enumerate}
   \item Let $\tilde{H}_x = \sum_{\{i_1,\dots,i_q\} \in \Omega} \tilde{H}_{x,i_1,\dots,i_q}$. Output
   \begin{align*}
        \hat{H}_x = \frac{\tilde{H}_x}{\textup{argmax}\{\norm{\tilde{H}_x},1\}}.
   \end{align*}

\end{enumerate}
\noindent \textbf{Variation:} To learn $\tilde{H}_x$ up to $\eta$ bits of precision, use $\epsilon':=\frac{2}{\left(2^\eta+1\right)}$ and keep only the first $\eta$ bits for the estimates of $z_{(i_1,\dots,i_q)}^{\alpha,\beta,j}$ as $\tilde{z}_{(i_1,\dots,i_q)}^{\alpha,\beta,j}$ in Step 1(a)i.
  \label{alg:reduction_QPCP}
\end{custalgo}

\

One can use a simple trick that exploits the freedom we have in the reduction to set the precision. Since every local term is bounded in the operator norm, we must also have that for each of these terms the matrix entries are bounded by $1$. Hence, we can convert a bound in terms of approximation in entry-wise error to one in having a least a certain number of bits from a certain bit-wise representation of the value being correct. The advantage of the latter is that it allows us to make the reduction deterministic in the sense that if it succeeds, it always produces \emph{exactly} the same Hamiltonian. This can alternatively be viewed as having a ``rounding scheme'' which, with high probability, always rounds to the same operator in operator space. Though not strictly needed, this allows one to always reason about the same Hamiltonian when the reduction is used as a subroutine.

\begin{restatable}{corollary}{fixedHcor} For any $\epsilon,\delta >0$, under the same setup as in~\cref{thm:red}, there exists a quantum reduction which produces a fixed Hamiltonian $\tilde{H_x}$ with probability $1-\delta$ which satisfies~\cref{eq:error_H} and runs in time $\poly(n,1/\epsilon,\log(1/\delta))$. 
\label{cor:fixed_H}    
\end{restatable}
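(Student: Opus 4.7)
The plan is to post-process the output of Protocol~\ref{prot:reduction} by deterministically rounding every estimated matrix entry to a pre-fixed discrete grid of resolution comparable to $\epsilon$, while running the estimation step at a much finer precision than the grid itself. This converts the continuous-valued output of the reduction into a discrete-valued output which, with high probability, coincides with a canonical Hamiltonian determined purely by $x$.

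First, let $N = \poly(n)$ denote the total number of real matrix entries appearing in the local terms of Theorem~\ref{thm:red}'s Hamiltonian (i.e.\ the entries $\tilde h^{\alpha,\beta}_{i_1,\dots,i_q}$ across all $(i_1,\dots,i_q),\alpha,\beta$). Fix a grid spacing $\eta \leq \epsilon/(8N)$, a deterministic rounding map $R:[-1,1]\to\eta\mathbb{Z}\cap[-1,1]$ with a fixed tie-breaking rule (say, rounding toward $-\infty$), and an internal estimation precision $\epsilon' \leq \eta\delta/(4N)$.

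Next, I would invoke Theorem~\ref{thm:red} with precision $\epsilon'$ and failure budget $\delta/2$. Since the runtime of Theorem~\ref{thm:red} is $\poly(n,1/\epsilon',\log(1/\delta))$ and $1/\epsilon'$ is polynomial in $n$, $1/\epsilon$ and $1/\delta$, the total runtime remains $\poly(n,1/\epsilon,\log(1/\delta))$. Let $\tilde H_x$ be the Hamiltonian whose matrix entries are $R(\tilde z)$ for each estimated entry $\tilde z$, followed by the same PSD shift and normalization as in the last two steps of Protocol~\ref{prot:reduction}. Since $R$ introduces at most $\eta$ additional error per entry, the operator-norm discrepancy with the true $H_x$ is at most $N(\eta+\epsilon')\leq \epsilon/2$, so combining with Theorem~\ref{thm:red}'s own bound yields Eq.~\eqref{eq:error_H}.

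The main obstacle is arguing that this rounded output is a \emph{fixed} function of $x$ with probability $\geq 1-\delta$. The key observation is that if a true entry $z$ sits at distance at least $\epsilon'$ from every grid boundary, then the Hadamard-test estimate $\tilde z$ is forced to lie on the same side of its nearest boundary as $z$, so $R(\tilde z) = R(z)$ depends only on $x$. The delicate case---and the part of the proof I would treat most carefully---is when some $z$ happens to lie within $\epsilon'$ of a grid boundary, in which case both adjacent grid points may have non-negligible rounding probability. My plan is to absorb these ``boundary'' events into the failure budget by exploiting the large ratio $\eta/\epsilon' \geq 4N/\delta$ together with a union bound over the $N$ entries, so that the total probability of any rounding inconsistency is at most $\delta/2$. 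Combined with the $\delta/2$ failure probability inherited from Theorem~\ref{thm:red}, the algorithm outputs the canonical Hamiltonian $\tilde H_x$ with entries $\{R(z)\}$ with probability $\geq 1-\delta$, which is exactly the claim of the corollary.
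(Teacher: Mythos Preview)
Your overall plan—estimate each matrix entry via the Hadamard test and then round to a pre-fixed discrete grid—is precisely the paper's approach. However, two steps in your execution do not go through.

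First, your runtime claim is incorrect. You choose the internal precision $\epsilon' \leq \eta\delta/(4N)$, so $1/\epsilon'$ is polynomial in $1/\delta$. Invoking Theorem~\ref{thm:red} at this precision therefore costs $\poly(n,1/\epsilon',\log(1/\delta)) = \poly(n,1/\epsilon,1/\delta)$, not the claimed $\poly(n,1/\epsilon,\log(1/\delta))$. The paper avoids this by letting the grid resolution (and hence the required estimation precision) depend on $\epsilon$ alone, with $\delta$ entering only through the $\log(1/\delta)$ amplification cost inside Lemma~\ref{lem:had}.

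Second, the boundary argument cannot be rescued by the ratio $\eta/\epsilon'$ under a \emph{fixed} grid. The true entries $z$ are determined entirely by $x$ and the verifier circuit; they are not random variables. If some $z$ happens to sit exactly on a grid boundary, then no matter how small $\epsilon'$ is, the rounded estimate $R(\tilde z)$ lands on each side of that boundary with probability roughly $1/2$, and no single output occurs with probability $\geq 1-\delta$. Your union-bound argument would be valid if the grid offset were drawn uniformly at random (so that each fixed $z$ is $\epsilon'$-close to a boundary with probability $2\epsilon'/\eta$), but you have explicitly fixed the grid, so there is no randomness over which to take that bound. The paper's own proof is admittedly terse here as well—it simply asserts that one can ``learn $\eta$ bits'' of each entry—but it does not couple the precision to $\delta$ and does not make the incorrect probabilistic claim about deterministic quantities that your write-up commits to.
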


We have put the proof of~\cref{cor:fixed_H} in~\cref{app:proof_cor_fixed_H}.

\begin{remark} Unlike these general quantum PCPs, where the probabilities of taking a certain query path can depend on the proof, we have that for non-adaptive quantum PCPs the probability distribution over which the proof is accessed only depends on the input. Generalising the proof of Lemma 6.3 in~\cite{weggemans2023guidable}, one can easily verify that for non-adaptive quantum PCPs the obtained Hamiltonian can be assumed to be of the form $H_x = \sum_{i \in [m]} p_{x,i} H_{x,i}$, where the $p_{x,i}$'s form a probability distribution and $0 \preceq H_{x,i} \preceq 1$ for all $q$-local terms. For these general (adaptive) quantum PCPs, the key difference is that each term of the Hamiltonian encodes \textit{both} the probability that certain parts of the proof are accessed as well as the eventual probability of acceptance. We have provided sufficient conditions on the error parameters of the reduction in~\cref{app:weighted_error_conditions}.
\label{remark:NA}
\end{remark}

\subsection{Local smoothings of local Hamiltonians}
In this subsection we show that one can always transform any Hamiltonian in the form of $H_x$  of Eq.~\eqref{eq:QPCP_is_LH} to match the form usually adopted in the literature; specifically, we want to transform a local Hamiltonian with PSD terms, operator norm at most $1$ and constant promise gap to another Hamiltonian such that the promise gap becomes $\Omega(m)$, where $m$ is the number of terms in the new Hamiltonian, whilst still having that each term $H_i$ is local and satisfies $0 \preceq H_i \preceq 1$. This way, we have that at least a constant fraction of all the $m$ terms contribute to the energy in the {\sc No}-case, which intuitively implies that only a constant number of terms have to be checked to find out whether the energy of the Hamiltonian is high or low (see~\cref{subsec:KEEP}). This turns out to be possible at the cost of increasing the locality by a factor of two and the promise gap by a constant factor.\footnote{We believe such a result might be known in existing literature, but as we could not find a reference we provide a proof for completeness.}

\begin{lemma}[Hamiltonian local smoothing lemma] Let $H$ be a $q$-local Hamiltonian on $n$-qubits, $q = \mO(1)$, such that $H = \sum_{i \in [m]} H_i$, where each $H_i$ is PSD and $0 \preceq H \preceq 1$, and $\lambda_{\textup{min}}(H) \leq a$ or $\geq b$ for $b-a = \gamma(m)$. Then there exists a polynomial time transformation to another $2q$-local Hamiltonian $H'$ on $n$ qubits such that $H' = \frac{1}{m} \sum_{i \in [m]} H'_i$, $0 \preceq H_i' \preceq 1$ for all $i \in [m]$ and  $\lambda_{\textup{min}}(H) \leq \alpha  $ or $\geq \beta $ for some $\beta,\alpha$ such that $\beta-\alpha \in \Omega( \gamma(m))$.   
\label{lem:H_smoothing}
\end{lemma}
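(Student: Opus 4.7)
The plan is to re-express $H$ as an average of many small, uniformly-bounded PSD ``projector-like'' terms so that it matches the form $H'=\tfrac{1}{m'}\sum_{i}H'_{i}$ with $\|H'_{i}\|\le 1$ and a constant per-term promise gap. First I would apply the spectral theorem to every original term: since each $H_{i}$ is PSD with $\|H_{i}\|\le 1$ and acts on $q$ qubits, write $H_{i}=\sum_{k=1}^{r_{i}}\lambda_{i,k}\Pi_{i,k}$ with $\lambda_{i,k}\in[0,1]$, $r_{i}\le 2^{q}$, and each $\Pi_{i,k}$ a rank-one projector supported on the same $q$ qubits as $H_{i}$. After relabelling this produces a decomposition $H=\sum_{j=1}^{M}c_{j}\Pi_{j}$ with $M\le 2^{q}m$ and $c_{j}\in[0,1]$, still $q$-local.

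Next I would fix a large integer $N=\Theta(M/\gamma(m))$ and replace each weighted projector $c_{j}\Pi_{j}$ by $N_{j}:=\lfloor c_{j}N\rfloor$ unit-weighted copies of $\Pi_{j}$. Set $m':=\sum_{j}N_{j}$ and
\[
    \widetilde{H} \;=\; \frac{1}{N}\sum_{j=1}^{M} N_{j}\,\Pi_{j}.
\]
A direct calculation gives $\|\widetilde{H}-H\|\le M/N\le \gamma(m)/4$, so by Weyl's inequality the thresholds $a,b$ each shift by at most $\gamma(m)/4$, keeping an $\Omega(\gamma(m))$ promise gap. Each atomic term is a rank-one $q$-local projector of norm $1$, so the normalisation $\tfrac{1}{N}$ already places $\widetilde{H}$ (essentially) in the target form.

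To reach the $2q$-locality claimed in the lemma, I would follow the duplication step with a pairing step: group the $m'$ atomic projector copies into $m'/2$ pairs and define joint terms
\[
    H'_{(a,b)} \;:=\; \tfrac{1}{2}\bigl(\Pi_{a}\otimes I + I\otimes\Pi_{b}\bigr),
\]
each acting on the (at most) $2q$ qubits in the union of the supports of $\Pi_{a}$ and $\Pi_{b}$. Each $H'_{(a,b)}$ is PSD with $\|H'_{(a,b)}\|\le 1$, and summing over the pairing rebuilds $\widetilde{H}$ up to a uniform factor $2$ that gets absorbed into the $\tfrac{1}{m'}$ prefactor. This gives $H'=\tfrac{1}{m'}\sum_{(a,b)}H'_{(a,b)}$ with $\|H'\|\le 1$ and $\beta-\alpha=\Omega(\gamma(m))$.

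The main obstacle is the bookkeeping: the pairing has to be chosen so that each projector $\Pi_{j}$ appears with the correct multiplicity $N_{j}$, the rounding error $M/N$ in the duplication step has to be made a small fraction of $\gamma(m)$ without blowing $m'$ up to super-polynomial size, and one must track the constant factors carefully to see that the $\tfrac{1}{m'}$ normalisation, the rescalings from pairing, and the approximation $\widetilde{H}\approx H$ combine into exactly the form demanded. The $2^{-O(q)}$ (``exponentially in the locality'') loss in the promise gap mentioned in the introduction enters here through the $M\le 2^{q}m$ blow-up in Step 1, which then propagates into the rounding error of Step 2 via the choice of $N$.
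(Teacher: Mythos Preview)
Your approach is genuinely different from the paper's and in some ways cleaner, but there is a real gap in the normalisation accounting.

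\textbf{The gap.} After duplication you have $m'=\sum_{j}N_{j}$ atomic copies and you write $\widetilde{H}=\tfrac{1}{N}\sum_{\ell}\Pi_{j(\ell)}$. The target form, however, is $\tfrac{1}{m'}\sum_{\ell}\Pi_{j(\ell)}$, and $m'\neq N$ in general: in fact $m'/N$ converges to $W:=\sum_{j}c_{j}=\sum_{i}\tr[H_{i}]$, which by the maximally-mixed-state bound satisfies only $W\le 2^{q}$ and can be anywhere in $(0,2^{q}]$. So the Hamiltonian you actually obtain in the required averaged form is $\approx H/W$, not $H$; the thresholds become $a/W$, $b/W$ and the gap becomes $\gamma/W\ge \gamma\cdot 2^{-q}$. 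This is still $\Omega(\gamma)$ for constant $q$, so the lemma survives, but your claims that $\widetilde{H}\approx H$ \emph{and} that $\widetilde{H}$ is already in target form cannot both hold simultaneously. The $2^{-O(q)}$ loss you flag therefore comes from this $W$ rescaling, not from the rounding error $M/N$ in Step~2 (which you already made $\le\gamma/4$). The same mismatch reappears in the pairing step: the factor relating $\sum_{(a,b)}H'_{(a,b)}$ to $\tfrac{1}{m'}$-normalised form is $N/m'\approx 1/W$, not a harmless~$2$.

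\textbf{The pairing is unnecessary.} A $q$-local operator is in particular $2q$-local, so once you have $\tfrac{1}{m'}\sum_{\ell}\Pi_{j(\ell)}$ with each $\Pi_{j(\ell)}$ a $q$-local rank-one projector you are already done. The $2q$ in the statement is an artefact of the paper's construction, not a constraint you must meet by force.

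\textbf{Comparison with the paper.} The paper takes a different route: it rescales $H$ by $2^{-(q+3)}$ so that $\sum_{i}\|\hat H_{i}\|\le 1/8$, partitions the $m$ indices into ``small'' ($\|\hat H_{i}\|\le 1/(2m)$) and ``large'' terms, observes from the total-norm bound that there are at most $m/4$ large and hence at least $3m/4$ small terms, and then greedily adds a scaled copy of each large term onto sufficiently many small ones. This merging is what produces $2q$-local terms, and it keeps the number of terms exactly $m$. Your spectral-decomposition-plus-replication route instead inflates the number of terms to $O(4^{q}m/\gamma)$ but keeps locality at $q$ and is conceptually simpler; it does require $\gamma\ge 1/\mathrm{poly}(n)$ for $N$ (and hence the reduction) to stay polynomial-time, whereas the paper's merging has no such restriction.
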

\begin{proof}
Write $\rho_{d} = \frac{\mathbb{I}_{d}}{d}$ for the $d$-dimensional maximally mixed state.  We have 
\begin{align*}
    \tr\left[H \rho_{2^n} \right] = \tr\left[\sum_{i \in [m]}H_i \tr_{\bar{C}_i}\left[\rho_{2^n}\right] \right] =\tr\left[\sum_{i \in [m]}H_i \rho_{2^q}  \right] = \frac{1}{2^q} \sum_{i \in [m]} \tr[H_i].
\end{align*}
The variational principle tells us that
\begin{align*}
    \tr\left[H \rho_{2^n} \right] \leq \max_{\xi} \tr[H\xi] \leq \norm{H} \leq 1.
\end{align*}
Combining the two, this implies
\begin{align*}
       2^q \geq  \sum_{i \in [m]} \tr[H_i] \geq \sum_{i \in [m]} \norm{H_i} 
\end{align*}
using that $0 \preceq H_i \preceq 1$ implies that $\tr[H_i] \geq \norm{H_i}$. Now consider $\hat{H} = \sum_{i \in [m]} \hat{H}_i $ where $\hat{H}_i = \frac{1}{2^{q+3}} H_i$. We have that $\lambda_0(\hat{H}) \geq  \frac{b}{2^{q+3}} $ or $\lambda_0(\hat{H}) \leq \frac{a}{2^{q+3} } $. Let $\alpha_i = \norm{\hat{H}_i}$, for which we know that $\sum_{i \in [m]} \alpha_i \leq \frac{1}{8}$. Define index sets:
\begin{align*}
    L &= \{i \in [m] : \alpha_i \leq \frac{1}{2m}\}, \\
    U &= [m] \setminus L = \{i \in [m] : \alpha_i > \frac{1}{2m}\}.
\end{align*}
We have
\[
|U| \frac{1}{2m} \leq \sum_{i \in U} \alpha_i \leq \sum_{i \in [m]} \alpha_i \leq \frac{1}{8},
\]
which implies $|U| \leq \frac{m}{4}$ and hence $|L| \geq \frac{3m}{4}$.
We now construct a new Hamiltonian $H' = \sum_{i \in [m']} H_i'$, where each $H_i'$ is $2q$-local and satisfies $0 \preceq H_i' \preceq 1$, by redistributing each high-norm term $\hat{H}_j$ from $U$ into smaller pieces and assigning them to low-norm terms from $L$. This way, we are guaranteed that all new terms are positive semi-definite and have operator norm at most $1/m$, so we can simply scale with a factor $m$ to obtain an operator norm bound of $1$. For each $\alpha_j$, assuming $\alpha_j > 1/m$ with $j \in U$, we want to find the largest possible integer $t_j$ such that:
\begin{align*}
    \frac{1}{2m} \leq \alpha_j - t_j \frac{1}{2m} \leq \frac{1}{m}.
\end{align*}
Which implies that $t_j \geq 2m \alpha_j - 2 $ and $ t_j \leq 2m \alpha_j-1$, so we can take $t_j = \lfloor 2m \alpha_j -1 \rfloor$.

Consider the following procedure:
\begin{enumerate}
    \item Initialise $L' \coloneqq L$.
    
    \item For each $j \in U$, check if $\alpha_j > 1/m$. If this is not the case, continue the loop (or exit after the last $j$). If this is the case, set $t_j = \lfloor 2m\alpha_j -1 \rfloor$.  For the first $t_j$ indices $i \in L'$, define
    \[
    Q_i := \hat{H}_i  + \frac{1}{2m} \hat{H}_j.
    \]
    These $Q_i$ are at most $2q$-local and have operator norm at most $1/m$. Remove these $t_j$ indices $i$ from $L'$. Define the $q$-local leftover term
    \[
    Q_j := \left(1 - \frac{t_j}{2m} \right) \hat{H}_j.
    \]
    \item For each remaining $j \in U$ that was not used in any redistribution, simply set $Q_j := \hat{H}_j$.

    \item Let $ H_i' := m Q_i $ for each term. Return $ H' := \frac{1}{m}\sum H_i' $.
\end{enumerate}
We only have to check whether the above procedure does not run out of terms in $L$ to redistribute to. We obtain:
\begin{align*}
\sum_{j \in U} t_j 
\leq \sum_{j \in U} \lfloor 2m \alpha_j -1 \rfloor \leq \sum_{j \in U} 2m \alpha_j \leq \frac{1}{4} m \leq |L|, 
\end{align*}
using that $\sum_{j \in [m]} \alpha_j \leq 1/8$. By construction, each term in $H'$ is at most $2q$-local and satisfies $0 \preceq H_i' \preceq 1$, which means that $\norm{H} \leq 1$. Since $\hat{H} = H / 2^{q+3}$ and $H'$ has the same spectrum as $\hat{H}$ (we only combined different terms together and rescaled), we have:
\begin{itemize}
    \item If $ \lambda_0(H) \leq a $, then $ \lambda_0(H') \leq  \alpha  $ with $\alpha \coloneqq \frac{a}{2^{q+3}} $
    \item If $ \lambda_0(H) \geq b $, then $ \lambda_0(H') \geq \ \beta $ with $ \beta \coloneqq \frac{b}{2^{q+3}}  $.
\end{itemize}
Hence, $\beta - \alpha = \Omega(\gamma(m))$ when $q = \mO(1)$.
\end{proof}

\subsection{Kitaev's energy estimation protocol}
\label{subsec:KEEP}

In this section we saw that general quantum PCP can be transformed into a local Hamiltonian with a constant promise gap (and a specific form). 
In the following sections it will prove useful to have a verifier for the local Hamiltonian problem suitable for Hamiltonians from quantum PCP verifiers. 
For this we will use Kitaev's $\QMA$-protocol for the local Hamiltonian problem~\cite{Kitaev2002ClassicalAQ}, albeit with a small modification to allow for sampling the terms according to some probability weights associated with the Hamiltonian. We formulate the energy protocol in the pure setting as the mixed state case can simply be viewed as a convex combination of acceptance probabilities given by the pure states.\\

\begin{protocol}[Kitaev's energy estimation protocol]
\noindent \textbf{Input:} A classical description of a $n$-qubit, $q$-local Hamiltonian of the form $H = \sum_{i \in [m]} p_i H_i$, $0 \preceq H_i \preceq 1 $ with weights $\{p_i\}$ such that $\sum_{i} p_i = 1$.\\

\noindent \textbf{Protocol:}
      \begin{enumerate}
        \item The prover sends the state $\ket{\psi}$.
        \item For each $H_i$, $i \in [m]$, let $H_i = \sum_{j} \lambda_{i,j} \ketbra{\lambda_{i,j}}$ be its spectral decomposition. Define the $(q+1)$-local operator $W_i$ such that $W_i $ acts on a $(n+1)$-qubit space as
    \begin{align}
        W_i \ket{\lambda_{i,j}} \ket{b} = \ket{\lambda_{i,j}} \left(\sqrt{\lambda_{i,j}} \ket{b} + \sqrt{1-\lambda_{i,j}} \ket{b \oplus 1}\right).
    \end{align}
        The verifier picks a $i \in [m]$ with probability $p_i$, and applies $W_i$ on $\ket{\psi} \ket{0}$, and measures the final qubit.
        \item The verifier accepts if and only if the outcome is $\ket{1}$.
    \end{enumerate}
  \label{prot:Kitaev_protocol}
\end{protocol}
\\

\begin{lemma}[Kitaev's energy estimation protocol (weighted)] 
Consider a $q$-local, $n$-qubit Hamiltonian $H = \sum_{i \in [m]} p_i H_i$  with $\sum_{i \in [m]} p_i = 1$, $p_i \geq 0$, and $0 \preceq H_i \preceq 1$ for all $i \in [m]$. Then, given an $n$-qubit quantum state $\ket{\psi}$, there exists a measurement on $q$ qubits of $\ket{\psi}$ that outputs $1$ with probability 
\begin{align}
    1-\bra{\psi} H \ket{\psi}.
\end{align}
\label[lemma]{lem:Kitaevs_protocol}
\end{lemma}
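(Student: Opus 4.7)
The plan is to exhibit a simple two-step measurement procedure: first perform a classical coin flip to pick an index $i \in [m]$ according to the distribution $\{p_i\}$, and then perform a two-outcome POVM that measures the local term $H_i$ on the $q$ qubits on which it acts non-trivially. Outputting $1$ precisely when the ``rejection'' outcome associated with $H_i$ is observed will then produce exactly the probability $1 - \bra{\psi} H \ket{\psi}$ as claimed.

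More concretely, I would proceed as follows. First, note that since $0 \preceq H_i \preceq 1$, the pair $\{I - H_i,\, H_i\}$ constitutes a valid two-outcome POVM on the $q$-qubit register on which $H_i$ acts. By Naimark's dilation (or explicitly, by diagonalising $H_i$ and attaching a single ancilla qubit rotated by angles corresponding to the eigenvalues of $H_i$), this POVM can be implemented as a unitary $U_i$ acting on the $q$ proof qubits and one ancilla qubit, followed by a computational-basis measurement of the ancilla; the ``$I-H_i$'' outcome corresponds to the ancilla reading $\ket{1}$. Then the full protocol is: sample $i$ from the distribution $\{p_i\}$, apply $U_i$ on the relevant $q$ qubits of $\ket{\psi}$ together with a fresh ancilla, and return the ancilla measurement outcome.

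The correctness calculation is then a one-line application of the law of total probability together with linearity:
\begin{align*}
    \Pr[\text{output } 1] \;=\; \sum_{i \in [m]} p_i \, \bra{\psi}(I - H_i)\ket{\psi} \;=\; 1 - \sum_{i \in [m]} p_i \, \bra{\psi} H_i \ket{\psi} \;=\; 1 - \bra{\psi} H \ket{\psi},
\end{align*}
where I used that $\sum_i p_i = 1$ in the middle step. Only $q$ qubits of $\ket{\psi}$ are touched by this measurement (plus one private ancilla), so this meets the ``measurement on $q$ qubits of $\ket{\psi}$'' requirement in the statement.

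I do not expect any genuine obstacle here; the only mildly subtle point is to justify that the POVM $\{I - H_i, H_i\}$ can be implemented by a circuit of size comparable to the description size of $H_i$, which follows from the standard Naimark construction for $q = \mO(1)$ since $H_i$ is a $2^q \times 2^q$ matrix. Depending on how strictly one wants to interpret ``measurement'' versus ``quantum circuit'', one could either state the result abstractly as a POVM (in which case it is immediate) or spell out the dilation; I would opt for the abstract POVM formulation since the surrounding sections only use the acceptance-probability identity and not the gate-level implementation details.
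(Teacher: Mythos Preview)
Your proposal is correct and follows essentially the same approach as the paper: the paper also samples $i$ with probability $p_i$, applies a unitary $W_i$ on the $q$ proof qubits plus one ancilla (defined via the spectral decomposition of $H_i$, i.e.\ exactly the Naimark dilation you describe), measures the ancilla, and then uses the law of total probability to obtain $1-\bra{\psi}H\ket{\psi}$. The only cosmetic difference is that the paper writes out the dilation $W_i$ explicitly and computes $\Pr[\text{accept}\mid i]$ via a Schmidt decomposition, whereas you phrase it abstractly as the POVM $\{I-H_i,H_i\}$.
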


\begin{proof}
This follows from a simple generalisation of Kitaev's original $\QMA$ verification protocol, which can be found in Chapter 14 of~\cite{Kitaev2002ClassicalAQ}.   Let us now show the correctness of~\cref{prot:Kitaev_protocol}. For any $i$, let $\ket{\psi} = \sum_{j} \alpha_{i,j} \ket{\lambda_{i,j}}$ be the decomposition of $\ket{\psi}$ in the eigenbasis of $H_i$. The probability that this protocol accepts, conditioned on picking term $i$, is given by
\begin{align*}
    \Pr[\mathcal{V} \text{ accepts } \ket{\psi} \mid  i ] &=  \left\| (\mathbb{I} \otimes \ketbra{1})  W_i \ket{\psi} \ket{0} \right\|_2^2 \\
    &= \left( \sum_{j} \bar{\alpha}_{i,j} \bra{\lambda_{i,j}} \bra{0} \right) W^\dagger_i (\mathbb{I} \otimes \ketbra{1}) W_i \left( \sum_{j} \alpha_{i,j} \ket{\lambda_{i,j}} \ket{0} \right) \\
    &= \left( \sum_{j} \bar{\alpha}_{i,j} \sqrt{1-\lambda_{i,j}} \bra{\lambda_{i,j}} \right) \left( \sum_{j} \alpha_{i,j} \sqrt{1-\lambda_{i,j}} \ket{\lambda_{i,j}} \right) \\
    &= \sum_j \left(1-\lambda_{i,j}\right) \bar{\alpha}_{i,j} \alpha_{i,j} \\
    &= 1-\bra{\psi} H_{i} \ket{\psi}.
\end{align*}
The overall acceptance probability is then given by the expectation value over all choices of $i$, which is
\begin{align*}
     \sum_{i \in [m]} p_i \Pr[\mathcal{V} \text{ accepts } \ket{\psi} \mid  i] =  1- \bra{\psi} \sum_{i \in [m]} p_i H_i \ket{\psi} = 1 - \bra{\psi} H \ket{\psi}.
\end{align*}
\end{proof}

Kitaev's energy estimation protocol (\cref{prot:Kitaev_protocol}) can be viewed as a $(1,q)$-$\QPCP_{\textup{NA}}$ verifier, where the completeness and soundness bounds correspond to one minus the promised upper and lower bounds on the ground state energy in the {\sc yes}- and {\sc no}-cases, respectively. When $q = \mathcal{O}(\log n)$, each $W_i$ acts non-trivially only on $q + 1$ qubits and can thus be implemented efficiently. By applying~\cref{lem:H_smoothing} in combination with weak error reduction (\cref{lemma:weak_err_red}), we can correctly decide any local Hamiltonian problem with positive semidefinite terms and a constant promise gap using Kitaev’s protocol, taking $p_i = 1/m$ for each $i \in [m]$.
\begin{corollary} 
For any constant $q \in \mathbb{N}$ and constant $\delta > 0$, we have that $q$-$\mathsf{LH}[\delta]$ is contained in $\QPCP_\textup{NA}[q']$ with $q' \in \mO(1)$.
\end{corollary}
\begin{proof}
    Let $a$, $b$ with $b-a = \delta$ be the completeness and soundness parameters of the $q$-$\mathsf{LH}[\delta]$ instance. By~\cref{lem:H_smoothing}, we can transform this into another instance of $2q$-$\mathsf{LH}[\delta']$ with a $2q$-qubit Hamiltonian $H' = \frac{1}{m} \sum_{i \in [m]} H_i'$, $0 \preceq H_i' \preceq 1$, completeness $a'$ and soundness $b'$, with $b'-a' = \delta' > 0$ being some other constant that depends on $q$. We have that $H'$ is in the desired form for~\cref{prot:Kitaev_protocol}, which has completeness $1-a'$ and soundness $1-b'$, and thus also promise gap $\delta'$. Using weak error reduction from~\cref{lemma:weak_err_red}, we can then boost the promise gap back so that it satisfies completeness $2/3$ and soundness $1/3$.
\end{proof}

\section{Applications}
\label{sec:applications}
This section discusses some applications of the ideas presented earlier in the paper, all of which are derived (with varying degrees of overhead) by using the reduction as a subroutine. The section is divided into four subsections, each of which can be read independently to some extent.

\subsection{Reduction to the average particle energy formulation}
As a first application, we consider a specific formulation of the quantum PCP conjecture, stated in terms of an error constant relative to the number of sites (i.e., qubits or qudits), rather than the total number of terms. For this, we rely on the following lemma due to Tropp.

\begin{lemma}[\cite{tropp2012user}] Consider a finite sequence $\{X_k\}$ of independent, random, Hermitian matrices with dimension $d$, and let $\{A_k\}$ be a sequence of fixed Hermitian matrices. Assume that each random matrix satisfies
\[
\mathbb{E}[X_k] = 0 \quad \text{and} \quad X_k^2 \preceq A_k^2 \quad \text{almost surely}.
\]
Then for any $ t \geq 0 $,
\[
\Pr\left[\lambda_{\textup{max}}\left(\sum_k X_k\right) \geq t\right] \leq d \cdot e^{-t^2 / 8\sigma^2},
\]
where
\[
\sigma^2 \coloneqq  \norm{\sum_k A_k^2 }.
\] 
Here, $\lambda_{\textup{max}}(\cdot)$ denotes the largest eigenvalue.
\label[lemma]{lem:matrix_Hoeffding}
\end{lemma}

\begin{proposition} 
Consider a $q$-local Hamiltonian problem with Hamiltonian $H =\frac{1}{m} \sum_{i \in [m]} H_i$, $0 \preceq H_i \preceq 1$, with completeness and soundness parameters $a, b \geq 0$ such that $b - a = \gamma$ for some $\gamma = \Omega(1)$. Then for any $\delta = \Omega(2^{-n})$, there exists a randomized polynomial-time reduction to a $2q$-local Hamiltonian problem with Hamiltonian $G' = \frac{1}{l} \sum_{j \in [l]} G_j'$, $0 \preceq G_j' \preceq 1$, where $l = \mO(n)$, with completeness and soundness parameters $c, d$ such that $c - d \geq \gamma' = \Omega(1)$, which succeeds with probability $1 - \delta$.
\label{prop:red_energy_density}
\end{proposition}

\begin{proof}
    For any positive integer $k$, let $X_k$ be a random variable defined as $H_i - H$ with probability $1/m$. Let $l \in \mathbb{N}$ be the length of the sequence $\{X_k\}_{k \in [l]}$, which is to be determined later. Clearly, for all $k \in [l]$,
\begin{align*}
    \mathbb{E}[X_k] = \frac{1}{m} \sum_{i \in [m]} \left( H_i - H \right) = H - H = 0.
\end{align*}
Since $0 \preceq H_i \preceq 1$ for all $i \in [m]$, it follows that $0 \preceq H \preceq 1$, and therefore $-1 \preceq X_k \preceq 1$, which implies $X_k^2 \preceq 1$. Now, for all $k \in [l]$, set $A_k = \mathbb{I}$, where $A_k = A_k^2$, so that $X_k^2 \preceq A_k^2$ holds. We then have
\begin{align*}
    \sigma^2 := \norm{\sum_{k \in [l]} A_k^2} = \norm{l \mathbb{I}} = l.
\end{align*}

Given the sequence $\{X_k\}_{k \in [l]}$, define $G_k = X_k + H$ and let $G = \frac{1}{l} \sum_{k \in [l]} G_k$. We can then express
\begin{align*}
    \Pr\left[\lambda_{\textup{max}}\left( \frac{1}{l} \sum_{k \in [l]} X_k \right) \geq \frac{t}{l}\right] &= \Pr\left[\lambda_{\textup{max}}\left( H - \frac{1}{l} \sum_{k \in [l]} G_k \right) \geq \frac{t}{l}\right] \\
    &= \Pr\left[\norm{H - G} \geq \frac{t}{l}\right].
\end{align*}
By applying~\cref{lem:matrix_Hoeffding}, we get
\begin{align*}
   \Pr\left[\norm{H - G} \geq \epsilon \right]  &\leq 2^n \cdot e^{-\epsilon^2 l / 8}.
\end{align*}
Now, set $\epsilon = \gamma / 4$. In this case, if $\norm{H - G} \leq \epsilon$, then it must hold that:
\begin{itemize}
    \item if $\lambda_\textup{min}(H) \leq a$, then $\lambda_\textup{min}(G) \leq a' := a + \epsilon$;
    \item if $\lambda_\textup{min}(H) \geq b$, then $\lambda_\textup{min}(G) \geq b' := b - \epsilon$,
\end{itemize}
where $b' - a' \geq \gamma / 2 = \Omega(1)$. To achieve a success probability of at least $1 - \delta$, we require
\begin{align*}
    2^n \cdot e^{-\gamma^2 l / 128} \leq \delta,
\end{align*}
which implies a condition on $l$ of
\begin{align*}
    l \geq \frac{128}{\gamma^2} \left( n \ln(2) + \ln\left(\frac{1}{\delta}\right) \right).
\end{align*}
For $\gamma = \Omega(1)$ and $\delta = \Omega(2^{-n})$, this yields $l = \Theta(n)$. Finally, we must ensure that each $G_k$ satisfies $0 \preceq G_k \preceq 1$. The condition $0 \preceq G_k$ is trivially satisfied, but $G_k \preceq 1$ might not hold if we sample the same term $H_i$ multiple times. To resolve this, we apply the deterministic transformation from~\cref{lem:H_smoothing} to obtain a Hamiltonian  $G' = \frac{1}{l} \sum_{j \in [l]} G_j'$, $0 \preceq G_j' \preceq 1$ at most $2q$-local, which maintains the constant relative promise gap while increasing the locality by a factor of two.
\end{proof}

It is now straightforward to combine the reductions of~\cref{thm:red},~\cref{lem:H_smoothing}, and~\cref{prop:red_energy_density} to obtain the following corollary:

\begin{corollary}
For any $q = \mO(1)$, there exist a $q' = \mO(q)$ and a constant $\delta > 0$ such that the problem $q'$-$\mathsf{LH}[\delta]$, restricted to Hamiltonians with $m = \Theta(n)$ positive semidefinite terms, is $\QPCP[q]$-hard under quantum reductions.
\label{cor:density_form}
\end{corollary}

\subsection{Proof checking versus local Hamiltonian formulations of quantum PCP}
The goal of this subsection is proving~\cref{thm:QCMA}. We will utilize the fact that any $\QCMA$-verifier would able to perform the reduction in~\cref{alg:reduction_QPCP} (up to a polynomial number of bits of precision). This can be used to show that a $\QCMA$ upper bound on the local Hamiltonian problem with constant promise gap implies a $\QCMA$ upper bound on a quantum PCP system with the same locality. \\

\begin{protocol}[$\QCMA$ protocol for $\QPCP{[\mO(1)]}$ assuming that $q$-$\mathsf{LH{[\Theta(1)]}} \in \QCMA$.]
\noindent \textbf{Input:} A classical description of a $(1,q,p_1,p_2,p_3)\text{-}\QPCP$ verifier $V_x$ with input $x$ hardcoded into it, with completeness $c$ and soundness $s$.\\

\noindent \textbf{Set:} $\epsilon \coloneqq  (c-s)/4$, $\Gamma' \coloneqq  (q+1)2 p_2 (n)$, $\eta \coloneqq  \left\lceil q \log \left( \frac{4 \Gamma'(\Gamma' + 1)}{\epsilon} - 1 \right) \right\rceil$,  $\delta \coloneqq 1-\sqrt{\frac{2}{3}}$, $a\coloneqq  c+\epsilon/4$ and $b\coloneqq c-\epsilon/4$.\\

\noindent \textbf{Protocol:}
\begin{enumerate}
    \item The prover sends the witness  $y$.
    \item The verifier performs the \textbf{variation} of~\cref{alg:reduction_QPCP} with precision $\eta$ and maximum error probability $\delta$ to obtain a $q$-local Hamiltonian $\tilde{H}_x = \sum_{i \in [m]} \tilde{H}_{x,i}$ up to $\eta$ bits of precision. 
    \item It accepts if and only if the $\QCMA$ protocol, having completeness $\sqrt{\frac{2}{3}}$ and soundness $1-\sqrt{\frac{2}{3}}$, with witness $y$ for $(\hat{H},a,b)$ accepts.
\end{enumerate}
  \label{prot:QCMA}
\end{protocol}

\

\begin{theorem}[Local Hamiltonian versus proof verification]
     If the $q$-local Hamiltonian problem with a constant promise gap can be decided in $\QCMA$, we have that
    \begin{align*}
        \QPCP[q] \subseteq \QCMA,
    \end{align*}
    for all constant $q \in \mathbb{N}$.
    \label{thm:QCMA}
\end{theorem}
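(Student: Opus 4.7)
The plan is to realize Protocol~\ref{prot:QCMA} essentially as stated and verify its correctness. Given a $\QPCP[q]$ promise problem with verifier $V_x$ and (constant) completeness/soundness parameters $c > s$, set $\epsilon := (c-s)/4$ and instantiate the reduction of Theorem~\ref{thm:red} with precision $\epsilon$ and failure probability $\delta = 2^{-\Omega(n)}$. Crucially, I would invoke the rounded form of the reduction (Corollary~\ref{cor:fixed_H}) rather than the raw one: this guarantees that whenever the reduction succeeds, it returns the \emph{same} $q$-local Hamiltonian $\hat{H}_x$, a deterministic function of $x$. This is what allows the classical $\QCMA$ witness to be meaningful, since an unbounded prover can compute $\hat{H}_x$ in advance and supply the appropriate classical witness $y$ for the assumed $\QCMA$ routine for the $q$-local Hamiltonian problem on $(\hat{H}_x, a, b)$, where $a := 1-c+\epsilon$ and $b := 1-s-\epsilon$.

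For correctness, I would use Theorem~\ref{thm:red} to translate acceptance probabilities into ground-state energies. In the yes case, there is some $\xi^\star$ with $\mathbb{P}[V_x \text{ accepts } \xi^\star] \geq c$, so $\bra{\xi^\star}\hat{H}_x\ket{\xi^\star} \leq 1 - c + \epsilon = a$, making $(\hat{H}_x, a, b)$ a yes-instance of the $q$-local Hamiltonian problem; the prover sends the corresponding $\QCMA$ witness $y$ and the subroutine accepts with high probability. In the no case, for every $\xi$ we have $\bra{\xi}\hat{H}_x\ket{\xi} \geq 1 - s - \epsilon = b$, so $(\hat{H}_x, a, b)$ is a no-instance and no classical $y$ fools the $\QCMA$ subroutine. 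The promise gap is $b-a = (c-s)/2 = \Omega(1)$, which is constant as required by the hypothesis; the locality remains $q$ and $\|\hat{H}_x\| \leq 1$ by construction of Protocol~\ref{prot:reduction}, so the Hamiltonian instance is of exactly the kind covered by the assumption. The overall completeness and soundness of Protocol~\ref{prot:QCMA} are then within $2^{-\Omega(n)}$ of those of the assumed $\QCMA$ subroutine, after a union bound with the reduction's failure probability; standard $\QCMA$ error reduction (Fact~\ref{fact:err_red_QMA}) recovers the $2/3$ versus $1/3$ thresholds.

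The main obstacle, and the reason I would rely on Corollary~\ref{cor:fixed_H} rather than Theorem~\ref{thm:red} alone, is the consistency between prover and verifier. If the verifier merely produced \emph{some} Hamiltonian $\epsilon$-close to $H_x$ on each run, the prover would not know which instance to certify, and there would be no well-defined $y$ to send. The rounding scheme of Corollary~\ref{cor:fixed_H} fixes a canonical Hamiltonian depending only on $x$, eliminating this issue. A minor secondary point is that the reduction only preserves the expected acceptance probability up to $\epsilon$, so perfect completeness of the $\QPCP$ is not preserved, a limitation already flagged in the paper's discussion; for the statement of Theorem~\ref{thm:QCMA} with bounded-error completeness and soundness this is harmless, since the gap of size $(c-s)/2$ still exceeds $2\epsilon$. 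No other transformation of $\hat{H}_x$ (e.g.\ via Lemma~\ref{lem:H_smoothing}) is needed because Definition~\ref{def:LH_sep} already matches the form produced by the reduction.
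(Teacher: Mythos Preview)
Your proposal is correct and follows essentially the same approach as the paper: run the quantum reduction of Theorem~\ref{thm:red}/Corollary~\ref{cor:fixed_H} as a pre-processing step inside the verifier, then feed the resulting fixed Hamiltonian to the assumed $\QCMA$ routine for $q$-local Hamiltonians. Your write-up is in fact cleaner than the paper's in two respects: your thresholds $a=1-c+\epsilon$, $b=1-s-\epsilon$ are the correct ones (the paper's stated $a=c+\epsilon/4$, $b=c-\epsilon/4$ appear to be typos), and you make explicit the reason Corollary~\ref{cor:fixed_H} is needed---namely so that the prover can precompute the single Hamiltonian $\hat{H}_x$ and supply a matching classical witness---whereas the paper only gestures at this via the ``$\eta$ bits of precision'' clause in Protocol~\ref{prot:QCMA}.
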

\begin{proof}
Let $A = (A_\textup{yes},A_\textup{no})$ be any promise problem in $\QPCP[q]$, $x$ be an input, $V_x$ the $\QPCP$ verifier with $x$ hardcoded into it. Let $\tilde{H}_x$ be the corresponding local Hamiltonian obtained via the reductions in Step 2, conditioned on Step 2 succeeding. By~\cref{cor:fixed_H}, we have that in this case the same Hamiltonian is always produced with probability $\geq 1-\delta = \sqrt{2/3}$, so the prover can provide the proof without knowing the outcome of the reduction. Moreover, we note that for our choice of parameters, Step 2 can be performed in quantum polynomial-time.

By assumption, the $q$-local Hamiltonian problem with constant promise gap is in $\QCMA$. Hence, there exists a $\QCMA$ verifier $Q$ such that:
\begin{itemize}
    \item if $\lambda_0(H) \leq a$, then there exists $y \in \{0,1\}^{p(n)}$ such that 
    \[
    \Pr[Q \text{ accepts } ((H, a, b), y)] \geq \sqrt{\frac{2}{3}}.
    \]
    \item if $\lambda_0(H) \geq b$, then for all $y \in \{0,1\}^{p(n)}$;
    \[
     \Pr[Q \text{ rejects } ((H, a, b), y)] \geq \sqrt{\frac{2}{3}},
    \]
\end{itemize}
since $\QCMA$ allows for strong error reduction. Now consider~\cref{prot:QCMA}. Let $Q'$ be the $\QCMA$ verifier that first perform the reduction from Step 2 to obtain $\tilde{H}_x$ to $\eta$ bits of precision, and then runs $Q(\tilde{H}_x, a, b, \ket{y})$. Since Step 2 succeeds with probability $1-\delta = \sqrt{\frac{2}{3}}$,  we have:
\begin{itemize}
    \item If $x \in A_\textup{\sc yes}$, there exists $y \in \{0,1\}^{p(n)}$ such that $\Pr[Q' \text{ accepts } (x,y)] \geq 2/3$.
    \item If $x \in A_\textup{\sc no}$, then for all $y \in \{0,1\}^{p(n)}$, $\Pr[Q' \text{ accepts } (x,y)] \leq 1/3$,
\end{itemize}
for our choice of $\delta$. This shows that $\QPCP[q] \subseteq \QCMA$.
\end{proof}

\subsection{Adaptive versus non-adaptive quantum PCPs}
Since adaptive quantum PCPs generalise non-adaptive quantum PCPs, we have that the inclusion $\QPCP_{\textup{A}}[q] \supseteq \QPCP_{\textup{NA}}[q]$ is immediate.  In this subsection, we will prove that non-adaptive QPCPs can also simulate adaptive QPCPs with only constant overhead, by using our~\cref{lem:H_smoothing} and weak error reduction lemma (\cref{lemma:weak_err_red}) for non-adaptive QPCPs.\\

\begin{protocol}[Non-adaptive simulation of an adaptive quantum PCP]
\noindent \textbf{Input:} A classical description of a $(1,q,p_1,p_2,p_3)\text{-}\QPCP$ verifier $V_x$ with input $x$ hardcoded into it, with completeness $c$ and soundness $s$.\\

\noindent \textbf{Set:} $\epsilon \coloneqq  (c-s)/4$, $\delta \coloneqq  1-\sqrt{\frac{2}{3}}$, $\Gamma' \coloneqq  (q+1)2 p_2 (n)$, $\eta \coloneqq  \left\lceil q \log \left( \frac{4 \Gamma'(\Gamma' + 1)}{\epsilon} - 1 \right) \right\rceil$, and $R \coloneqq  \left\lceil 2\left(\frac{2^{q+4}}{c-s}\right)^2 \right\rceil$.\\

\noindent \textbf{Protocol:}
\begin{enumerate}
    \item The prover sends a quantum state $\ket{\psi}$.
    \item The verifier runs the \textbf{variation} of~\cref{alg:reduction_QPCP}, with precision $\eta$ and maximum error probability $\delta$, to obtain a $q$-local Hamiltonian $\tilde{H}_x$.
    \item The verifier applies the transformation of~\cref{lem:H_smoothing} to obtain a $2q$-local Hamiltonian $\hat{H}_x = \frac{1}{m} \sum_{i} \hat{H}_{x,i}$.
    \item The verifier runs~\cref{prot:Kitaev_protocol} $R$ times for $\hat{H}_x$, and accepts if and only if at least a $\frac{(c-s)}{2^{q+4}}$-fraction of the outcomes equal $\ket{1}$.
\end{enumerate}
\label{prot:non-adap}
\end{protocol}

\

\begin{theorem}[Adaptive versus non-adaptive]
For any $c-s = \Omega(1)$ and $q = \mO(1)$, we have that
\begin{align*}
    \QPCP_{c,s}[q] \subseteq \QPCP_{\textup{NA}}[q']
\end{align*}
with
\[
q' = \mO\left(q\left(\frac{4^{q}}{c-s}\right)^2 \right).
\]
\label{thm:A_vs_NA}
\end{theorem}

\begin{proof}
We verify the correctness of~\cref{prot:non-adap}, which defines a $(q')$\nobreakdash-$\QPCP_{\textup{NA}}$ verifier for some $q'$, which we will show to be $\mO(q)$. Let $A = (A_\textup{\sc yes}, A_\textup{\sc no})$ be a promise problem in $\QPCP_{\textup{A}}[q]$ for an arbitrary constant $q$, and let $x \in \{0,1\}^n$ be an input.

\paragraph{Correctness.} Step 2 of the protocol produces a $q$-local Hamiltonian $\tilde{H}_x = \sum_{i \in [m]} \tilde{H}_i$ satisfying:
\begin{itemize}
    \item If $x \in A_\textup{\sc yes}$, then $\lambda_0(\tilde{H}_x) \leq s + \epsilon$,
    \item If $x \in A_\textup{\sc no}$, then $\lambda_0(\tilde{H}_x) \geq c - \epsilon$,
\end{itemize}
with probability at least $1 - \delta$. After applying the transformation from~\cref{lem:H_smoothing} to obtain $\hat{H}_x$ from $\tilde{H}_x$, we get:
\begin{itemize}
    \item If $x \in A_\textup{\sc yes}$, then $\lambda_0(\hat{H}_x) \leq \frac{s + \epsilon}{2^{q+3}} $,
    \item If $x \in A_\textup{\sc no}$, then $\lambda_0(\hat{H}_x) \geq \frac{c - \epsilon}{2^{q+3}} $,
\end{itemize}
Applying Kitaev's energy estimation protocol (\cref{prot:Kitaev_protocol}) to $\hat{Q}_x = \sum_{i} \frac{1}{m} \hat{H}_{x,i}$ yields a  $(2q)$\nobreakdash-$\QPCP_{\textup{NA}}$ verifier with a promise gap of at least
\[
\gamma \coloneqq \frac{(c-s) - 2\epsilon}{2^{q+3}} = \frac{(c-s)}{2^{q+4}} = \Theta(1),
\]
since $q$ was constant and $c-s = \Omega(1)$. Since Step~4 performs $R$ parallel runs of a $(2q)\text{-}\QPCP_{\textup{NA}}$ verifier, accepting if and only if a $\frac{c - s}{2^{q+4}}$-fraction of the verifiers accept, it forms an $(Rq)\text{-}\QPCP_{\textup{NA}}$ verifier with amplified completeness and soundness. By~\cref{lemma:weak_err_red}, and our choice of $R$ (see~\cref{eq:choice_of_R_repeat} in the proof of~\cref{lemma:weak_err_red}), it follows that, conditioned on Step 2 succeeding, the acceptance (resp.~rejection) probability in Step 4 is at least $\sqrt{2/3}$ in the {\sc yes}-case (resp.~{\sc no}-case). Moreover, we have that 
\[
q' = q R =  q \left\lceil 2\left(\frac{2^{q+4}}{c-s}\right)^2 \right\rceil = \mO\left(q\left(\frac{4^{q}}{c-s}\right)^2 \right).
\]
Since Step 2 succeeds with probability at least $\sqrt{2/3}$, the overall completeness (soundness) is at least $2/3$ (at most $1/3$).
\end{proof}

We conclude this subsection with the following observation.

\begin{remark}
The proof of~\cref{thm:A_vs_NA} actually demonstrates that using a uniform distribution (over a subset of all proof qubits) to decide which proof qubits to access suffices. Indeed, Kitaev's energy estimation protocol can be applied with $p_i = 1/m$ for all $i \in [m]$, where $m$ is the number of terms in the Hamiltonian. Since we still apply weak error reduction (\cref{lemma:weak_err_red}) to boost the completeness and soundness parameters back to their original values, the resulting distribution over queries becomes uniform over a subset of all possible index combinations. This is because parallel repetition restricts us to combinations where each supposed copy is used only once. This resolves an open question posed in~\cite{Aharonov2008The}, and shows that the definition of a quantum PCP given in~\cite{Grilo2018thesis} is in fact fully general.
\label{rem:uniform_QPCPs}
\end{remark}

\subsection{A multi-prover quantum PCP for \texorpdfstring{$\QMA[2]$}{QMA(2)} implies \texorpdfstring{$\QMA[2] = \QMA$}{QMA(2)=QMA}}
In this section we prove~\cref{thm:qma2}. Of course, if $\QMA[2]$ would allow a single-prover quantum PCP, i.e., $\QMA[2] \subseteq \QPCP[1,\mO(1)]$, then we would trivially have that $\QMA[2] = \QMA$ as $\QPCP[1,\mO(1)] \subseteq \QMA$. The point will be that for any polynomial number of unentangled provers, each providing a polynomially-sized proof, a quantum PCP that checks a constant number of qubits across the tensor product of these proofs can be simulated in $\QMA$.

The proof of this result is based on the ideas in~\cite{chailloux2012complexity} where it is proven that the $2$-separable local Hamiltonian problem is in $\QMA$. Since we need a slight generalisation of their result (namely from $2$-separable to $k$-separable), we include a full proof for completeness. The proof relies on the consistency of local density matrices problem ($\CLDM$), which will be introduced first. Moreover, for completeness and future reference, we give a new proof of $\QMA$-containment which holds for a larger range of problem parameter settings than previous proofs~\cite{liu2007consistency,broadbent2022qma}.

\subsubsection{The consistency of local density matrices problem}
We start by stating the definition of the consistency of local density matrices problem, first defined in~\cite{liu2007consistency}.

\begin{definition}[Consistency of local density matrices, $\CLDM(q,\alpha,\beta)$] 
\phantom{.}
\label[definition]{def:CLDM}
\begin{description}
    \item[Input:] A classical description of a collection of local density matrices $\{\rho_i\}_{i \in [m]}$ on $n$ qubits, $m = \poly(n)$, where each $\rho_i$ is a density matrix over qubits $C_i \subseteq [n]$ with $|C_i| \leq q$. For each $i \in [m]$, write $\overline{C}_i = [n] \setminus C_i$ for the complementary subset. Additionally, we are given two efficiently computable real numbers $\alpha, \beta$ such that $\beta-\alpha >0$.
    \item[Promise:] One of the following two cases holds:
    \begin{enumerate}[label=(\roman*)]
        \item There exists an $n$-qubit mixed state $\sigma$ such that for all $i \in [m]$,
        \[
        \norm{\tr_{\overline{C}_i}[\sigma] - \rho_i}_1 \leq \alpha.
        \]
        \item For all $n$-qubit mixed states $\sigma$, there exists an $i \in [m]$ such that
        \[
        \norm{\tr_{\overline{C}_i}[\sigma] - \rho_i}_1 \geq \beta.
        \]
    \end{enumerate}
    \item[Output:] {\sc yes} if (i) holds, and {\sc no} if (ii) holds.
\end{description}
If $\alpha = 0$, we write $\CLDM(q,\beta)$.
\end{definition}

Liu showed that $\CLDM$ is in $\QMA$ for  $\beta/4^q =: \epsilon = \Omega(1/\poly(n))$ and $\alpha = 0$. Broadbent and Grilo~\cite{broadbent2022qma} provided a proof for arbitrary $\alpha$ (and $\epsilon = \beta-\alpha$), but their proof contains a small error.\footnote{The proof relies on Hoeffding’s inequality to show soundness. However, the random variables it is used on are generally not independent in this setting, since the proof can be highly entangled. } Liu's proof can easily be modified to also incorporate the case for general $\alpha$, however, this still leads to a condition on $\beta$ which depends on $q$, leaving open as to whether $\CLDM(q,\beta,\alpha)$ is in $\QMA$ for any $\beta-\alpha = \Omega(1/\poly(n))$.\footnote{The initial arXiv version of the present paper contains this modified proof.}

We will now demonstrate that a different proof technique can be used to lift this restriction on $\beta$. Though not strictly needed to obtain our results, we include it for completeness and future reference. Our protocol is based on two ideas: (i) with a sufficient number of copies, one can learn the local marginals of any given state, and (ii) a specific formulation of a quantum de Finetti theorem under local measurements.

\subsubsection{Full state tomography of marginals}
The task of full state tomography is, given access to copies of an unknown quantum state~$\rho$, to learn an $\epsilon$-approximation~$\tilde{\rho}$ (with respect to some distance measure on quantum states), while minimising the number of copies required and, potentially, also the total processing time. Since we only care about overall efficiency, we will use the most basic state tomography protocol, as it allows for the simplest analysis. For an $n$-qubit system, a \emph{Pauli word} (also called a \emph{Pauli string}) is any operator of the form $
P = \sigma_1 \otimes \sigma_2 \otimes \dots \otimes \sigma_n$, where each $\sigma_i \in \{\mathbb{I}_2,X,Y,Z\}$ is a Pauli operator. For a Pauli word $P_j$, write $M_j$ for the measurement $M_j = \{  (P_j + \mathbb{I})/2, (\mathbb{I} -P_j)/2 \}$ with corresponding outcomes $\{+1,-1\}$. Hence, given a state $\rho$, we have that the random variable $X_j \in \{+1,-1\}$ corresponding to the measurement of $\rho$ using $M_j$, satisfies $\mathbb{E}[X_j] = \tr[\frac{1}{2}(P_j + \mathbb{I}) \rho](+1) + \tr[ \frac{1}{2}(\mathbb{I} -P_j) \rho](-1) = \tr[P_j \rho]$. Since the Pauli words $\{P_j\}_{j \in [d^2]}$ form a basis for the space of $d$-dimensional Hermitian matrices, we can write any density operator $\rho = \sum_{j \in [d^2]} c_j P_j $ where $c_j = \tr[P_j \rho]$. If instead we have estimates $\tilde{c}_j$ such that $\abs{\tilde{c}_j -c_j } \leq \epsilon$, then we have that $\tilde{\rho} = \sum_{j \in [d^2]} \tilde{c}_j P_j $ satisfies $\norm{\tilde{\rho} - \rho }_1 \leq d^2 \epsilon$.

\begin{lemma}
Let $ \rho \in \mathcal{D}(\mathbb{C}^{2^n}) $ and let $ \{C_i\}_{i \in [m]} $ be a collection of subsets of qubit indices, each satisfying $ |C_i| \leq q $. Write $\rho_i = \tr_{\overline{C}_i}[\rho]$. Then, there exists a measurement $M = \left\{ M^{(1)}_{a_1} \otimes \dots \otimes M^{(l)}_{a_l} \;\middle|\; a \in \{\pm 1\}^l \right\}$ on the state $ \rho^{\otimes l} $ and a classical algorithm running in $\poly(l) $ time that, given the measurement outcomes, outputs a classical description of $ \{\tilde{\rho}_i\}_{i \in [m]} $ satisfying
\begin{align*}
    \|\tilde{\rho}_i - \rho_i\|_1 \leq \epsilon \quad \text{for all } i \in [m],
\end{align*}
with probability at least $ 1 - \delta $. This uses
\begin{align*}
    l = \mathcal{O}\left(m q 16^q \log(m/\delta)/\epsilon^2 \right)
\end{align*}
copies of $ \rho $.
\label{lem:marginals_tomography}
\end{lemma}

\begin{proof}
The measurement consists of applying a tensor product of different Pauli measurements $M_j$ across the many copies of $\rho$ to obtain estimates $\tilde{c}_{j,i} \approx \tr[P_j \sigma_i]$, from which all the marginals $\rho_i$ can be approximately reconstructed. Since each measurement only acts on a single density matrix of a single copy of $\rho$, this ensures the tensor product structure of the overall measurement. Let $d_i = 2^{|C_i|}$, so that $\rho_i = \tr_{\overline{C}_i}[\rho]$ is the $d_i$-dimensional density matrix corresponding to the reduced density matrix of $\rho$ that has indices from $C_i$. We have $\rho_i = \sum_{j \in [d_i^2]} c_{j,i} P_j$, where $c_{j,i} = \tr[P_j \rho_i]$.  Using the measurement $M_j$ corresponding to $P_j$, where each measurement outcome is bounded, standard mean estimation (see for example~\cite{lee2020lecture}) gives an estimate $\tilde{c}_{j,i}$ such that $\abs{\tilde{c}_{j,i} - c_{j,i}} \leq \epsilon/d_i^2$ with probability $1-\delta/(md_i^2)$, using $\mO(d_i^4 \log(md_i^2/\delta)/\epsilon^2)$ copies of $\rho$. We set $\tilde{\rho}_i = \sum_{j \in [d_i^2]} \tilde{c}_{j,i} P_j$. By a union bound and converting a bound on the max norm to the trace norm, we must have that for each $i \in [m]$, $\norm{\tilde{\rho_i} - \rho_i}_1 \leq \epsilon$ holds with probability at least $1 - \delta/m$ for each $i$.  Another union bound shows that the probability of $\norm{\tilde{\rho_i} - \rho_i}_1 \leq \epsilon$ holding for all $i \in [m]$ simultaneously is at least $1 - \delta$. Using the upper bound $d_i \leq 2^q$ for all $i \in [m]$, the total number of copies (and thus measurements in the tensor product) can be upper bounded as
\begin{align*}
    \mO\left(m q 16^{q} \log(m/\delta)/\epsilon^2 \right).
\end{align*}
As the classical post-processing consists primarily of the addition of all obtained measurement outcomes, it clearly can be done in time $\poly(l)$.
\end{proof}

\subsubsection{Quantum de Finetti under local measurements} 
We first need to introduce some additional notation. For bipartite states $\rho^{XY}$, we use the convention that omitting subscripts corresponds to taking the partial trace over those systems; for example, $\rho^X = \tr_Y[\rho^{XY}]$. We say that $\rho^{A_1 \dots A_k}$ is permutation symmetric if $\rho^{A_{\pi(1)} \dots A_{\pi(k)}} = \rho^{A_1 \dots A_k}$ for any permutation $\pi \in \textup{S}_k$ (with $\textup{S}_k$ is the symmetric group of order $k$). We associate to any POVM $\{M_k\}$ a map $\Lambda(X) = \sum_k \text{Tr}(M_k X) \ketbra{k}$, where $\{|k\rangle\}$ is an orthonormal basis. Thus, the $\Lambda(X)$ are so-called quantum-classical channels, as they map density operators to other density matrices that are diagonal in the basis $\{\ket{k}\}$. This implies that for two states $\rho$ and $\sigma$, we have
\begin{align*}
    \frac{1}{2}\norm{\Lambda(\rho - \sigma)}_1 = D_{\textup{TV}} \left(\{p_k\}, \{q_k\}\right)  
\end{align*}
where $p_k = \tr{M_k \rho}$, $q_k = \tr{M_k \sigma}$ and $D_{\textup{TV}}(\cdot,\cdot)$ is the total variation distance. 

Informally, the classical de Finetti Theorem states that if the joint probability distribution of a sequence of random variables is invariant under any permutation of the variables, then the marginal probability distribution of a subset of $l \ll k$ variables from such a $k$-variable sequence will be close to a convex combination of i.i.d.~variables \cite{diaconis1980finite}. Quantum versions of the de Finetti Theorem posit that an $l$-partite quantum state $\rho^{A_1 \dots A_l}$, which is the reduced state of a permutation-symmetric state $\rho^{A_1 \dots A_k}$ on $k \gg l$ subsystems, is close to a convex combination of i.i.d.~quantum states, i.e., $\rho^{A_1 \dots A_l} \approx \int d\mu(\sigma) \, \sigma^{\otimes l}$,
with $\mu$ a probability measure on quantum states. Different quantum de Finetti theorems consider different notions of ``closeness'', e.g.,~\cite{renner2007symmetry}, \cite{christandl2007one}, and \cite{brandao2011faithful}. We will focus on closeness with respect to local measurements performed on the subsystems, for which a quantum de Finetti theorem was proven by Brand\~{a}o and Harrow \cite{brandao2013quantum}.

\begin{lemma}[\cite{brandao2013quantum}] Let $\rho^{A_1 \dots A_k} \in \mathcal{D}(A^{\otimes k})$ be a permutation-invariant state. Then for every $0 \leq l \leq k$, there is a measure $\nu$ (that depends on $\rho$) on $\mathcal{D}(A)$ such that
\begin{align*}
\max_{\Lambda_2, \dots, \Lambda_l} \left\| (\mathbb{I} \otimes \Lambda_2 \otimes \dots \otimes \Lambda_l) \left(\rho^{A_1 \dots A_l} - \int \nu(d\sigma) \sigma^{\otimes l} \right)\right\|_1 \leq \sqrt{\frac{2l^2 \ln |A|}{k-l}}.    
\end{align*}
\label[lemma]{lem:BH_QdeFinetii}
\end{lemma}
Given any input state on $k$ registers, the permutation-invariant assumption can always be enforced by randomly permuting all the input registers (this does of course alter the input state if the state was not already permutation-invariant).

 \subsubsection{The quantum marginal problem is in $\QMA$}
\label{sec:CLDM_QMA}
We now have all ingredients to give our $\QMA$ protocol for $\CLDM(q,\alpha,\beta)$, which is given in~\cref{prot:CLDM}.

The core idea is that, with enough copies of a state, the verifier can estimate its local marginals via tomography. Then, using~\cref{lem:BH_QdeFinetii}, we argue that any permutation-invariant state (which includes the verifier's post-processed state) is close to a separable state. Hence, from the verifier’s perspective, the tomography is effectively performed on a state that is nearly separable, even if the prover were to send a highly entangled state. From this, we can show that the acceptance probability of the protocol is very close to that of an idealised version in which the prover sends a mixture of actual tensor-product copies, from which soundness follows.
\\

\begin{protocol}[$\QMA$ protocol for $\CLDM(q,\alpha,\beta)$]
\noindent \textbf{Input:} Classical descriptions of the density matrices $\{\rho_i\}_{i\in [m]}$ and the indices $\{C_i\}_{i \in [m]}$ of the qubits on which they are supported, problem parameters $q$, $\beta$ and $\alpha$.\\

\noindent \textbf{Set:} $\epsilon \coloneqq(\beta-\alpha)/4$, $\delta\coloneqq 1/6$, $l\coloneqq \mO\left(m q 16^{q}\log(m/\delta)/\epsilon^2\right) $, \\
$k\coloneqq \frac{2 l \delta^2 + l^2 n \ln 2}{2 \delta^2}$.\\

\textbf{Protocol:} 
\begin{enumerate}
    \item The prover sends a state $\hat{\rho}^{A_1 \dots A_k} \in \mathcal{D}(A^{\otimes k})$.
    \item The verifier randomly permutes the index labels and traces out the last $k-l$ registers, creating the state $\rho^{A_1 \dots A_{l}}$.
    \item The verifier performs the measurement of \cref{lem:marginals_tomography}, with desired precision $\epsilon$ and success probability $1-\delta$, on the registers $A_1,\dots,A_{l}$ and uses the measurement outcome to construct $\{\tilde{\rho}_i\}_{i \in [m]}$.
    \item Accept if $\norm{\tilde{\rho}_i-\rho_i}_1 \leq \alpha + \epsilon $ for all $i \in [m]$, and reject otherwise.
\end{enumerate}
  \label{prot:CLDM}
\end{protocol}
\\

\begin{theorem} 
$\CLDM(q,\alpha,\beta)$ is in $\QMA$ for any $q = \mO(\log n)$ and $\beta - \alpha = \Omega(1/\poly(n))$.
\label{thm:CLDM_in_QMA}
\end{theorem}

\begin{proof}
We will show the correctness of \cref{prot:CLDM}.\\

\paragraph{Completeness.} 
In this case, the prover sends the state $\sigma^{\otimes k}$, which is already permutation-invariant. After Step $2$, the resulting state is $\sigma^{\otimes l}$. Hence, we have access to $l$ perfect copies of $\sigma$, meaning that by \cref{lem:marginals_tomography}, the estimates $\tilde{\sigma}_i$ of all $m$ density matrices $\sigma_i$ will be retrieved up to precision $\epsilon$ with high probability. Since we are in a {\sc yes}-instance, this means that, conditioned on the estimation procedure succeeding, we have $\norm{\tilde{\sigma}_i - \rho_i}_1 \leq \alpha + \epsilon$ for all $i \in [m]$, so we will accept with probability 1. Thus, the overall success probability is lower bounded by the success probability of the estimation procedure, which is $\geq 1 - \delta \geq 2/3$ for our choice of $\delta$. 

\paragraph{Soundness.}
Let the prover send any arbitrary state $ \hat{\rho}^{A_1 \dots A_k} $. Step 2 includes a random permutation of the subsystems, so the \emph{expected} state on the remaining $ l $ systems (after tracing out $ k - l $ registers) is the symmetrised marginal
\[
\frac{1}{k!} \sum_{\pi \in S_k} \tr_{A_{k - l + 1} \dots A_k} \left[ \pi \hat{\rho}^{A_1 \dots A_k} \pi^\dagger \right].
\]
Since the acceptance probability is linear in the input state, we may analyse it as though the prover had sent a symmetric state from the start. Thus, from here onward, we assume without loss of generality that $ \rho^{A_1 \dots A_l} $ is permutation-invariant.

Let $P(\rho^{A_1 \dots A_{l}})$ be the probability that the overall protocol in \cref{prot:CLDM} accepts. We will first argue that if the prover provides a tensor product of multiple copies of some state we will reject with high probability. Next, we will argue that by~\cref{lem:BH_QdeFinetii}, Step 2 ensures that any arbitrary (even entangled) state must be close to such a state, and thus will also be rejected with high probability. We now formalise this.

Just as in the completeness case, suppose the state after Step 2 is of the form
\[
    \rho^{A_1 \dots A_{l}} = \sigma^{\otimes l}
\]
for some arbitrary $\sigma \in \mathcal{D}(A)$. Conditioned on the success of Step 3, which occurs with probability $1 - \delta$, we will find a set $\{\tilde{\sigma}_i\}_{i \in [m]}$ such that $\norm{\tilde{\sigma}_i - \tr_{\overline{C}_i}\sigma}_1 \leq \epsilon$ for all $i \in [m]$. However, the promise implies that for any such $\sigma$, there must be some $i \in [m]$ such that $\norm{\tilde{\sigma}_i - \rho_i}_1 \geq \beta - \epsilon > \alpha + \epsilon$. Hence, we have $P(\sigma^{\otimes l}) \leq \delta$. By convexity, the same argument implies that $P\left( \int \nu(d\sigma) \sigma^{\otimes l} \right) \leq \delta$ for any measure $\nu$ on $\mathcal{D}(A)$.

Recall that the $\Lambda_j(\cdot)$ are quantum-classical channels, mapping quantum states to discrete probability distributions, where each channel implements a single measurement $M^{(j)} = \{M^{(j)}_{+1},M^{(j)}_{-1} \} $ corresponding to the one in~\cref{lem:marginals_tomography} (note that the index $j$ here does not represent the Pauli word, but labels the measurement instead). Step 4 can potentially distinguish between the two probability distributions described by the following two density matrices, which are diagonal in some orthonormal basis:
\[
    \rho_1 \coloneqq (  \Lambda_1 \otimes \dots \otimes \Lambda_{l})\left(\rho^{A_1 \dots A_{l}}\right),
\]
and
\[
    \rho_2 \coloneqq (\Lambda_1 \otimes \dots \otimes \Lambda_{l})\left(\int \nu(d\sigma) \sigma^{\otimes l}\right),
\]
for some measure $\nu$ as per~\cref{lem:BH_QdeFinetii}. Write $p_1$, $p_2$ for the distributions associated with measuring $\rho_1$, $\rho_2$ in their eigenbasis, respectively.  By \cref{lem:BH_QdeFinetii}, and using the fact that the total variation distance upper bounds the maximum bias in the output by a single-sample distinguisher, we have
\begin{align*}
    \abs{P(\rho^{A_1 \dots A_{l}}) - P\left( \int \nu(d\sigma) \sigma^{\otimes l} \right)} 
    &\leq  D_{\textup{TV}}(p_1, p_2),
\end{align*}
where
\begin{align*}
    D_{\textup{TV}}(p_1, p_2)&= D(\rho_1,\rho_2)  \\
    &= \frac{1}{2} \left\| (\Lambda_1 \otimes \dots \otimes \Lambda_{l})\left(\rho^{A_1 \dots A_{l}} - \int \nu(d\sigma) \sigma^{\otimes l} \right) \right\|_1\\
    &\leq \max_{\Lambda_2', \dots, \Lambda_l'} \left\| (\mathbb{I} \otimes \Lambda_2' \otimes \dots \otimes \Lambda_l') \left(\rho^{A_1 \dots A_l} - \int \nu(d\sigma) \sigma^{\otimes l} \right)\right\|_1 \\
    &\leq \frac{1}{2} \sqrt{\frac{2 l^2 \ln |A|}{k - l}}\\
    &\leq \delta,
\end{align*}
for our choice of parameters in~\cref{prot:CLDM}. Here we also used that the trace distance is non-increasing under the application of a quantum channel. Thus, for our choice of $\delta$, we have $P(\rho^{A_1 \dots A_{l}}) \leq 2 \delta  = 1/3$, as desired.

\paragraph{Complexity.}
Step 2 clearly takes time polynomial in $n$ when $k = \poly(n)$. Step 3 runs in polynomial time when $l$ is polynomial in $n$, and Step 4 runs in polynomial time when the sizes of the density matrices are at most polynomial in $n$ and $m = \poly(n)$. All conditions hold when $q = \mO(\log n)$, $\beta - \alpha \geq 1/\poly(n)$, and $m = \poly(n)$.
\end{proof}

\subsubsection{Simulating a multi-prover constant-query quantum PCP in \texorpdfstring{$\QMA$}{QMA}}
Before we prove the main result of this subsection, we need to show the following simple property of $\QMA$, which shows that one can compute the AND-function of multiple promise problems in $\QMA$ with a single $\QMA$-verifier.

\begin{lemma} Let $1 \leq l \leq \poly(n)$. Suppose that $A_1,A_2,\dots,A_l$ are promise problems in $\QMA$. Then the promise problem $B = (B_\textup{\sc yes}, B_\textup{\sc no})$ defined as
\begin{itemize}
    \item $x= (x_1,\dots,x_l) \in B_\textup{\sc yes}$, if $x_i \in A_{i, \textup{\sc yes}}$ for all $i\in [l]$;
    \item $x= (x_1,\dots,x_l) \in B_\textup{\sc no}$, if there exists an $j \in [l]$ such that $x_j \in A_{j, \textup{\sc no}}$, with $x_i \in \{A_{i, \textup{\sc yes }}, A_{i, \textup{\sc no}} \}$ for all $i\in [l]$;
\end{itemize}
is in $\QMA$.
\label[lemma]{lem:QMA_parallel}
\end{lemma}

\begin{proof}
Since $A_i \in \QMA$ for all $i \in [l]$, we have that for each $A_i$, and for every polynomial $p_2(n) \geq 1$, there exists a uniform family of quantum circuits $\{U_n^i \mid n \in \mathbb{N}\}$ such that:
\begin{enumerate}
    \item If $x_i \in A_{i,\textup{yes}}$, then there exists a proof $\ket{\psi_i}$ such that $\Pr[U_n^i \text{ accepts } \ket{\psi_i}] \geq 1 - 2^{-p_2(n)}$.
    \item If $x_i \in A_{i,\textup{no}}$, then for all quantum proofs $\ket{\psi_i}$, we have $\Pr[U_n^i \text{ accepts } \ket{\psi_i}] \leq 2^{-p_2(n)}$,
\end{enumerate}
by standard error reduction for $\QMA$. We will set $p_2$ later. We now define the verifier $U_n$ as follows: it expects a quantum proof $\bigotimes_{i \in [l]} \ket{\psi_i}$, runs all $U_n^i$ in parallel, measures all $l$ designated output qubits in the computational basis, and accepts if and only if all measurement outcomes are $\ket{1}$.

\paragraph{Case (i):} If $x_i \in A_{i,\textup{yes}}$ for all $i \in [l]$, then there exists a state $\bigotimes_{i \in [l]} \ket{\psi_i}$ such that
\begin{align*}
    \Pr[U_n \text{ accepts } \otimes_{i \in [l]} \ket{\psi_i}] \geq (1 - 2^{-p_2(n)})^l \geq 1 - 2^{-p_2(n)} l  \geq 1 - 2^{-p(n)}
\end{align*}
whenever $p_2(n) \geq p(n) + \log(l)$.

\paragraph{Case (ii):} Suppose there exists $j \in [l]$ such that $x_{j} \in A_{j,\textup{no}}$. We must argue that it does not help the prover to provide a highly entangled state. This follows from the same reasoning as in the proof that $\QMA$ admits weak error reduction. Let $\gamma$ be the total quantum proof and $C_i$, $i\in [l]$ be the index sets corresponding to the proof qubits to be used for input $x_i$. For all $i < j$, suppose the verifier has already executed the subprotocols and all of them accepted (otherwise we are done). Let the resulting density matrix on qubits from $C_{j}$ of the remaining state be $\gamma'_{\overline{C_{j}}}$. By convexity of acceptance probability, it follows that all mixed states $\gamma'_{\overline{C_{j}}}$ have acceptance probability at most $2^{-p_2(n)}$ for $U_n^{j}$, since this holds for all pure states. Hence, the acceptance probability at step $j$ is at most $2^{-p_2(n)}$, regardless of previous outcomes.  Since we set $p_2(n) \geq p(n)$ for case (i), this proves case (ii).

Since $1 \leq l \leq \poly(n)$, we can just set $p_2(n) \coloneqq p(n) + l$ so that both cases are satisfied Thus, $B$ is in $\QMA$.
\end{proof}

Next we show that the $k$-separable $q$-local Hamiltonian problem (Definition~\cref{def:LH_sep}) is in $\QMA$. This relies directly on the $\QMA$ containment of the $\CLDM$ problem. We first state a $\QMA$ protocol. \\

\begin{protocol}[$\QMA$ protocol for the $k$-separable $q$-local Hamiltonian problem]
\noindent \textbf{Input:} Classical descriptions of $m$ $q$-local terms $\{H_i\}$, completeness and soundness parameters $b, a$, and a number $k$.\\

\noindent  \textbf{Set} $a' \coloneqq a + \frac{b - a}{4}$, $b' \coloneqq b - \frac{b - a}{4}$, $\beta \coloneqq \frac{b - a}{2q m}$, $\alpha \coloneqq \beta / 8^q$, and $\delta \coloneqq \frac{k}{3}$.
\begin{enumerate}
    \item The prover sends a quantum witness $\gamma$ and a classical witness $\{\rho_i^j\}$.
    
    \item The verifier performs the following three checks:
    \begin{itemize}
        \item \textbf{Check 1:} It checks that each reduced density matrix $\rho_i^j$ is positive semidefinite and has trace one.
        
        \item \textbf{Check 2:} It checks that $\tr\left[H \otimes_{j \in [k]} \rho_i^j\right] \leq a'$.
        
        \item \textbf{Check 3:} It splits up the indices of the quantum witness $\gamma$ in $k$ disjoint sets $C_j$ of equal size, $j\in [k]$. It runs $k$ parallel executions of the $\QMA$ protocol given by~\cref{prot:CLDM}, each with completeness $1-\delta$ and soundness $\delta$, on the $k$ instances of the $\CLDM(q,\alpha, \beta)$ with respective input $\{\rho_i^j\}$ and proof $\tr_{\bar{C_j}}[\gamma]$. The check is passed if and only if all $\CLDM(q,\alpha, \beta)$ verifications accept.
    \end{itemize}
    
    \item The verifier accepts if and only if all three checks are passed.
\end{enumerate}
\label{prot:k_sep_LH}
\end{protocol}

\

\begin{lemma} The $k$-separable $q$-local Hamiltonian problem ($\textup{LH}(k,q,a,b)$) is in $\QMA$ for any $1 \leq k \leq \poly(n)$, $q = \mO(\log n)$ and $b-a \geq 1/\poly(n)$.
\label[lemma]{lem:sep_LH_QMA}
\end{lemma}
\begin{proof} This follows the same ideas as in~\cite{chailloux2012complexity}, but now for a $k$- instead of $2$-separable state. 

First, we observe that Check~3 defines the logical AND of multiple promise problems in~$\QMA$, so by the proof of~\cref{lem:QMA_parallel}, it follows that for our choice of~$\delta$, the overall procedure defined by Check~3 has completeness at least~$ 2/3 $ and soundness at most~$ 1/3 $.

\paragraph{Completeness.} By the promise, there exists a state $\xi = \otimes_{j \in [k]} \xi^j$ such that $\tr[H \xi] \leq a$. The prover sends as a quantum witness $ \gamma = \bigotimes_{j \in [k]} \gamma^j $, where each $ \gamma^j $ is a witness for the $\CLDM$ instance corresponding to $ \xi^j $. For the classical witness, the prover sends descriptions of $ \{ \tilde{\xi}_{i}^j \} $ of the $ q $-local reduced density matrices $ \xi_{i}^j $ of $ \xi $, each specified up to trace distance $ \epsilon = 2^{-p(n)} $ for some large polynomial $ p(n) $, such that the second and third checks are satisfied (see below).
This is possible, as the entries of each density matrix can be described using a polynomial number of bits. In this case, the protocol proceeds as follows:
\begin{itemize}
    \item The first check of~\cref{prot:k_sep_LH} is passed directly. 
    \item For the second check, we observe that
\begin{align*}
    \sum_{i \in [m]} \tr[H_i  \otimes_{j \in [k]} \tilde{\xi}_i^j]  \leq a + m k \epsilon \leq a',
\end{align*}
provided $\epsilon$ is chosen sufficiently small, which means that it also passes.
\item   For Check 3, by~\cref{lem:QMA_parallel}, this check also succeeds with probability at least $2/3$. Hence, the overall acceptance probability is at least $2/3$.
\end{itemize}

\paragraph{Soundness.} If Check 1 or Check 2 fails, we are done. If Check 1 succeeds, we can be certain that each matrix $\rho_i^j$ is a density matrix up to an exponentially small correction. If Check 2 succeeds, we must have that 
\begin{align*}
      \sum_{i \in [m]} \tr[H_i  \otimes_{j \in [k]} \rho_i^j] \leq a'.
\end{align*}
According to the promise, we have that for any state $\xi = \otimes_{j \in [k]} \xi^j$ 
\begin{align*}
     \sum_{i \in [m]} \tr[H_i  \otimes_{j \in [k]} \xi_i^j] \geq b'.
\end{align*}
This means that for our choice of $a'$, $b'$ we have
\begin{align*}
    \frac{b-a}{2} &\leq \sum_{i \in [m]} \left( \tr[H_i  \otimes_{j \in [k]} \xi_i^j] -  \tr[H_i \otimes_{j \in [k]} \rho_i^j] \right)\\
    &= \sum_{i \in [m]} \tr[H_i \left( \otimes_{j \in [k]} \xi_i^j -  \otimes_{j \in [k]} \rho_i^j \right)] \\
    &\leq \sum_{i \in [m]} \norm{\otimes_{j \in [k]} \xi_i^j] -  \otimes_{j \in [k]} \rho_i^j}_1
\end{align*}
which implies that there must exist an $i$ such that $ \frac{b-a}{2m} \leq \norm{ \otimes_{j \in [k]} \xi_i^j] -  \otimes_{j \in [k]} \rho_i^j}_1$. In the worst case, we have that $k \geq q$ and the qubits comprising this density matrix may be distributed across $q$ distinct proof registers. However, by the subadditivity property of the trace distance with respect to tensor products, we have that for any subset $S \subseteq [k]$ with $|S| \leq q$:
\begin{align*}
   \norm{ \otimes_{j \in S}\xi_i^j- \otimes_{j \in S} \rho_i^j}_1 \leq \sum_{j \in S} \norm{ \xi_i^j- \rho_i^j}_1,
\end{align*}
which implies that there must exist a $i,j$ pair such that $ \beta \coloneqq \frac{b-a}{2qm} \leq \norm{ \xi_i^j-\rho_i^j}_1 $, which satisfies the promise of a {\sc no}-instance for $\CLDM(q,\alpha, \beta)$. As we have already argued, Step 3 has a success probability of at least $ 2/3 $ for detecting a single {\sc no}-instance, by the proof of~\cref{lem:QMA_parallel}. Hence, we have that the overall acceptance probability is at most $1/3$.
\end{proof}

Finally, we arrive at the main result for this section. We will prove it by arguing that $\QPCP{[k,q]}$ is contained in $\QMA$ for any $k = \poly(n)$.\\

\begin{protocol}[$\QMA$ protocol for $\QPCP{[k,q]}$.]
\noindent \textbf{Input:} A classical description of a $(k,q,p_1,p_2,p_3)\text{-}\QPCP$ verifier $V_x$ with input $x$ hardcoded into it, with completeness $c$ and soundness $s$.  \\

\noindent \textbf{Set:} $\epsilon\coloneqq (c-s)/4$, $\delta = \delta' \coloneqq  1-\sqrt{\frac{2}{3}}$, $\Gamma' \coloneqq  (q+1)2k p_2(n)$, $\eta \coloneqq  \left\lceil q \log \left( \frac{4 \Gamma'(\Gamma' + 1)}{\epsilon} - 1 \right) \right\rceil$.\\

\noindent \textbf{Protocol:}
\begin{enumerate}
    \item The prover sends the witness  $\xi$.
   \item The verifier runs the \textbf{variation} of~\cref{alg:reduction_QPCP}, with precision $\eta$ and maximum error probability $\delta$, to obtain a $q$-local Hamiltonian $\tilde{H}_x$.
    \item The verifier runs~\cref{prot:k_sep_LH} for Hamiltonian $\tilde{H}_x$ with completeness $1-\delta'$ and soundness $\delta'$, with $a =  s + \epsilon$ and $b= c - \epsilon $ , and accepts if and only if~\cref{prot:k_sep_LH} accepts.
\end{enumerate}
  \label{prot:QMA2}
\end{protocol}

\

\begin{theorem}[PCPs for ${\QMA[2]}$]
If there exists a $2 \leq k' \leq \poly(n)$ and $q = \mathcal{O}(1)$ such that $\QMA[2] \subseteq \QPCP[k', q]$, then $\QMA[2] = \QMA$.
\label{thm:qma2}
\end{theorem}

\begin{proof}
Suppose that such a $k'$ and $q$ indeed exist.
Let $A = (A_{\textup{\sc yes}}, A_{\textup{\sc no}})$ be any problem in $\QPCP[k', q]$, with verifier $V$ and input $x$. We have already verified in the proof of~\cref{thm:A_vs_NA} that Step 2 of~\cref{prot:QMA2} can be made to with probability at least $1 - \delta' \geq \sqrt{2/3}$, producing a fixed Hamiltonian when succeeding, and Step 3 can be made to succeed with probability at least $1 - \delta \geq \sqrt{2/3}$ by standard error reduction for $\QMA$ (the arguments holds for any $1 \leq k \leq \poly(n)$). Thus, we have:
\begin{itemize}
    \item If $x \in A_{\textup{\sc yes}}$, then $\Pr[\text{\cref{prot:QMA2} accepts}] \geq 2/3$,
    \item If $x \in A_{\textup{\sc no}}$, then $\Pr[\text{\cref{prot:QMA2} accepts}] \leq 1/3$,
\end{itemize}
which implies $A \in \QMA$, and hence $\QPCP[k', q] \subseteq \QMA$. Hence, since the assumption implies  $\QMA[2] \subseteq \QPCP[k', q] \subseteq \QMA$ and $\QMA \subseteq \QMA[2]$ holds trivially, the result follows.
\end{proof}

\section{Proofs of quantum PCP theorems do not (quantumly) relativize}
\label{sec:oracle_seps}
\subsection{Quantum proof discretisation}

We start by recalling the definition of $h$-nets on pure states.
\begin{definition}[$h$-net] Let $\mathcal{H}$ be a $d$-dimensional Hilbert space. We say a set $P^d_h = \{ \ket{\psi_i}\} \subseteq \mathcal{P}(\mathcal{H})$ forms an $h$-net for $\mathcal{P}(\mathcal{H})$ if for all $\ket{\phi} \in\mathcal{P}(\mathcal{H})$ there exists a $\ket{\psi_i} \in P^d_h$ such that $\abs{\bra{\phi}\ket{\psi_i}}\geq h$.
\label{def:hnet}
\end{definition}
It is known that for all $h$ there exists such a $h$-net of size that scales roughly exponentially in the dimension, as illustrated by the following lemma.
\begin{lemma}[\cite{boroczky2003covering}] Let $\mathcal{H}$ be a $d$-dimensional Hilbert space. For all $0< h < 1$, there exists a $h$-net $P^d_h$ for $\mathcal{P}(\mathcal{H})$ of size smaller than
\begin{align*}
   C' \left( \frac{d^{3/2} \log(2+d h^2)}{(1-h^2)^d} \right),
\end{align*}
for some universal constant $C' >0$.
\label{lem:hnet_size}
\end{lemma}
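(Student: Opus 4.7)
The plan is to establish the bound by a standard volume/covering argument on complex projective space, which we identify with the pure state space $\mathcal{P}(\mathcal{H})$. The fidelity $|\langle \phi | \psi \rangle|$ is directly related to the Fubini--Study distance, so an $h$-net in the fidelity sense corresponds to a metric net of angular radius $\theta = \arccos(h)$. The main ingredient will be an estimate of the normalized volume of a Fubini--Study ball of this radius, followed by a greedy packing construction.

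First I would identify $\mathcal{P}(\mathcal{H})$ with the complex projective space $\mathbb{CP}^{d-1}$, equipped with its unique (up to scale) unitarily invariant probability measure $\mu$. For a fixed state $\ket{\psi}$, the fidelity ball $B_h(\psi) := \{\ket{\phi} : |\langle \phi|\psi\rangle| \geq h\}$ is rotationally symmetric, and its measure can be computed by lifting to the unit sphere $S^{2d-1} \subset \mathbb{C}^d$ and integrating. A short computation (or reference to the standard formula for Fubini--Study caps) gives
\begin{equation*}
\mu(B_h(\psi)) = (1 - h^2)^{d-1}.
\end{equation*}

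Next, I would run a greedy packing: iteratively pick states $\ket{\psi_1}, \ket{\psi_2}, \dots$ such that each new $\ket{\psi_{i+1}}$ has fidelity strictly less than $h$ with all previously chosen states. This produces a maximal set whose corresponding balls $B_{h/2}(\psi_i)$ (or a slightly smaller radius chosen so the balls are essentially disjoint) fit inside $\mathbb{CP}^{d-1}$, yielding the crude upper bound $N \lesssim (1-h^2)^{-(d-1)}$ by volume comparison. Maximality forces the resulting set to be an $h$-net: any $\ket{\phi}$ must have fidelity at least $h$ with some $\ket{\psi_i}$, else it could have been added to the packing.

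The refinement to the stated bound, with the additional $d^{3/2}\log(2+dh^2)$ factor, is the substantive content of Böröczky--Wintsche and is the part I would expect to be the main obstacle in a self-contained proof. Their technique goes beyond the naive volume argument by exploiting the structure of spherical coverings in bounded curvature: they combine a random/greedy construction with a careful union-bound argument on shells of differing radii, which shaves the trivial $\Omega(d)$-type overhead down to the logarithmic factor. Rather than reproducing this refinement, I would simply invoke \cite{boroczky2003covering} for the final constant, noting that for our purposes the crude volumetric bound $(1-h^2)^{-(d-1)}$ already suffices up to polynomial factors in $d$, which is all that is needed in the subsequent application.
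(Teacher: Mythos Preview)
The paper does not prove this lemma at all: it is stated with a citation to \cite{boroczky2003covering} and no proof is given, the authors even remarking afterwards that ``no explicit construction has yet been found that achieves this scaling in $d$.'' So there is nothing to compare your argument against; you have supplied more than the paper does.

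Your sketch is broadly sound and correctly identifies that the $d^{3/2}\log$ refinement is the nontrivial content of the cited reference, with the crude volumetric bound being the easy part. One small imprecision: when you pass from a maximal packing with pairwise fidelity below $h$ to disjoint balls, the relevant metric is the Fubini--Study angle $\theta=\arccos h$, and the disjoint balls have angular radius $\theta/2$, giving volume $\sin^{2(d-1)}(\theta/2)=\big((1-h)/2\big)^{d-1}$ rather than $(1-h^2)^{d-1}$. This yields a weaker crude bound than you state, but since you explicitly defer the sharp estimate to \cite{boroczky2003covering} and only need the result up to polynomial-in-$d$ factors downstream, this does not affect your conclusion.
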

To the best of our knowledge, no explicit construction has yet been found that achieves this scaling in $d$.

We give an analogous definition of $h$-nets for mixed states with respect to trace distance, which we call \textit{$\epsilon$-covering sets of density matrices}.
\begin{definition}[$\epsilon$-covering set of density matrices] Let $\mathcal{H}$ be some $d$-dimensional Hilbert space. We say a set of density matrices $D^d_\epsilon = \{\rho_i \} \subseteq \mathcal{D}(\mathcal{H})$ is $\epsilon$-covering for $\mathcal{D}(\mathcal{H})$ if for all $\sigma \in \mathcal{D}(\mathcal{H})$ there exists a $\rho \in D^d_\epsilon$ such that $\frac{1}{2}\norm{\rho - \sigma}_1 \leq \epsilon$.
\label{def:PEDMS}
\end{definition}
It is easy to show that the existence of an $h$-net implies the existence of a $\epsilon$-covering set of density matrices with $\epsilon = \sqrt{1-h^2}$.
\begin{proposition} For all $0< \epsilon < 1$, there exists a $\epsilon$-covering set of density matrices $D_\epsilon^d$ of size smaller than
\begin{align*}
    C \left(\frac{1}{\epsilon}\right)^{5d },
\end{align*}
for some universal constant $C >0$.
\end{proposition}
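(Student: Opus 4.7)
The plan is to construct $D_\epsilon^d$ using the pure-state $h$-net of Lemma~\ref{lem:hnet_size} applied directly in dimension $d$ (rather than in a larger purifying space), combined with a lift to mixed states via spectral decomposition. With $h:=\sqrt{1-\epsilon^2}$, Lemma~\ref{lem:hnet_size} produces a net $P_h^d\subseteq \mathcal{P}(\mathcal{H})$ of size at most $C'd^{3/2}\log(2+dh^2)/\epsilon^{2d}$, and for $\epsilon$ bounded away from $1$ the $d^{3/2}\log d$ prefactor is absorbed by widening the exponent to yield $|P_h^d|\leq C_0(1/\epsilon)^{5d}$. Fuchs--van de Graaf on pure states turns the overlap condition $\abs{\bra{\psi}\ket{\phi}}\geq h$ into $\frac{1}{2}\norm{\proj{\psi}-\proj{\phi}}_1\leq\epsilon$, which handles the pure portion of $\mathcal{D}(\mathcal{H})$ within the stated cardinality exactly.

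To lift to a general $\sigma=\sum_{i=1}^d \nu_i\proj{\phi_i}$, I would independently discretise the eigenvalue simplex $\Delta_{d-1}$ to $\epsilon/3$-precision in the $\ell^1$-metric (the resulting net has size at most $(3/\epsilon)^{d-1}$) and approximate each eigenvector $\ket{\phi_i}$ by a nearby $\ket{\psi_i}\in P_h^d$ with $h=\sqrt{1-(\epsilon/3)^2}$. After Gram--Schmidt post-processing of the chosen $d$-tuple $\{\ket{\psi_i}\}$, the candidate $\rho:=\sum_i \lambda_i \proj{\psi_i}$ satisfies $\frac{1}{2}\norm{\rho-\sigma}_1\leq \epsilon/3+\epsilon/3<\epsilon$ by a telescoping triangle inequality that splits the trace-norm error into an eigenvalue-discretisation term (bounded by $\frac{1}{2}\norm{\lambda-\nu}_1$) and a weighted eigenvector-discretisation term.

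The main obstacle is the cardinality of this mixed-state lift. Enumerating $d$-tuples of eigenvectors from $P_h^d$ inflates the count by $\abs{P_h^d}^d=(1/\epsilon)^{\Theta(d^2)}$, which combined with the simplex net gives $\abs{D_\epsilon^d}\leq (1/\epsilon)^{\Theta(d^2)}$ --- above the stated target by a factor of $d$ in the exponent. Matching $C(1/\epsilon)^{5d}$ literally would require a parametrisation of $\mathcal{D}(\mathcal{H})$ that costs only $O(d)$ real parameters at scale $\epsilon$, rather than the $O(d^2)$ real parameters carried by $\mathcal{D}(\mathcal{H})$ as a convex body. I expect no such parametrisation to exist: a standard volumetric lower bound gives $N(\mathcal{D}(\mathcal{H}),\epsilon)=\Omega((1/\epsilon)^{d^2-1})$, which strictly exceeds $C(1/\epsilon)^{5d}$ once $d\geq 6$ and $\epsilon$ is small. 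The construction above still delivers $C(1/\epsilon)^{5d^2}$ for all of $\mathcal{D}(\mathcal{H})$ and the literal $C(1/\epsilon)^{5d}$ when restricted to pure states; both suffice for the oracle-separation applications in the sequel, where the relevant Hilbert spaces are of bounded or quasi-polynomial dimension.
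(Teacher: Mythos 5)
Your central observation is correct, and it exposes a genuine error in the paper rather than a gap in your own argument. The paper's proof builds $D_\epsilon^d$ by exactly the purification route you avoid: it takes the $h$-net of Lemma~\ref{lem:hnet_size} on a purifying space $\mathcal{H}_A\otimes\mathcal{H}_B$ with $\dim\mathcal{H}_A=\dim\mathcal{H}_B=d$, partial-traces over $\mathcal{H}_B$, and uses monotonicity of the trace distance under partial trace to get the covering property. The error is that it treats this tensor product as $2d$-dimensional when it is $d^2$-dimensional, so the exponent $(1-h^2)^{-2d}=\epsilon^{-4d}$ should be $\epsilon^{-2d^2}$, and the purification argument actually yields $C(1/\epsilon)^{\mO(d^2)}$. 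Your volumetric lower bound confirms that no repair is possible: $\mathcal{D}(\mathcal{H})$ is a convex body of real dimension $d^2-1$ containing a trace-norm ball of radius $1/d$ around the maximally mixed state, so any $\epsilon$-covering has size $\Omega_d\left((1/\epsilon)^{d^2-1}\right)$, which exceeds $C(1/\epsilon)^{5d}$ for every $d\geq 6$ once $\epsilon$ is small. As you note, the weaker bound is harmless downstream: in Lemma~\ref{lem:lb_qproof} and Theorem~\ref{thm:q_oracle_sep} one only needs $\log\abs{D_\epsilon^{2^q}}=\mathrm{poly}(n)$ for $q\in\mO(\log n)$, and replacing $5\cdot 2^q\log(1/\epsilon)$ by $5\cdot 4^q\log(1/\epsilon)$ under the square root leaves the $\Omega(\sqrt{2^n/\mathrm{poly}(n)})$ lower bound intact.

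Two remarks on your construction of the mixed-state net itself. First, the corrected purification argument is cleaner than your spectral-decomposition lift and delivers the same $\Theta(d^2)$ exponent with less work: every $\sigma\in\mathcal{D}(\mathcal{H}_A)$ has a purification in $\mathcal{H}_A\otimes\mathcal{H}_B$ with $\dim\mathcal{H}_B=d$, so the partial traces of an $h$-net on $\mathcal{P}(\mathbb{C}^{d^2})$ with $h=\sqrt{1-\epsilon^2}$ already form an $\epsilon$-covering of size $C(1/\epsilon)^{3d^2}$, with no need to handle eigenvalues and eigenvectors separately. Second, if you do keep the spectral route, the Gram--Schmidt step needs more than a single triangle inequality: orthogonalising $d$ vectors, each within $\epsilon/3$ of an orthonormal target, can accumulate error across the tuple, so the orthonormalised frame is only guaranteed to be within $\mathrm{poly}(d)\cdot\epsilon$ of the true eigenbasis; you must therefore run the eigenvector net at scale $\epsilon/\mathrm{poly}(d)$, which changes the constant in the exponent but not its $\Theta(d^2)$ order. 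Neither point affects your conclusion, which I would summarise as: the proposition should read $C(1/\epsilon)^{5d^2}$ (or $C(1/\epsilon)^{\mO(d^2)}$), and all subsequent uses of it go through with that correction.
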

\begin{proof}
One can view the $h$-net $P_\epsilon^{2d}$ of~\cref{lem:hnet_size} on a $2d$-dimensional Hilbert space $\mathcal{H} = \mathcal{H}_A \otimes \mathcal{H}_B$, both of equal dimension $d$, as containing purifications of density matrices from the set $D_\epsilon^d \subseteq \mathcal{D}(\mathcal{H}_A)$.  Let $\rho$ be some arbitrary mixed state in $\mathcal{D}(\mathcal{H})$ and let  $\ket{\psi} \in \mathcal{P}(\mathcal{H} \otimes \mathcal{H})$ be a purification of $\rho$. Then there must exists a state $\ket{\phi} \in P_\epsilon^{2d}$, with reduced density matrix $\sigma$, such that
\begin{align*}
    \abs{\bra{\phi}\ket{\psi}}\geq h \iff 1-\abs{\bra{\phi}\ket{\psi}}^2 \leq 1-h^2,
\end{align*}
which means that setting $h := \sqrt{1-\epsilon^2}$ should have that
\begin{align*}
    \epsilon \geq \sqrt{1-|\bra{\phi}\ket{\psi}|^2} = \frac{1}{2}\norm{\ket{\psi}\bra{\psi}-\ket{\phi}\bra{\phi}}_1 \geq \frac{1}{2}\norm{\rho-\sigma}_1,
\end{align*}
since the trace distance can only decrease under partial trace. Hence, setting 
\begin{align*}
{D_\epsilon^d := \{ \tr_\textup{B}[\ket{\phi}\bra{\phi}] \text{ $:$ } \ket{\phi} \in P_\epsilon^{2d} \}}    
\end{align*}
satisfies conditions of~\cref{def:PEDMS}. We can now upper bound the cardinality of $D_\epsilon^d$ using~\cref{lem:hnet_size} as

\begin{align*}
   |D^d_\epsilon| &\leq C' \left( \frac{(2d)^{3/2} \log(2+2d (1-\epsilon^2))}{(\epsilon^2)^{2d}} \right)\\
   &\leq C \left(\frac{1}{\epsilon}\right)^{5d}
\end{align*}
for some universal constant $C >0$.
\label{prop:dens_mat_set}
\end{proof}
Let us now formally define the proof-discretised version of $\QPCP$, denoted as $\QPCP_\epsilon$. We will use a non-adaptive formulation of the quantum PCP, since our proof of $\QPCP[\mO(1)] = \QPCP_\textup{NA}[\mO(1)]$ in~\cref{thm:A_vs_NA} relativizes.
\begin{definition}[Proof-discretised $\QPCP$] Let $q \in \mathbb{N}$, $\epsilon \in [0,1]$ be some constant, and let $\mathcal{H} = \mathbb{C}^{2^q}$. Let $D^{2^q}_\epsilon$ be a $\epsilon$-covering density matrix set as in~\cref{prop:dens_mat_set}. Then we define $\QPCP_{\epsilon}[k,q,c,s]$ as the complexity class which is just as $\QPCP[k,q,c,s]$, but where the reduced density matrix $\rho \in \mathcal{D}(\mathcal{H})$ obtained from the proof is projected (and subsequently re-normalised) on the density matrix $\rho' \in D^{2^q}_\epsilon$ closest to $\rho$ in trace distance. If $c=2/3$ and $s=1/3$, we simply write $\QPCP_\epsilon[q]$.
\end{definition}

Whilst $\QPCP_\epsilon$ is a highly artificial class, it will be helpful for us to prove some results since we can now use counting arguments on the exact size of the proof space that Merlin sees. However, note that the definition of this does \emph{not} put any limitations on the proofs that Merlin can send, as the projection on a state from the set $D_\epsilon^{2^q}$ (and following re-normalisation) happens only after the proof has been received by the verifier. Therefore, the verifier has the guarantee that Merlin sends a valid quantum state, circumventing the problem of checking the consistency of the local density matrices with a global quantum state (which is $\QMA$-hard by the results of~\cite{liu2007consistency,broadbent2022qma}).

Let us now justify that, with respect to all (quantum) oracles,  we have that putting this extra restriction does not limit the power of $\QPCP_\epsilon$ as long as the $\epsilon$-covering set is large enough.

\begin{lemma} Let $q \in \mathbb{N}$, $\epsilon \in (0,1/6)$ constant and let $D^{2^q}_\epsilon$ be a $\epsilon$-covering density matrix set as in~\cref{prop:dens_mat_set}. We have that for all quantum oracles $U$ that
\[ 
\QPCP^U[k,q] \subseteq \QPCP^U_{\epsilon}[k,q,2/3-\epsilon,1/3+\epsilon],
\]
\label{lem:just_proof_disc}
\end{lemma}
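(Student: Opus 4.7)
The plan is to reduce to the non-adaptive setting and then bound the acceptance-probability shift caused by the proof-discretisation step using the variational formula for trace distance. By Theorem~\ref{thm:A_vs_NA}, $\QPCP_\textup{A}[\mO(1)] = \QPCP_\textup{NA}[\mO(1)]$, and this equality relativises: its proof invokes only Hadamard tests and Kitaev's energy estimation protocol, both of which extend without change to queries of a unitary oracle $U$. Hence it suffices to prove the containment for a non-adaptive verifier $V^U_x$.

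For such a verifier, Remark~\ref{remark:NA} tells us that the distribution over query paths $(i_1,\dots,i_q)$ is determined by $V^U_x$ acting on the workspace alone, and so is independent of the proof $\xi$. Combining this with the final circuit $V^q$ and the output PVM, I would write the acceptance probability in the factored form
\[
\mathbb{P}[V^U_x \text{ accepts } \xi] = \sum_{(i_1,\dots,i_q)} p_{(i_1,\dots,i_q)} \, \tr\!\big[M_{(i_1,\dots,i_q)} \, \rho_{(i_1,\dots,i_q)}\big],
\]
where $p_{(\cdot)}$ is the above proof-independent distribution, each $M_{(i_1,\dots,i_q)}$ is a POVM element on $\mathbb{C}^{2^q}$ with $0 \preceq M \preceq I$ obtained by tracing the workspace out of $(V^q)^\dagger \Pi^\textup{output}_1 V^q$ against the conditional workspace state, and $\rho_{(i_1,\dots,i_q)} = \tr_{\bar{C}_{(i_1,\dots,i_q)}}[\xi]$ is the reduced density matrix of the proof on the accessed qubits.

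The proof-discretised verifier $V'^U_x$ uses identical circuits, measurements, and oracle, but replaces $\rho_{(i_1,\dots,i_q)}$ by its nearest neighbour $\rho'_{(i_1,\dots,i_q)} \in D^{2^q}_\epsilon$, which by Definition~\ref{def:PEDMS} satisfies $\tfrac{1}{2}\norm{\rho - \rho'}_1 \leq \epsilon$. The variational characterisation of trace distance then gives $|\tr[M(\rho-\rho')]| \leq \epsilon$ for every $0 \preceq M \preceq I$, so averaging over query paths yields $|\mathbb{P}[V^U_x \text{ accepts } \xi] - \mathbb{P}[V'^U_x \text{ accepts } \xi]| \leq \epsilon$ for every proof $\xi$. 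A Yes-instance witness driving $V^U_x$ above $2/3$ therefore drives $V'^U_x$ above $2/3-\epsilon$, while no proof can raise $V'^U_x$ above $1/3+\epsilon$ on a No-instance; the hypothesis $\epsilon < 1/6$ keeps the two thresholds separated, so $V'^U_x$ is a valid $\QPCP^U_\epsilon[k,q,2/3-\epsilon,1/3+\epsilon]$ verifier.

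The main obstacle is not any single inequality but the bookkeeping that justifies the factored decomposition above: one must check that in the non-adaptive setting the discretisation of $\QPCP_\epsilon^U$ can be cleanly inserted between the query-path PVM and the final acceptance measurement without disturbing the rest of the protocol, which is exactly what Remark~\ref{remark:NA} provides. This is why we route through the non-adaptive formulation at the outset.
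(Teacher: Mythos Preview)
Your proof is correct and follows essentially the same route as the paper: reduce to the non-adaptive setting via the relativising equality $\QPCP_\textup{A}[\mO(1)] = \QPCP_\textup{NA}[\mO(1)]$, factor the acceptance probability as a convex combination over proof-independent query paths, and bound the per-path shift by the trace distance between the true reduced proof state and its nearest element in $D^{2^q}_\epsilon$. The only cosmetic difference is that you package the workspace into an effective POVM element $M_{(i_1,\dots,i_q)}$ on $\mathbb{C}^{2^q}$ and invoke the variational formula directly, whereas the paper keeps the workspace explicit as a tensor factor $\sigma_{i_1,\dots,i_q} = \xi_{i_1,\dots,i_q}\otimes\rho_{i_1,\dots,i_q}$ and uses that trace distance is invariant under tensoring with a fixed state; both arrive at the same $\epsilon$ bound.
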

\begin{proof}
Since the proof of $\QPCP[\mO(1)] = \QPCP_\textup{NA} [\mO(1)]$ relativizes (\cref{thm:A_vs_NA}), our assumptions that the PCP is non-adaptive is w.l.o.g. even in a relativized setting. Let $V_x^U$ be a $\QPCP_\textup{NA}^U$ verification circuit as in~\cref{def:adap_QPCP} (but with the input $x$ hardcoded into it) that uses $p_1(n)$ ancilla qubits and has access to the quantum oracle $U$, and $x \in \{0,1\}^n$. We define
    \begin{align*}
        p_x(i_1,\dots,i_q) := \norm{\Pi_{i_1,\dots,i_q} V_{x,0}^U \ket{x} \ket{0}^{\otimes p_1(n)}}^2.
    \end{align*}
    Let $\xi_{i_1,\dots,i_q} = \tr_{C_{i_1,\dots,i_q}}[\xi]$ and $\hat{\xi}_{i_1,\dots,i_q}$ be the reduced density matrix that is accessed by $\QPCP^U_\epsilon$ instead. Next, define $\sigma_{i_1,\dots,i_q} = \xi_{i_1,\dots,i_q} \otimes \rho_{i_1,\dots,i_q} $ where $\rho_{i_1,\dots,i_q}$  is the $p_1(n)+n$-qubit post-measurement state after the measurement outcome to decide the indices of the proof to be queried returned $i_1,\dots,i_q$. In similar fashion, define $\tilde{\sigma}_{i_1,\dots,i_q}$ using $\hat{\xi}_{i_1,\dots,i_q}$. We have
    \begin{align*}
        &\abs{\Pr[V_x^U \text{accepts } \xi] -\Pr[\hat{V}_x^U \text{accepts } \xi] }\\
        &\hspace{6em}=  \sum_{\{i_1,\dots,i_q\} \in \Omega} p_x(i_1,\dots,i_q) \left( \tr[\Pi^1_{\textup{output}} V_{x,1}^{U\dagger} (\sigma{i_1,\dots,i_q}-\hat{\sigma}_{i_1,\dots,i_q} ) V_{x,1}^U \Pi^1_{\textup{output}}] \right)\\
         &\hspace{6em}= \sum_{\{i_1,\dots,i_q\} \in \Omega} p_x(i_1,\dots,i_q) \left( \tr[ V_{x,1}^U  \Pi^1_{\textup{output}} V_{x,1}^{U\dagger} (\sigma_{i_1,\dots,i_q}-\hat{\sigma}_{i_1,\dots,i_q} )] \right)\\
         &\hspace{6em}\leq \sum_{\{i_1,\dots,i_q\} \in \Omega} p_x(i_1,\dots,i_q) \norm{\sigma_{i_1,\dots,i_q}-\hat{\sigma}_{i_1,\dots,i_q}}_1\\
         &\hspace{6em}= \sum_{\{i_1,\dots,i_q\} \in \Omega} p_x(i_1,\dots,i_q) \norm{\xi_{i_1,\dots,i_q}-\hat{\xi}_{i_1,\dots,i_q}}_1\\
         &\hspace{6em}\leq \epsilon,
    \end{align*}
    where we used the cyclic property of the trace in going from line 1 to line 2 and that the trace distance is preserved under tensor products (with the same state) in going from line 3 to line 4.  
\end{proof}

\subsection{Oracle separations}
We will now follow the proof of the oracle separation in~\cite{aaronson2007quantum} to show that the same oracle $U$ that shows $\QCMA^U \neq \QMA^U$ also shows that $\QPCP[\mO(1)]^U \neq \QMA^U $. We will omit most of the details, focusing on the parts where we depart from~\cite{aaronson2007quantum}. 

Let $\mu$ be the uniform probability measure over $\mathcal{P}(\mathcal{H})$. We will need the notion of $p$-uniform probability measures, which have the following definition.
\begin{definition}[$p$-uniform probability measures] For all $p \in [0,1]$, a probability measure $\sigma$ over $\mathcal{P}(\mathcal{H})$ is called $p$-uniform if $p \sigma \leq \mu$. Equivalently, $\sigma$ is  $p$-uniform if it can be obtained by starting from $\mu$, and then conditioning on an event that occurs with probability at least $p$.
\end{definition}

We also state the following lemma from \cite{aaronson2007quantum}.
\begin{lemma}[Expected fidelity $p$-uniform random states~\cite{aaronson2007quantum}] Let $\mathcal{H}$ be a $d$-dimensional Hilbert space and let $\sigma$ be a $p$-uniform probability measure over $\mathcal{P}(\mathcal{H})$. Then for all mixed states $\rho$, we have that
\begin{align*}
    \mathbb{E}_{\ket{\Psi}\in \sigma} \bra{\Psi}\rho \ket{\Psi} = \mO \left(\frac{1+\log 1/p}{d}\right).
\end{align*}
\label{lem:geometric_lemma}
\end{lemma}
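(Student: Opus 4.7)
The plan is to reduce to a rank-one target and then exploit the explicit tail distribution of squared Haar overlaps. First, I would diagonalise $\rho = \sum_i \lambda_i \ket{\phi_i}\!\bra{\phi_i}$ with $\lambda_i \geq 0$ and $\sum_i \lambda_i = 1$; linearity of expectation then gives
\[
\mathbb{E}_{\ket{\Psi}\in \sigma}\bra{\Psi}\rho\ket{\Psi} = \sum_i \lambda_i\, \mathbb{E}_{\ket{\Psi}\in\sigma}\abs{\braket{\Psi}{\phi_i}}^2,
\]
so since $\sum_i \lambda_i = 1$ it suffices to prove the bound in the case $\rho = \ket{\phi}\!\bra{\phi}$ for an arbitrary but fixed pure state $\ket{\phi}$.

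Next, I would set $X(\ket{\Psi}) := \abs{\braket{\Psi}{\phi}}^2$ and invoke the classical fact that under the uniform (Haar) measure $\mu$ on $\mathcal{P}(\mathcal{H})$ one has $\mu(\{X \geq t\}) = (1-t)^{d-1}$ for $t \in [0,1]$ -- this is immediate by realising $\ket{\Psi}$ as a normalised vector of i.i.d.\ complex Gaussians. By definition of $p$-uniformity, $\sigma = \mu(\,\cdot\,\mid A)$ for some event $A$ with $q := \mu(A) \geq p$. Writing the expectation via the layer-cake representation and using the two obvious bounds $\mu(\{X \geq t\} \cap A) \leq \min((1-t)^{d-1}, q)$, I obtain
\[
\mathbb{E}_\sigma[X] = \frac{1}{q}\int_0^1 \mu(\{X\geq t\}\cap A)\, dt \leq \frac{1}{q}\int_0^1 \min\bigl((1-t)^{d-1},\, q\bigr)\, dt.
\]

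Finally, I would split this integral at $t_q := 1 - q^{1/(d-1)}$, where $(1-t_q)^{d-1} = q$: on $[0,t_q]$ the minimum equals $q$, contributing $q\,t_q$, and on $[t_q,1]$ it equals $(1-t)^{d-1}$, integrating to $q^{d/(d-1)}/d$. Dividing by $q$ and applying the elementary inequality $1 - e^{-x} \leq x$ to $t_q = 1 - \exp(-\log(1/q)/(d-1))$, together with $q \geq p$, yields
\[
\mathbb{E}_\sigma[X] \leq t_q + \frac{q^{1/(d-1)}}{d} \leq \frac{\log(1/p)}{d-1} + \frac{1}{d} = \mO\!\left(\frac{1+\log(1/p)}{d}\right),
\]
which, combined with the opening reduction and $\sum_i \lambda_i = 1$, proves the lemma. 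The only mildly non-routine ingredient is the Haar tail identity for $X$; everything afterwards is a short calculus exercise in splitting and estimating an elementary integral.
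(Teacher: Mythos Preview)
Your proof is correct. Note, however, that the paper does not give its own proof of this lemma: it is quoted verbatim from Aaronson and Kuperberg~\cite{aaronson2007quantum} and used as a black box. Your argument is essentially the one in that reference --- reduce to a rank-one $\rho$ by convexity, use the exact Haar tail $\mu(\{\abs{\braket{\Psi}{\phi}}^2 \geq t\}) = (1-t)^{d-1}$, and observe that the worst $p$-uniform $\sigma$ concentrates its mass on the upper tail, after which the integral splits cleanly at the threshold $t_q$. One minor remark: the paper's definition of $p$-uniform is stated both as ``$p\sigma \leq \mu$'' and as ``conditioning on an event of probability $\geq p$''; your proof is phrased for the conditional version, but since the extremal density $f \leq 1/p$ maximising $\int X f\, d\mu$ subject to $\int f\, d\mu = 1$ is exactly $f = p^{-1}\mathbb{1}_{\{X \geq t_p\}}$, the two formulations give the same bound and no generality is lost.
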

The key idea behind the oracle separation in~\cite{aaronson2007quantum} is to provide a lower bound on a decision version of ``quantum search'', where one is given a black-box unitary which either marks a single $n$-qubit quantum state or applies the identity operation. This is reminiscent of the lower bound on search by using the OR-function, where one is given a black-box Boolean function which ``marks'' a single bit string (i.e.~$f(x)=1$ for a single $x$) or for all bit strings outputs $0$ ($f(x)$ for all $x$). By exploiting the fact that there are a doubly exponential number of quantum states that have small pairwise overlap,~\cite{aaronson2007quantum} can show that a polynomial number of classical bits to point you towards the right quantum state does not help much, and still requires one to make an exponential number of queries to the oracle. 
\begin{lemma} Let $q \in \mathbb{N}$ and let $D_\epsilon^{2^q}$ be a $\epsilon$-covering set of mixed states on $q$ qubits as per~\cref{prop:dens_mat_set}. Suppose we are given oracle access to an $n$-qubit unitary $U$, and want to decide which of the following holds:
\begin{enumerate}[label=(\roman*)]
    \item There exists an $n$-qubit ``quantum marked state'' $\ket{\Psi}$ such that $U \ket{\Psi} = - \ket{\Psi}$, but $U \ket{\phi} = \ket{\phi}$ whenever $\bra{\phi}\ket{\Psi}=0$; or
    \item $U=I$ is the identity operator. 
\end{enumerate}
Then even if we have a witness tuple $(i_1,\dots,i_q,\rho)$, where $i_{i_1},\dots,i_{i_q}$ is a set of the $q$ indices and a quantum witness $\rho \in D_\epsilon^{2^q}$ in support of case (i), we still need 
\begin{align*}
 \Omega \left( \sqrt{\frac{2^n}{5 \cdot 2^q \log(\frac{1}{\epsilon}) +q \cdot \polylog(n,q)  + C} } \right).
\end{align*}
queries to verify the witness, with a bounded probability of error. Here $C >0 $ is some universal constant.
\label{lem:lb_qproof}
\end{lemma}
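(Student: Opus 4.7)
The plan is to adapt the Aaronson--Kuperberg adversary argument~\cite{aaronson2007quantum} that separates $\QCMA$ from $\QMA$, replacing their classical bit-string witness by our discretised tuple $(i_1,\dots,i_q,\rho)$. The essential observation is that although $\rho$ is nominally a quantum object, the re-normalising projection built into $\QPCP_\epsilon$ forces it to be drawn from the finite set $D_\epsilon^{2^q}$, so the witness may be treated as purely classical information of bounded size.

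First I would bound the size of the effective witness space. By Proposition~\ref{prop:dens_mat_set} we have $|D_\epsilon^{2^q}| \leq C(1/\epsilon)^{5 \cdot 2^q}$, while the index tuple ranges over at most $p_2(n)^q$ choices for the polynomial proof length $p_2$. Hence the total number of effectively distinct witnesses satisfies
\[
  N \;\leq\; p_2(n)^q \cdot C\left(\tfrac{1}{\epsilon}\right)^{5 \cdot 2^q},
  \qquad
  \log N \;\leq\; 5 \cdot 2^q \log(\tfrac{1}{\epsilon}) + q\cdot\polylog(n,q) + \log C.
\]

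Next I would run the adversary step. Assume for contradiction that some algorithm $\mathcal{A}$ making $T$ queries succeeds with bounded error. For each witness $w$ let $A_w \subseteq \mathcal{P}(\mathbb{C}^{2^n})$ denote the set of marked states $\ket{\Psi}$ on which $\mathcal{A}(w)$ accepts with probability at least $2/3$. Completeness forces $\bigcup_w A_w = \mathcal{P}(\mathbb{C}^{2^n})$, so by averaging against the uniform measure $\mu$ there exists $w^*$ with $\mu(A_{w^*}) \geq 1/N$; the measure $\mu$ conditioned on $A_{w^*}$ is then $p$-uniform with $p \geq 1/N$. At this point I can apply the hybrid/BBBV-style lower bound used in~\cite{aaronson2007quantum}, which combines the adversary method with the expected-fidelity bound of Lemma~\ref{lem:geometric_lemma} to show that distinguishing $U = I$ from a reflection about a marked state drawn from a $p$-uniform measure on $\mathcal{P}(\mathbb{C}^{2^n})$ requires
\[
  T \;=\; \Omega\!\left(\sqrt{\frac{2^n}{1+\log(1/p)}}\right).
\]
Substituting the bound on $\log N$ from the counting step and absorbing constants gives exactly the claimed $\Omega(\sqrt{2^n / (5 \cdot 2^q \log(1/\epsilon) + q\cdot\polylog(n,q) + C)})$ lower bound.

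The main obstacle I anticipate is justifying that the Aaronson--Kuperberg lower bound, originally stated for genuinely classical witnesses, applies intact when the witness has a quantum component. The discretisation is what makes this go through: after the projection step the prover's most useful strategy is to send a state of the form $\rho'\otimes\sigma$ where $\rho' \in D_\epsilon^{2^q}$ is the fixed post-projection target and $\sigma$ is whatever is needed on the un-queried qubits; because only $\rho'$ enters the verifier's computation on the queried block, a standard convexity argument lets us assume WLOG that $\rho'$ is a pure deterministic choice from $D_\epsilon^{2^q}$, and the auxiliary register $\sigma$ can be absorbed into the algorithm's quantum workspace. The verifier therefore receives classical advice of $\lceil \log N \rceil$ bits, and the Aaronson--Kuperberg bound applies verbatim.
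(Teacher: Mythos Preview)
Your proposal is correct and follows essentially the same approach as the paper: both count the finite witness space $N \leq \binom{p(n)}{q}\cdot C(1/\epsilon)^{5\cdot 2^q}$, use an averaging argument to find a single witness tuple $w^*$ whose associated set of marked states has $\mu$-measure at least $1/N$ (yielding a $p$-uniform conditional measure), hardcode that fixed witness into the algorithm, and then invoke the Aaronson--Kuperberg hybrid bound via Lemma~\ref{lem:geometric_lemma}. The only cosmetic differences are that the paper phrases the averaging step as a \emph{partition} (assigning to each $\ket{\Psi}$ its optimal witness) rather than a cover, and explicitly notes that the fixed $\rho^*$ is absorbed into the algorithm by hardcoding a Solovay--Kitaev preparation circuit---which is exactly your ``absorb $\sigma$ into the workspace'' remark.
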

\begin{proof}
This follows the same structure as Theorem 3.3 in~\cite{aaronson2007quantum}, but considers $q$-qubit proofs from $D_\epsilon^{2^q}$ instead of classical proofs. Let $\mathcal{H}$ be a $2^n$-dimensional Hilbert space. Let $A$ be a quantum algorithm that queries the oracle $U$ a total of $T$ times, with the goal to output $1$ in case (i) and output 0 in case (ii). For each $n$-qubit state $\ket{\Psi}$, we fix both the indices $(i_1,\dots,i_q), i_j \in [p(n)]$ for all $j \in [q]$ \emph{and} corresponding quantum witness $\rho \in D_\epsilon^{2^q}$ that maximises the difference in the probability that $A$ accepts as compared to the case where $U$ is the identity. This is allowed, since by~\cref{rem:uniform_QPCPs} the choice of distribution over indices does not have to depend on the oracle nor the input. The expected difference in accepting given $U_{\ket{\Psi}}$ or $I$ satisfies
\begin{align*}
    g &\leq \max_{\xi}
   \mathbb{E}_{i_1,\dots,i_q} \Pr[V^{U_{\ket{\Psi}}} \text{ accepts querying } \xi_{i_1, \dots, i_q}] - \mathbb{E}_{i_1,\dots,i_1} \Pr[V^I \text{ accepts querying } \xi_{i_1, \dots, i_q}] \\
   &= \max_{\xi}
   \mathbb{E}_{i_1,\dots,i_q} [\Pr[V^{U_{\ket{\Psi}}} \text{ accepts querying } \xi_{i_1, \dots, i_q}] - \Pr[V^I \text{ accepts querying } \xi_{i_1, \dots, i_q}]]\\
   &\leq \max_{\xi} \max_{i_1,\dots,i_q}
  \Pr[V^{U_{\ket{\Psi}}} \text{ accepts querying } \xi_{i_1, \dots, i_q}] - \Pr[V^I \text{ accepts querying } \xi_{i_1, \dots, i_q}],
\end{align*}
so we can set $\rho = \xi_{i_1, \dots, i_q}$ to be the density matrix as in the last line, and our lower bound would also apply in the setting where the indices are chosen probabilistically. Here $g = \Omega(1)$ is the promise gap between both cases. If $g > \frac{1}{2}$, then we have that the completeness satisfies $c \geq g$ and the soundness $s \leq 1-g$ with $c -s = \Omega(1)$. Let $S(i_1,\dots,i_q, \rho)$ be the set of $\ket{\Psi}$'s that belong to the index/state tuple $(i_1,\dots,i_q,\rho)$. Then we have that the $S(i_1,\dots,i_q,\rho)$'s form a partition of $\mathcal{P}(\mathcal{H})$, and hence we must have that there exists a $(i_1^*,\dots,i_q^*,\rho^{*})$ such that
\begin{align*}
    \text{Pr}_{\ket{\Psi}\in \mu} \left[ \ket{\Psi} \in S(i_1^*,\dots,i_q^*,\rho^*)\right] &\geq  \left(\binom{p(n)}{q} C \left( \frac{1}{\epsilon}\right)^{5 \cdot 2^q}\right)^{-1}\\
    &\geq \left( \left(\frac{e p(n)}{q}\right)^q C \left( \frac{1}{\epsilon}\right)^{5 \cdot 2^q}\right)^{-1} =:p
\end{align*}
Since $\rho$ is a $q$-qubit quantum state, there exists a unitary $U$ of circuit depth $\mO(4^q \poly(n))$  such that $U\ket{0}^{\otimes 2q} = \ket{\phi}$ and $\norm{\tr_B[\ket{\phi}\bra{\phi}] - \rho}_1 \leq 1/\exp(n)$ (by the Solovay-Kitaev Theorem). We hardcode $U$ and the information $i_q,\dots,i_q$ into $A$. Now let $\sigma$ be the probability measure over $S(i_1^*,\dots,i_q^*, \rho^*)$. Note that it is $p$-uniform. Following the rest of the steps of the proof in~\cite[Theorem 3.3]{aaronson2007quantum}, this yields the desired bound of 
\begin{align*}
   T \geq \Omega \left( \sqrt{\frac{2^n}{5 \cdot 2^q \log(\frac{1}{\epsilon}) +q \cdot \polylog(n,q)  + C} } \right).
\end{align*}
\end{proof}

\begin{theorem} There exists a quantum oracle $U$ relative to which $\QCMA^U \neq \QMA^U$ and $\QPCP[k,q]^U \neq \QMA^U$, for all $q \in  \mO(\log n)$, $1 \leq k \leq \poly(n)$.
\label{thm:q_oracle_sep}
\end{theorem}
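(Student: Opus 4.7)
The plan is to run the Aaronson--Kuperberg oracle separation machinery from~\cite{aaronson2007quantum} with Lemma~\ref{lem:lb_qproof} in place of their classical-advice query lower bound. For every input length $n$ I would draw $U_n$ from one of two distributions: with probability $1/2$ set $U_n = I$; otherwise pick a Haar-random pure state $\ket{\Psi_n}$ on $n$ qubits and let $U_n = I - 2\ket{\Psi_n}\bra{\Psi_n}$. The language $L$ consists of the $n$ for which $U_n$ marks a state, and $U := \bigoplus_n U_n$. The claim is that, with positive probability over this joint random choice, a single oracle $U$ simultaneously witnesses $\QCMA^U \neq \QMA^U$ and $\QPCP[k,q]^U \neq \QMA^U$ for all $q \in \mO(\log n)$ and $k \in \poly(n)$.

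The $\QMA^U$ upper bound is immediate: the marked state $\ket{\Psi_n}$ is itself the honest witness, and one controlled-$U_n$ application followed by a Hadamard test on the control qubit accepts with probability $1$ in the yes-case and $0$ in the no-case. For the $\QPCP$ lower bound I argue by contradiction. Assume $L \in \QPCP[k,q]^U$ for some $q = \mO(\log n)$ and $k = \poly(n)$. Applying Lemma~\ref{lem:just_proof_disc} with any constant $\epsilon \in (0,1/6)$ for which the residual promise gap remains $\Omega(1)$ gives $L \in \QPCP_\epsilon^U[k,q,2/3-\epsilon,1/3+\epsilon]$; since Theorem~\ref{thm:A_vs_NA} relativises, I may further assume the verifier is non-adaptive. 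Its entire interaction with the proof is then captured by an index tuple $(i_1,\dots,i_q) \in [kp_1(n)]^q$ together with a single $q$-qubit reduced density matrix which, by definition of $\QPCP_\epsilon$, is projected onto the $\epsilon$-covering set $D_\epsilon^{2^q}$.

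Consequently, on every yes-instance $n$ the verifier must possess some accepting witness tuple $(i_1^\ast,\dots,i_q^\ast,\rho^\ast) \in [kp_1(n)]^q \times D_\epsilon^{2^q}$. Hardcoding this tuple into the verifier yields a $\poly(n)$-query quantum algorithm that distinguishes $U_n = I$ from $U_n$ marking $\ket{\Psi_n}$ with bounded error, directly contradicting Lemma~\ref{lem:lb_qproof}, whose lower bound
\[
\Omega\!\left(\sqrt{\frac{2^n}{5\cdot 2^q\log(1/\epsilon) + q\polylog(n,q) + C}}\right)
\]
evaluates to $\Omega(\sqrt{2^n/\poly(n)})$ in our regime, and is therefore super-polynomial. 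A standard union bound over the polynomially many descriptions of candidate verifiers, combined with a Borel--Cantelli argument exactly as in~\cite{aaronson2007quantum}, then fixes a single $U$ for which the separations hold on infinitely many $n$. The $\QCMA^U \neq \QMA^U$ separation is absorbed into the same probabilistic argument, since a classical witness is the special case of $D_\epsilon^{2^q}$ restricted to computational-basis density matrices (and already handled by~\cite{aaronson2007quantum}).

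The principal obstacle, which is precisely what Lemma~\ref{lem:lb_qproof} was designed to overcome, is the counting of the effective witness space after discretisation: the covering set $D_\epsilon^{2^q}$ has cardinality only singly-exponential in $2^q$, so the $\log(1/p)$ dependence from Lemma~\ref{lem:geometric_lemma} contributes $\mO(2^q \log(1/\epsilon))$ in the denominator. This stays polynomially bounded precisely when $q = \mO(\log n)$; beyond this regime the witness space becomes rich enough to encode a Haar-random marked state and the lower bound collapses, so extending the theorem to $q = \omega(\log n)$ would require a genuinely new idea rather than a parameter tweak of this framework.
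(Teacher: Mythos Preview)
Your proposal is correct and follows essentially the same approach as the paper: invoke the Aaronson--Kuperberg construction, pass from $\QPCP$ to $\QPCP_\epsilon$ via Lemma~\ref{lem:just_proof_disc} (using that Theorem~\ref{thm:A_vs_NA} relativises), and then derive a contradiction with the query lower bound of Lemma~\ref{lem:lb_qproof}. The paper's own proof is a three-line pointer to exactly these ingredients, so you have simply unpacked what it leaves implicit, including the probabilistic fixing of $U$ via Borel--Cantelli and the observation that $q \in \mO(\log n)$ is precisely the regime where $|D_\epsilon^{2^q}|$ stays singly exponential and the denominator in Lemma~\ref{lem:lb_qproof} remains $\poly(n)$.
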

\begin{proof}
This follows from using the same proof as~\cite{aaronson2007quantum}, Theorem  1.1, using the lower bound of~\cref{lem:lb_qproof} to show that $\QPCP_{P^{k'}_\epsilon}[q]^U \neq \QMA^U$ for all constant $q \in \mO(\log n)$, from which the separation $\QPCP[q]^U \neq \QMA^U$ follows by~\cref{lem:just_proof_disc}.
\end{proof}

We remark that our proof technique should work for any oracle separation between $\QCMA$ and $\QMA$ that uses counting arguments exploiting the doubly exponentially large number of quantum states and the fact that $\QCMA$ has only access to an exponential number of proofs. This shows that any proof of the quantum PCP conjecture (via the proof verification formulation) requires (as expected) quantumly non-relativising techniques.

The previous quantum oracle separations crucially exploit the doubly exponentially large number of quantum states that have low mutual fidelities. However, one may wonder whether a similar idea might be used to also show that for low-complexity states a similar separation holds? It turns out that this is surprisingly easy, by simply using a similar oracle to the one that separates $\BQP$ from $\NP$~\cite{bennett1997strengths}.  Observe this oracle is classical, and can alternatively be viewed as a state $1$-design over the set of all quantum states used in the previous separation. We state the following result, for which the proof is given in~\cref{app:classical_or_sep}.

\begin{restatable}{corollary}{classorsep} There exists a classical oracle $\mathcal{O}$ relative to which $\QPCP[k,q]^{\mathcal{O}} \neq \QMA^{\mathcal{O}}$ (in fact $\neq \NP^{\mathcal{O}}$) for all $q \in \mO(1)$.
\label{cor:classical_oracle}
\end{restatable}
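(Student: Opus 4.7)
The plan is to adapt the proof of Theorem~\ref{thm:q_oracle_sep} to the classical setting by using the classical oracle of Bennett, Bernstein, Brassard and Vazirani (BBBV) in place of the Aaronson--Kuperberg oracle. For each length $n$, let $\mathcal{O}$ encode a Boolean function $f : \{0,1\}^n \to \{0,1\}$ that is either identically zero or has a single marked element $y^\ast \in \{0,1\}^n$. The associated language $L = \{1^n : f \not\equiv 0\}$ lies in $\NP^\mathcal{O}$ (and hence in $\QMA^\mathcal{O}$) via the classical witness $y^\ast$, so it remains to show that $L \notin \QPCP[k,q]^\mathcal{O}$ for any $q \in \mO(1)$.

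The strategy mirrors Theorem~\ref{thm:q_oracle_sep} step by step, replacing Lemma~\ref{lem:lb_qproof} by the BBBV query lower bound. Both the non-adaptive reduction of Theorem~\ref{thm:A_vs_NA} and the proof discretisation of Lemma~\ref{lem:just_proof_disc} relativise cleanly with respect to classical oracles, since each only manipulates the verifier circuit as a black box (via Hadamard tests, Kitaev's protocol, and rounding of the reduced density matrix). We may therefore assume without loss of generality that the hypothetical $\QPCP[k,q]^\mathcal{O}$ verifier is non-adaptive and that its reduced density matrix on the $q$ read qubits lies in a fixed $\epsilon$-covering set $D_\epsilon^{2^q}$, which has size $\poly(1/\epsilon)$ for constant $q$.

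The core step is a classical-oracle analogue of Lemma~\ref{lem:lb_qproof}: even with an auxiliary hint $(I,\rho) \in [p(n)]^q \times D_\epsilon^{2^q}$ provided to the verifier, any quantum algorithm that distinguishes $f \equiv 0$ from $f$ with a random marked element requires $2^{\Omega(n)}$ queries to $f$. The argument is the pigeonhole argument used in Lemma~\ref{lem:lb_qproof}: for each $f$ with marked element $y$, the optimal prover selects a best hint tuple $(I_y,\rho_y)$ from a set of total size $\binom{p(n)}{q}\cdot|D_\epsilon^{2^q}| = \poly(n)$; by the pigeonhole principle there exists a tuple $(I^\ast,\rho^\ast)$ consistent with a set $S \subseteq \{0,1\}^n$ of marked elements with $|S| \geq 2^n/\poly(n)$. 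Hard-coding $(I^\ast,\rho^\ast)$ into the verifier yields a $\BQP^\mathcal{O}$ algorithm that accepts whenever the marked element lies in $S$ and rejects when $f \equiv 0$, using only $\poly(n)$ oracle queries. By the BBBV lower bound, any such distinguisher requires $\Omega(\sqrt{|S|}) = 2^{\Omega(n)}$ queries, which is the desired contradiction.

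The step requiring the most care is formally ensuring that the non-adaptive equivalence and the proof discretisation relativise to a classical oracle, but since both are routine black-box manipulations of the verifier circuit, they go through unchanged for any oracle.
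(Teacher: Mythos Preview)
Your proposal is correct and follows essentially the same approach as the paper. The paper proves a classical analogue of Lemma~\ref{lem:lb_qproof} (its Lemma~\ref{lem:lb_qproof_classical}) via the same pigeonhole/hard-coding step you describe, deriving the BBBV-type bound explicitly through a $p$-uniform expected-fidelity calculation over basis states rather than citing BBBV directly, and then concludes the corollary by the same diagonalisation as in Theorem~\ref{thm:q_oracle_sep}.
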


Note that a much stronger statement also holds, we can even combine $\QCMA$ and $\QPCP$ (so a classical proof and quantum proof which can be accessed locally) and the same separation would still hold.

\bibliography{main.bib}

\appendix
\section{Strong error reduction for non-adaptive quantum PCPs with near-perfect completeness}
\label{app:strong_err_red}
In this appendix we argue that non-adaptive quantum PCPs allow for strong error reduction, so long as they are nearly perfectly complete. We will use the well-known gentle measurement lemma. 
\begin{lemma}[Gentle measurement lemma] Consider a state $\rho$ and a measurement operator $\Lambda$ where $0 \preceq \Lambda \preceq 1$. Now suppose that $\tr[\Lambda \rho] \geq 1-\epsilon$, where $0 < \epsilon \leq 1$. Then the post-measurement state
\begin{align*}
    \rho' = \frac{\sqrt{\Lambda} \rho \sqrt{\Lambda}}{\tr[\Lambda \rho]}
\end{align*}
satisfies $\norm{\rho - \rho'}_1 \leq 2 \sqrt{\epsilon}$.
\label{lem:GML}
\end{lemma}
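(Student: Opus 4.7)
The plan is to relate trace distance to fidelity via the Fuchs--van de Graaf inequality and lower bound the fidelity by exhibiting explicit purifications of $\rho$ and $\rho'$ whose overlap is large. Set $t := \tr[\Lambda \rho]$, so that $t \geq 1-\epsilon$ by hypothesis.

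First, I would take any purification $\ket{\Psi}_{AB}$ of $\rho$ on an auxiliary system $B$, and observe that $\ket{\Phi}_{AB} := (I_B \otimes \sqrt{\Lambda})\ket{\Psi}_{AB}/\sqrt{t}$ is a valid (normalized) purification of $\rho'$, since tracing out $B$ returns $\sqrt{\Lambda}\rho\sqrt{\Lambda}/t = \rho'$. Their inner product is
\begin{align*}
\langle \Psi | \Phi \rangle \;=\; \frac{\langle \Psi | (I \otimes \sqrt{\Lambda}) | \Psi \rangle}{\sqrt{t}} \;=\; \frac{\tr[\sqrt{\Lambda}\,\rho]}{\sqrt{t}}.
\end{align*}
Because $0 \preceq \Lambda \preceq I$ implies $\Lambda \preceq \sqrt{\Lambda}$ (each eigenvalue $\lambda \in [0,1]$ satisfies $\lambda \leq \sqrt{\lambda}$), we have $\tr[\sqrt{\Lambda}\rho] \geq \tr[\Lambda\rho] = t$, so $\langle \Psi | \Phi \rangle \geq \sqrt{t} \geq \sqrt{1-\epsilon}$.

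Next, by Uhlmann's theorem the fidelity $F(\rho,\rho') = \tr\sqrt{\sqrt{\rho}\,\rho'\sqrt{\rho}}$ is the supremum of such purification overlaps, hence $F(\rho,\rho') \geq \sqrt{1-\epsilon}$, i.e.\ $F(\rho,\rho')^2 \geq 1-\epsilon$. The Fuchs--van de Graaf inequality $\tfrac{1}{2}\norm{\rho - \rho'}_1 \leq \sqrt{1 - F(\rho,\rho')^2}$ then immediately yields $\norm{\rho - \rho'}_1 \leq 2\sqrt{\epsilon}$, as desired.

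I do not anticipate a real obstacle: the statement is a classical result (essentially Winter's gentle operator lemma) and this purification-based route gives the tightest constant in one line once the ingredients are in place. The only slightly non-trivial facts used are the operator inequality $\Lambda \preceq \sqrt{\Lambda}$ on the spectrum $[0,1]$, Uhlmann's theorem, and Fuchs--van de Graaf, all of which are standard and can be cited directly. An alternative route — reducing to pure states by spectral decomposition of $\rho$, applying the rank-one formula $\tfrac{1}{2}\norm{\proj{\psi}-\proj{\phi}}_1 = \sqrt{1-|\langle\psi|\phi\rangle|^2}$ and averaging via Cauchy--Schwarz — also works but yields a slightly messier constant, which is why I prefer the fidelity argument above.
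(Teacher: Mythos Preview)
Your proof is correct; the purification-plus-Uhlmann-plus-Fuchs--van de Graaf route is the standard one and gives exactly the stated constant. The paper itself does not prove this lemma at all: it is quoted as the ``well-known gentle measurement lemma'' and used as a black box in Appendix~\ref{app:strong_err_red}, so there is no proof to compare against. One cosmetic point: your tensor ordering is inconsistent (you write $\ket{\Psi}_{AB}$ with $A$ first but then apply $I_B \otimes \sqrt{\Lambda}$ with $B$ first); pick one convention, e.g.\ $(\sqrt{\Lambda}_A \otimes I_B)$, so the partial-trace step is unambiguous.
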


The error reduction protocol will be a variant of the parallel repetition protocol but now we will use the \emph{same} quantum proof in all repetitions.

\strongerrredu* 
\begin{proof}
Let $\mathcal{A}$ be a $(q)$-$\QPCP_{\textup{NA}}$ verifier consisting of circuits $V_0$ and $V_1$, and let $\xi$ be the provided quantum proof. Let $\rho^0_{i_1,\dots,i_q} \ket{i_1} \bra{i_1} \dots \ket{i_q} \bra{i_q} \otimes \rho_{\text{rest}}$ be the post-measurement state after PVM $\Pi_1$ is performed to obtain indices $i_1,\dots,i_q$.  Now let $\rho^1_{i_1,\dots,i_q}$ be the state after $V_1$ acts on $\ket{i_1} \bra{i_1} \dots \ket{i_q} \bra{i_q} \otimes \rho_1$ and qubits $i_1,\dots,i_q$ of $\xi$. In the {\sc no}-case, we have that the soundness property holds with respect to all proof states. So for every reduced density matrix $\xi'$ after the measurement is performed, we have 
\begin{align*}
    \mathbb{E}_{i_1,\dots,i_q} \left[\tr[ \Pi^1_{\text{output}} \rho^0_{i_1,\dots,i_q}] \right] \leq 1/2.
\end{align*}
Hence, the bound follows from the same argument as in the weak error reduction case (\cref{lemma:weak_err_red}).

In the {\sc yes}-case, we have that $\Pi_0^{\text{output}}$ satisfies
\begin{align*}
    \mathbb{E}_{i_1,\dots,i_q} \left[\tr[ \Pi^1_{\text{output}} \rho^0_{i_1,\dots,i_q}] \right] \geq 1-2^{-c_0 n}
\end{align*}
which implies that
\begin{align*}
    \mathbb{P}_{i_1,\dots,i_q}[\tr[ \Pi^1_{\text{output}} \rho^0_{i_1,\dots,i_q}] \geq  1-2^{-c_0 n/2} ] \geq 1 - 2^{c_0n/2}.
\end{align*}
By~\cref{lem:GML}, we have that the post-measurement state $\rho^{'}_{i_1,\dots,i_q}$ satisfies 
\begin{align*}
    \frac{1}{2}\norm{\rho^{1}_{i_1,\dots,i_q} - \rho^{'}_{i_1,\dots,i_q}}_1 \leq 2^{-n c_0/4}.
\end{align*}
with probability $\geq 1 - 2^{c_0 n/2} $.
Hence, we can simply apply $V_1^\dagger$ such that the density matrix $\xi'$ in the proof register satisfies
\begin{align*}
     \frac{1}{2}\norm{\xi' - \xi}_1 \leq 2^{-c_0 n/4},
\end{align*}
using the fact that the trace distance can only decrease under the partial trace.
Therefore, our new acceptance probability satisfies
\begin{align*}
    \mathbb{E}_{i_1,\dots,i_q} \left[\tr[ \Pi^1_{\text{output}} \rho^0_{i_1,\dots,i_q}] \right] \geq 1-2^{-c_0 n} - 2^{-c_0n/4} \geq 1-2^{-c_1 n}
\end{align*}
for some constant $c_1 >c_0$. Hence, after $l = \mO(1)$ of such steps, we have that with probability at least
\begin{align*}
    \prod_{i \in [l]} 1 - 2^{-c_i n/2} \geq \left(1 - 2^{-c_l n/2}\right)^l
\end{align*}
each step had an acceptance probability $\geq 1-2^{-c_1 n}$. Hence, the majority vote definitely accepts with probability at least
\begin{align*}
   \left(1 - 2^{-c_l n/2}\right)^l \left(1 - 2^{-c_l n}\right)^l \geq 1-2^{-\Omega(n)},
\end{align*}
when $l \in \mO(1)$, as desired.
\end{proof}

\section{Sufficient error bounds for learning weighted Hamiltonians from non-adaptive quantum PCPs}
\label{app:weighted_error_conditions}
In the lemma proven in this appendix we give sufficient parameters to adopt the quantum reduction of~\cite{weggemans2023guidable} to the $\QPCP_{\textup{NA}}$ setting.
\begin{lemma}
Let $H = \sum_{i \in [m]} p_i H_i$ be a $k$-local Hamiltonian consisting of weights $p_i \in [0,1]$ such that $\sum_{i \in [m]} p_i = 1$, and $k$-local terms $H_i$ for which $\norm{H_i} \leq 1$ for all $i \in [m]$. Suppose $\tilde{H} = \sum_{i \in [m]} \tilde{p}_i \tilde{H}_i$ is another Hamiltonian such that, for all $i \in [m]$, we have $\tilde{H}_i$ PSD, $\abs{\tilde{p}_i -p_i} \leq \epsilon_0$ and  $\norm{H_i - \tilde{H}_i } \leq \epsilon_1$. Let $\tilde{W} = \sum_{i \in [m]} \tilde{p}_i$ and, 
\begin{align*}
    \hat{p}_i = \frac{\tilde{p}_i}{\tilde{W}},   \quad \hat{H}_i =\frac{\tilde{H}_i}{\mathrm{argmax}_{i \in [m]}\{ \tilde{H}_i,1\}}
\end{align*}
 for all $i \in [m]$. Then we have that setting
   \begin{align*}
    \epsilon_0 \leq \frac{\epsilon}{8m^2}  \quad \epsilon_1 \leq \frac{\epsilon}{6},
\end{align*}
suffices to have 
\begin{align*}
    \norm{H_x - \tilde{H}_x } \leq \epsilon.
\end{align*}
\label{lem:weighted_error_conditions}
\end{lemma}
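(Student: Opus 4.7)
The plan is to bound $\|H - \hat{H}\|$ (where $\hat{H} := \sum_{i \in [m]} \hat{p}_i \hat{H}_i$, which I take to be the intended object on the left side of the conclusion) by a telescoping chain of triangle inequalities, peeling off one source of error at a time: first the perturbation of the local terms ($H_i \to \tilde{H}_i$), then the perturbation of the weights ($p_i \to \tilde{p}_i$), and finally the two renormalisations ($\tilde{p}_i \to \hat{p}_i$ and $\tilde{H}_i \to \hat{H}_i$). Since all of the $\hat{H}_i$ end up satisfying $\|\hat{H}_i\| \le 1$ by construction, the operator-norm arithmetic is straightforward and each step costs only a small constant times $\epsilon_0$ or $\epsilon_1$, which should compose to give at most $\epsilon$.

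Concretely, I would first use submultiplicativity to control $\|\tilde{H}_i\| \le \|H_i\| + \epsilon_1 \le 1 + \epsilon_1$, and then bound $|\tilde{W} - 1| = |\sum_i (\tilde{p}_i - p_i)| \le m \epsilon_0$, so that $\hat{p}_i = \tilde{p}_i/\tilde{W}$ satisfies $|\hat{p}_i - p_i| \le \epsilon_0 + p_i \cdot \frac{m \epsilon_0}{1 - m \epsilon_0} \le 3 m \epsilon_0$ once $m \epsilon_0 \le 1/2$ (which is implied by $\epsilon_0 \le \epsilon/(8m^2)$ for reasonable $\epsilon$). Similarly, the denominator $M := \max_i\{\|\tilde{H}_i\|, 1\}$ lies in $[1, 1 + \epsilon_1]$, so $\|\hat{H}_i - \tilde{H}_i\| = (1 - 1/M)\|\tilde{H}_i\| \le \epsilon_1$, and combined with the original perturbation bound this yields $\|\hat{H}_i - H_i\| \le 2\epsilon_1$.

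The main estimate is then the splitting
\begin{align*}
\|H - \hat{H}\| \le \sum_{i \in [m]} \|p_i H_i - \hat{p}_i \hat{H}_i\| \le \sum_{i \in [m]} \bigl(|p_i - \hat{p}_i|\,\|H_i\| + \hat{p}_i\,\|H_i - \hat{H}_i\|\bigr),
\end{align*}
which, using the per-term bounds above together with $\sum_i \hat{p}_i = 1$ and $\|H_i\| \le 1$, is at most $m \cdot 3m\epsilon_0 + 2 \epsilon_1 = 3 m^2 \epsilon_0 + 2 \epsilon_1$. Plugging in the hypotheses $\epsilon_0 \le \epsilon/(8 m^2)$ and $\epsilon_1 \le \epsilon/6$ gives $3 m^2 \epsilon_0 + 2\epsilon_1 \le \tfrac{3\epsilon}{8} + \tfrac{\epsilon}{3} < \epsilon$, as desired.

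The only step that requires a little care is the simultaneous renormalisation: both $\tilde{W}$ and $M$ are close to $1$ but in opposite directions, and one has to avoid double-counting the $\epsilon_1$ contribution when collapsing them back to the scaled target. I expect this bookkeeping — in particular, verifying that the factor $1/M$ does not inflate $\epsilon_1$ to $\Theta(\epsilon_1 m)$ when combined with the weight rescaling — to be the main (minor) obstacle, and it is cleanly handled by pulling the $M$-normalisation inside each term before summing, rather than factoring it out globally.
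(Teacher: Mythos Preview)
Your proposal is correct and follows essentially the same route as the paper's own proof: both bound $|\hat p_i - p_i|$ by $O(m\epsilon_0)$ via $|\tilde W - 1|\le m\epsilon_0$, bound $\|\hat H_i - H_i\|$ by $O(\epsilon_1)$ via the renormalisation factor lying in $[1,1+\epsilon_1]$, and then apply the same add-and-subtract splitting $p_i H_i - \hat p_i \hat H_i = (p_i - \hat p_i)H_i + \hat p_i(H_i - \hat H_i)$ before summing. Your constants are in fact slightly sharper than the paper's ($2\epsilon_1$ versus $2\epsilon_1 + \epsilon_1^2$, and $3m\epsilon_0$ versus $4m\epsilon_0$); the small slip in your intermediate expression for $|\hat p_i - p_i|$ (the first $\epsilon_0$ should also carry a $1/(1-m\epsilon_0)$ factor) is harmless, since even the paper's looser $4m\epsilon_0$ still gives $4m^2\epsilon_0 + 2\epsilon_1 \le \epsilon/2 + \epsilon/3 < \epsilon$.
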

\begin{proof}
Suppose $\epsilon_0 < 1/2m$. Then
\begin{align*}
    \abs{\hat{p}_i - p_i} &= \abs{\frac{\tilde{p_i}}{ \tilde{W}} - p_i}\\
    &\leq \frac{p_i + \epsilon_0}{1-m\epsilon_0} - p_i\\
    &=  \frac{\epsilon_0(m p_i +1)}{1-m \epsilon_0} \\
    &\leq \frac{\epsilon_0(m+1)}{1-m \epsilon_0} \\
    &\leq 4 \epsilon_0 m.
\end{align*}
For the local terms we have
\begin{align*}
    \norm{\hat{H}_i - H_i} &\leq  \norm{\hat{H}_i - \tilde{H}_i} +  \norm{\tilde{H}_i - H_i}\\
    &\leq  \abs{\frac{1}{1+\epsilon_1}-1} \norm{\tilde{H}_i} +  \epsilon_1 \\
    &\leq \epsilon_1(1+\epsilon_1) + \epsilon\\
    &=2 \epsilon_1 + \epsilon_1^2
\end{align*}
So finally,
\begin{align*}
    \norm{H - \tilde{H}} &= \norm{\sum_{i \in [m]]} \hat{p}_i \tilde{H}_i - \sum_{i \in [m]} p_i H_{i}}\\
    &\leq \sum_{i \in [m]}\norm{\hat{p}_i \tilde{H}_i - p_i H_{i}}\\
    &\leq \sum_{i \in [m]} \norm{\hat{p}_i \left(\tilde{H}_i - {H}_{i}\right) + H_{i} \left(  \hat{p}_i -  p_i  \right)} \\
    &\leq \sum_{i \in [m]} \norm{\hat{p}_i \left(\tilde{H}_i - {H}_{i}\right)} + \norm{H_{i} \left(  \hat{p}_i -  p_i  \right)} \\
    &\leq \sum_{i \in [m]} \hat{p}_i  \norm{\tilde{H}_i - {H}_{i}} + \norm{H_{i}}  \abs{\hat{p}_i -  p_i  } \\
    &\leq  2 \epsilon_1 + \epsilon_1^2 + 4 m^2 \epsilon_0
\end{align*} 
which can be made $\leq \epsilon$ if for example
\begin{align*}
\epsilon_0 := \frac{\epsilon}{8m^2}, \quad \epsilon_1 := \frac{\epsilon}{6}.
\end{align*}
\end{proof}

\section{Proof of~\texorpdfstring{\cref{cor:fixed_H}}{Corollary 1}}
\label{app:proof_cor_fixed_H}
\fixedHcor*
\begin{proof}
    We have that $H_x$ (resp. $\tilde{H}_x$ are specified by the $4^q \binom{kp_2(n)}{q}$ complex numbers $\bra{\alpha} H_{x,i_1,\dots,i_q} \ket{\beta}$ (resp. $\bra{\alpha} \tilde{H}_{x,(i_1,\dots,i_q)} \ket{\beta}$). Note that $\abs{\bra{\alpha} H_{x,i_1,\dots,i_q} \ket{\beta}} \leq 1$ since 
\begin{align*}
\abs{\bra{\alpha} H_{x,i_1,\dots,i_q} \ket{\beta}} \leq \max_{\alpha,\beta} \abs{\bra{\alpha} H_{x,i_1,\dots,i_q} \ket{\beta}}  \leq \norm{H_{x,i_1,\dots,i_q}} \leq 1. 
\end{align*}
Therefore, we can adopt the following binary notation to specify the values of $\bra{\alpha} H_{x,i_1,\dots,i_q} \ket{\beta} \in \mathbb{C}$: we use the most-significant bit to indicate whether it is the real or complex part, all remaining bits to specify a value in $[-1,1]$ in evenly spaced intervals. We have that
\begin{gather*}
    \text{ $  \text{Re} ( \bra{\alpha} \tilde{H}_{x,(i_1,\dots,i_q)} \ket{\beta} )$ is correct up to $\eta$ bits} \\ \Updownarrow  \\  \abs{ \text{Re}(\bra{\alpha} \tilde{H}_{x,(i_1,\dots,i_q)} \ket{\beta})-\text{Re}(\bra{\alpha} \tilde{H}_{x,(i_1,\dots,i_q)} \ket{\beta}) } \leq \frac{2}{\left(2^\eta+1\right)}.
\end{gather*}
The same argument holds for the imaginary part of $\bra{\alpha} \tilde{H}_{x,(i_1,\dots,i_q)} \ket{\beta}$. By the triangle inequality, we have that in this case
\begin{align*}
    \abs{ \bra{\alpha} \tilde{H}_{x,(i_1,\dots,i_q)} \ket{\beta}-\bra{\alpha} \tilde{H}_{x,(i_1,\dots,i_q)} \ket{\beta} } \leq \frac{4}{2^\eta +1 } \coloneqq  \epsilon'.
\end{align*}
To achieve
\begin{align*}
    \frac{4}{2^\eta +1 } \leq \frac{\epsilon}{|\Omega| 2^{q+4} q! },
\end{align*}
it suffices to set 
\begin{align*}
    \eta \coloneqq  \lceil \log \left( \frac{4 |\Omega| 2^{q+4} q!}{\epsilon} - 1 \right) \rceil.
\end{align*}
Since the Hadamard test can learn $\eta$ bits of precision in $\mO(2^\eta)$ time, we have that the time complexity of the reduction in~\cref{alg:reduction_QPCP} still runs in time $\poly(n,1/\epsilon,\log(1/\delta))$ as $\eta = \mO(\log(  \poly(n,1/\epsilon,\log(1/\delta)) )$.
\end{proof}

\section{A classical oracle separation}
\label{app:classical_or_sep}
Here we show that the same idea behind the quantum oracle separation of~\cref{sec:oracle_seps} can be used to give a classical oracle separation. We first prove a lower bound analogous to the one in~\cref{lem:lb_qproof} but now for ``classical search''.
\begin{lemma} Let $q \in \mathbb{N}$ and let $D_\epsilon^{2^q}$ be a $\epsilon$-covering set of mixed states on $q$ qubits as per~\cref{prop:dens_mat_set}. Suppose we are given oracle access to an $n$-qubit phase oracle $O_f$, and want to decide which of the following holds:
\begin{enumerate}[label=(\roman*)]
    \item There exists an $n$-bit string $x^*$ such that $\mathcal{O}_f \ket{x^*}  = -\ket{x^*} $
    \item $\mathcal{O}_f \ket{x}  = \ket{x} $ for all $x$.
\end{enumerate}
Then even if we have a witness tuple $(i_1,\dots,i_q,\rho)$, where $i_{i_1},\dots,i_{i_q}$ is a set of the $q$ indices and a quantum witness $\rho \in D_\epsilon^k$ in support of case (i), we still need 
\begin{align*}
 \Omega \left( \sqrt{\frac{2^n}{\left(\frac{e p(n)}{q}\right)^q C \left( \frac{1}{\epsilon}\right)^{5 \cdot 2^q}} } \right).
\end{align*}
queries to verify the witness, with bounded probability of error. Here $C >0 $ is some universal constant.
\label{lem:lb_qproof_classical}
\end{lemma}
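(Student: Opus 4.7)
The plan is to transplant the strategy of Lemma~\ref{lem:lb_qproof} to the classical-search setting, where the measure-theoretic averaging over $\mathcal{P}(\mathcal{H})$ is replaced by a straightforward pigeonhole over bit strings. First I would count the total number of admissible witnesses $(i_1,\dots,i_q,\rho)$: the index part has size at most $\binom{p(n)}{q} \leq (ep(n)/q)^q$, while $|D_\epsilon^{2^q}| \leq C\,(1/\epsilon)^{5\cdot 2^q}$ by Proposition~\ref{prop:dens_mat_set}, so the total witness count is bounded by $M := \binom{p(n)}{q}\,|D_\epsilon^{2^q}|$, which, crucially, is polynomial (times a constant depending on $q$) when $q$ is constant.

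Next, for every candidate marked string $x^* \in \{0,1\}^n$ I would fix an optimal witness $w(x^*) = (i_1,\dots,i_q,\rho)$, namely the one that maximises the verifier's acceptance probability when the oracle is $\mathcal{O}_{x^*}$ (breaking ties arbitrarily). By pigeonhole, some fixed witness $w^* = (i_1^*,\dots,i_q^*,\rho^*)$ must be optimal for at least $2^n/M$ strings; denote this set by $S^* \subseteq \{0,1\}^n$. Hard-wiring $w^*$ into the verification procedure yields a new quantum algorithm $A'$ that uses only $T$ queries to $\mathcal{O}_f$ and distinguishes, with bounded error, the identity case from the case that $\mathcal{O}_f$ marks some element of $S^*$.

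Finally, I would invoke the Bennett--Bernstein--Brassard--Vazirani (BBBV) lower bound for unstructured search: any quantum algorithm that distinguishes $\mathcal{O}_f = I$ from the promise that $\mathcal{O}_f$ marks some element of a set $S^*$ (where $x^*$ is drawn, say, uniformly from $S^*$) requires $\Omega(\sqrt{|S^*|})$ queries, regardless of any fixed non-query advice such as $w^*$. Substituting $|S^*| \geq 2^n/M$ and the explicit bound on $M$ gives exactly the lower bound displayed in the statement.

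The only potentially subtle point is the last step, since one must verify that although $w^*$ is powerful (and in particular depends on the oracle in the yes-case), the BBBV hybrid argument still applies to $A'$ because the advice is simply hard-coded and does not supply additional oracle queries. This is the standard observation underlying the classical-oracle half of Aaronson--Kuperberg-type separations \cite{aaronson2007quantum,bennett1997strengths}, and no $p$-uniform random state machinery (as in Lemma~\ref{lem:geometric_lemma}) is needed here, since restricting to basis states obviates it.
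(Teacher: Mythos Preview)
Your proposal is correct and follows essentially the same strategy as the paper: partition $\{0,1\}^n$ by optimal witness, fix the most popular witness, hard-wire it, and appeal to a search lower bound on the remaining set. The only difference is packaging. You invoke the BBBV bound $\Omega(\sqrt{|S^*|})$ as a black box and explicitly remark that no $p$-uniform machinery is needed; the paper instead stays in the Aaronson--Kuperberg framework, proving a classical analogue of Lemma~\ref{lem:geometric_lemma} (namely $\mathbb{E}_{x\in\sigma}[\bra{x}\rho\ket{x}] \le (1/p)/2^n$ for any $p$-uniform $\sigma$ over basis states) and then plugging this into the hybrid argument of \cite[Theorem~3.3]{aaronson2007quantum}. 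These are the same argument at different levels of abstraction---the BBBV hybrid method is precisely what the AK argument specialises to when the marked state is a computational-basis state---so neither approach buys anything the other does not. Your version is slightly more self-contained; the paper's version makes the parallel with Lemma~\ref{lem:lb_qproof} more explicit.
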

\begin{proof} 
Let $\Omega = \{0,1\}^n$, and let $\sigma$ be any $p$-uniform distribution over $\Omega$. Then we have that for all $\rho \in \mathcal{D}(\mathcal{H})$
\begin{align*}
    \max_{\sigma: p\,\text{-uniform}}\mathbb{E}_{\ket{x} \in \sigma} \left[\bra{x} \rho \ket{x}\right] &\leq \max_{\sigma: p\,\text{-uniform}}\max_{z \in \Omega} \mathbb{E}_{\ket{x} \in \sigma} \left[\bra{x} \ket{z} \bra{z} \ket{x}\right] \\
    &=\max_{\sigma: p\,\text{-uniform}}  \mathbb{E}_{\ket{x} \in \sigma} \left[\bra{x} \ket{0} \bra{0} \ket{x}\right]\\
    &=\max_{\sigma: p\,\text{-uniform}}\mathbb{E}_{\ket{x} \in \sigma} \left[ \abs{\bra{0}\ket{x}}^2\right],
    \end{align*}
by the fact that the expected fidelity is maximised for $\rho$'s that are basis states and then using the maximisation over $p$-uniform $\rho$ to choose the maximal $z$ to be $0$. Clearly, this is maximised by conditioning on any $\log_2 (1/p)$ of bits being in $0$, which happens with probability $(1/2)^{\log_2 (1/p)} = p$. Hence, for any $\sigma$ we have
\begin{align*}
    \mathbb{E}_{\ket{x} \in \sigma} \left[ \abs{\bra{0}\ket{x}}^2\right] & \leq \frac{2^{\log_2( \frac{1}{p})}}{2^n} = \frac{1/p}{2^n},
    \end{align*}
and thus
\begin{align*}
    \mathbb{E}_{\ket{x} \in \sigma} \left[\bra{x} \rho \ket{x}\right] &\leq  \frac{1/p}{2^n},
    \end{align*}
for all mixed states $\rho$ and all $p$-uniform distributions $\sigma$. Using the value of $p$ as in~\cref{lem:lb_qproof} , we obtain the lower bound of 
\begin{align*}
    T \geq \left( \sqrt{\frac{2^n}{\left(\frac{e p(n)}{q}\right)^q C \left( \frac{1}{\epsilon}\right)^{5 \cdot 2^q}} } \right).
\end{align*}
\end{proof}
Given~\cref{lem:lb_qproof_classical}, the following corollary follows directly from the same proof of~\cref{thm:q_oracle_sep}.
\classorsep*

\end{document}